\newtheorem{theorem}{Theorem}
\newtheorem{lemma}[theorem]{Lemma}
\newtheorem{definition}[theorem]{Definition}
\newtheorem{corollary}[theorem]{Corollary}
\def\blfootnote{\gdef\@thefnmark{}\@footnotetext}
\renewcommand{\eqref}[1]{Eq.~(\ref{#1})}
\newcommand{\figref}[1]{Fig.~(\ref{#1})}
\newcommand{\secref}[1]{Section~\ref{#1}}
\newcommand{\appref}[1]{Appendix~\ref{#1}}
\newcommand{\defref}[1]{Definition~\ref{#1}}
\newcommand{\lemref}[1]{Lemma~\ref{#1}}
\newcommand{\thmref}[1]{Theorem~\ref{#1}}
\def\H{ {\mathcal H} }
\def\M{ {\mathcal M} }
\def\D{ {\mathcal D} }
\def\V{ {\mathcal V} }
\def\betamax{\beta_{\rm{max}}}
\def\betamin{\beta_{\rm{min}}}
\def\Ac{ {\mathcal{A}} }
\def\dt{ {\mathrm{d}} }
\def\Ab{ {a_\beta} }
\def\Abmin{ {a_{\betamin}} }
\def\r{\boldsymbol{r}}
\def\x{\boldsymbol{x}}
\def\v{\boldsymbol{v}}
\def\tr{ \mbox{tr} }
\def\>{\rangle}
\def\<{\langle}
\def\diag{ \mathrm{diag}}
\renewcommand{\emph}{\textit}
\newcommand{\bra}[1]{\langle {#1} |}
\newcommand{\ket}[1]{| {#1} \rangle}
\newcommand{\abs}[1]{\left| {#1} \right|} 
\newcommand{\ketbra}[2]{\ensuremath{\left|#1\right\rangle\!\!\left\langle#2\right|}}
\newcommand{\iden}{\mathbb{I}}
\begin{document}


	\title{The geometry of passivity for quantum systems and a novel elementary derivation of the Gibbs state}

\author{Nikolaos Koukoulekidis$ ^*$}
	\affiliation{Department of Physics, Imperial College London, London SW7 2AZ, UK}	
	\orcid{0000-0003-2717-4505}

\author{Rhea Alexander$ ^*$}
	\affiliation{Department of Physics, Imperial College London, London SW7 2AZ, UK}	

\author{Thomas Hebdige}
	\affiliation{Department of Physics, Imperial College London, London SW7 2AZ, UK}
	\affiliation{Department of Mathematics, University of York, Heslington, York, YO10 5DD, UK}

\author{David Jennings}
	\affiliation{School of Physics and Astronomy, University of Leeds, Leeds, LS2 9JT, UK}
	\affiliation{Department of Physics, University of Oxford, Oxford, OX1 3PU, UK}
	\affiliation{Department of Physics, Imperial College London, London SW7 2AZ, UK}

	\begin{abstract}
Passivity is a fundamental concept that constitutes a necessary condition for any quantum system to attain thermodynamic equilibrium, and for a notion of temperature to emerge. While extensive work has been done that exploits this, the transition from passivity at a single-shot level to the completely passive Gibbs state is technically clear but lacks a good over-arching intuition. Here, we reformulate passivity for quantum systems in purely geometric terms. This description makes the emergence of the Gibbs state from passive states entirely transparent. Beyond clarifying existing results, it also provides novel analysis for non-equilibrium quantum systems. We show that, to every passive state, one can associate a simple convex shape in a $2$-dimensional plane, and that the area of this shape measures the degree to which the system deviates from the manifold of equilibrium states. This provides a novel geometric measure of athermality with relations to both ergotropy and $\beta$--athermality.\blfootnote{\textbf{$ ^*$\hspace{1pt}NK and RA contributed equally to this paper. The ordering has been decided by random coin flip, generated on the IBM Q machine in conjunction with the von Neumann coin trick to reduce bias.}}
	\end{abstract}
	
	\maketitle


\section{Introduction}
The Gibbs state is a cornerstone of equilibrium statistical mechanics, and provides a rigorous notion of temperature for any quantum system. There are a range of justifications for this state, and each reveals different facets of thermodynamics. For example, it can be obtained from a micro-canonical ensemble derivation~\cite{penrose1969}, where one posits an equal a priori probability distribution over energies for some large system and then proceeds to analyse the resultant marginal state on a much smaller subsystem. It can also be described from an inferential perspective, where one adopts a maximally unbiased description of the quantum system subject to known constraints -- namely a maximum entropy construction~\cite{maxent1, maxent2}. One highly active area of research is to determine the conditions under which a dynamically-evolving system tends towards an equilibrium state, and moreover one in which a notion of temperature emerges~\cite{equilibration, Cazalilla_2010, polkovnikov2011}. However, there is another route to the Gibbs state that is appealing in its physical simplicity, namely through the concept of passivity of quantum states~\cite{pusz1978, Lenard1978, Skrzypczyk2015_virtualtemps}. This arose in the study of infinite-dimensional quantum systems, where one must adopt an algebraic description in order to rigorously describe features such as phase transitions~\cite{brattelirobinson2003}.

Loosely speaking, passivity captures the inability of a quantum state to be used to `raise a weight'. The concept of passivity does not tell us when quantum systems dynamically equilibrate, instead it provides a simple  \emph{characterisation} of Gibbs states. This simplicity also makes it well-suited to exploring structural aspects of thermodynamics in quantum systems beyond equilibrium, and for elucidating the conceptual ingredients required for a notion of equilibrium to be established.
\begin{figure}[t]
\centering
    \includegraphics[width=0.4\textwidth]{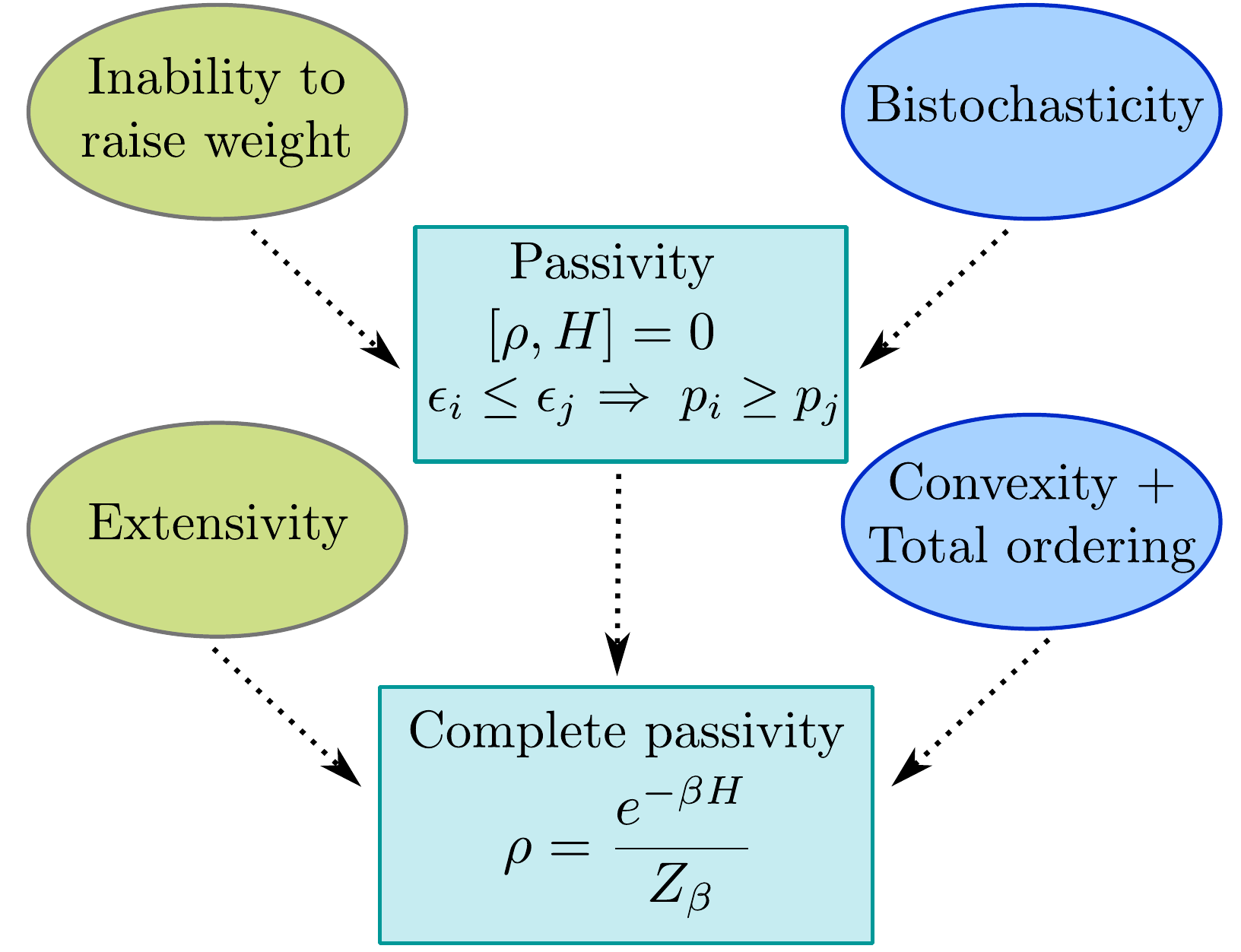}
    \caption{\textbf{The physical and mathematical elements of passivity.} Shown on the left in green are the key physical concepts that single out the equilibrium state of a quantum system through the lens of passivity. Each component corresponds to a simple mathematical structure, shown on the right in blue. Together these lead to a purely geometric account of equilibrium states for a quantum system in terms of an \emph{asymptotic ensemble} $\V_\infty (\rho)$ for a state $\rho$.
    }\label{fig:ingredients}
\end{figure}

In recent times, passivity has received renewed attention. It has played a valuable role in studying the problem of work-extraction from non-equilibrium states~\cite{Skrzypczyk2015, SparaciariJennings2017, Salvia2019, Alhambra2019, Alimuddin2020}, generalised Gibbs states~\cite{Balian1987,Lostaglio2017, YungerHalpern2016, Guryanova2016, Boes2018} and quantum information-theoretic approaches to thermodynamics~\cite{Bera2019, Uzdin2018, Uzdin2019, Brown_2016, Binder2019}.

The original passivity analysis was in terms of $C^*$ algebras~\cite{pusz1978}, and so did not lend itself to a widely accessible account. This was in part remedied by Lenard in 1978, who provided an analysis specialised to finite-dimensional quantum systems~\cite{Lenard1978}. There have since been shorter and more compact passivity analyses, for example via the concept of `virtual temperatures'~\cite{Skrzypczyk2015_virtualtemps}. While such derivations of the Gibbs state from passivity are technically clear, one might wish for a more intuitive perspective on the structure of passivity that makes the emergence of the Gibbs state entirely inevitable while suggesting natural extensions.

With this aim in mind, we present a novel geometric formulation of passivity. In particular, we show in~\secref{sec:kensembles} that every passive state $\rho$ of a quantum system can be associated to a simple convex shape $\V_\infty(\rho)$ in the $2$-d plane. This convex shape naturally emerges in the macroscopic regime of many independent copies of the state, and its structure entirely encodes the deviation of the system from the manifold of equilibrium states. With this geometric notion, in~\secref{section:gibbs derivation} we show how the Gibbs state derivation becomes almost trivial, and that the area of the shape $\V_\infty(\rho)$ is a well-defined measure of the degree to which the state deviates from equilibrium. In~\secref{section:athermality as a measure}, we show the area is monotonically non-increasing along so-called activation trajectories in the energy-entropy plane, and argue that it can supplement the traditional macroscopic equilibrium variables to capture the macroscopic non-equilibrium properties of a quantum system. We also demonstrate in~\secref{section:other measures} how the area relates to the concept of ergotropy~\cite{Allahverdyan2004}, as well as the recently introduced notion of $\beta$--athermality~\cite{SparaciariFritz2017}.


\section{The core physical principles}
We begin by making clear what physical notions we use and why they are either required or appealing conceptually. The two core components involved are the concepts of \emph{passivity} and \emph{extensivity}, which we now explain.

\subsection{Passivity}

The first core principle in our treatment is passivity, which is physically and operationally clear and fully deterministic. Two examples suffice to illustrate the main idea. First, consider a three-level atom with Hamiltonian $H = \ketbra{1}{1} + 2 \  \ketbra{2}{2} + 3 \ \ketbra{3}{3}$ in the following population-inverted state 
\begin{equation}
\rho =  \frac{1}{4} \ \ketbra{1}{1} + \frac{1}{4} \ \ketbra{2}{2} + \frac{1}{2} \ \ketbra{3}{3}.
\end{equation}
By applying a unitary evolution to the system that deterministically swaps levels 1 and 3, we can lower the average energy of the system. In particular, the `passifying' unitary $U_1$ transforms the state of the system to the new state $\rho_p$:
\begin{equation}
\rho_P = U_1 \rho U_1^\dagger, \quad U_1 \coloneqq \ketbra{1}{3} +\ketbra{3}{1} + \ketbra{2}{2}.
\label{eq: passive example rhop}
\end{equation}
The internal energy of the system has decreased by amount 
\begin{equation}
W=\tr[\rho H] - \tr[\rho_P H],
\label{eq:W}
\end{equation}
which implicitly corresponds to work extraction. However, given the state $\rho_P$ in~\eqref{eq: passive example rhop} the average energy cannot be lowered any further by any unitary. Therefore, there has been a maximal extraction of work from the quantum state $\rho$ with respect to any possible unitary evolution of the system, and the quantity $W$ in~\eqref{eq:W} corresponds to the \textit{ergotropy} of $\rho$~\cite{Allahverdyan2004}. 
Another example for the same Hamiltonian $H$ is given by the uniform superposition state
\begin{equation}
\ket{\psi} = \frac{1}{\sqrt{3}}(\ket{1} + \ket{2} + \ket{3}).
\end{equation}
Given such an initial state $\ket{\psi}$ it is possible to unitarily transform the system precisely to its ground state
\begin{equation}
\ket{\psi_P} = U_2 \ket{\psi}, \quad U_2 \coloneqq \frac{1}{\sqrt{3}} \begin{pmatrix} 
  1    & 1  & 1 \\ 
1 & \omega & \omega^2 \\
  1 &  \omega^2 & \omega \\
\end{pmatrix},
\end{equation}
where $\omega \coloneqq e^{ \frac{2\pi i}{3}}$ and $U_2$ is represented in the $\{\ket{1}, \ket{2},\ket{3} \}$ basis. Clearly, we cannot further lower the energy of this atomic system in its ground state $\ket{\psi_P}=\ket{1}$ via unitary evolution. States such as $\rho_P$ and $\ket{\psi_P}$, for which no further lowering of energy can occur through deterministic unitary transformations, are called passive. This is formalised by the following definition~\cite{pusz1978}. 
\begin{definition}
A quantum state $\rho$ of a system with Hamiltonian $H$ is passive if
\begin{equation}
\mathrm{tr} \big( H U \rho U^\dagger \big) \geq \mathrm{tr}  (H \rho)
\end{equation}
for all unitaries $U$ acting upon the system.
\end{definition}

For simplicity, we restrict our attention to finite dimensional systems with bounded energies, although the notion of passivity extends to infinite dimensional systems \cite{pusz1978}. In light of this definition, a natural question is: when exactly is a quantum state passive? The following theorem provides necessary and sufficient conditions for a finite-dimensional state being passive~\cite{Lenard1978}.
\begin{theorem}
Consider a $d$-dimensional quantum system with Hamiltonian $H$, with eigenvalue decomposition $H = \sum_{i=1}^d \epsilon_i \ \ketbra{e_i}{e_i}$. A state of that system is passive if and only if $[\rho, H]=0$ and, for eigenvalue decomposition $\rho = \sum_{i=1}^d p_i \ \ketbra{e_i}{e_i}$, we have that $\epsilon_i \leq \epsilon_j$ implies $p_i \geq p_j$ for all $i$ and $j$ in $\{1,\dots,d\}$.
\label{theorem: passivity conditions}
\end{theorem}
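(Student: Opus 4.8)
The plan is to prove both directions of the equivalence separately, starting from the "only if" direction since it contains the essential physics. First I would suppose $\rho$ is passive and show $[\rho,H]=0$. The key observation is that applying a unitary $U$ that merely permutes eigenvectors of $\rho$ among degenerate eigenspaces does nothing, but rotating $\rho$ within the energy eigenbasis can only be at equilibrium if $\rho$ already shares an eigenbasis with $H$. Concretely, suppose for contradiction that $[\rho,H]\neq 0$. Then one can pick a two-dimensional subspace spanned by two energy eigenvectors $\ket{e_i},\ket{e_j}$ with $\epsilon_i<\epsilon_j$ on which $\rho$ is not diagonal, and consider the family of unitaries $U_\theta = \exp(-i\theta G)$ generated by a Hermitian $G$ supported on that subspace. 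Differentiating $f(\theta) \coloneqq \tr(H U_\theta \rho U_\theta^\dagger)$ at $\theta = 0$ gives $f'(0) = -i\,\tr(H[G,\rho]) = -i\,\tr(G[\rho,H])$, and since $\rho$ is passive, $\theta=0$ must be a minimum of $f$, so $f'(0)=0$ for all such $G$. Because $G$ ranges over all Hermitian operators on the subspace, this forces the projection of $[\rho,H]$ onto that subspace to vanish; ranging over all pairs $(i,j)$ gives $[\rho,H]=0$.

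Next, with $\rho$ and $H$ simultaneously diagonalised as $\rho = \sum_i p_i \ketbra{e_i}{e_i}$ and $H = \sum_i \epsilon_i \ketbra{e_i}{e_i}$, I would establish the population ordering. Suppose $\epsilon_i \le \epsilon_j$ but $p_i < p_j$ for some pair. Take $U$ to be the swap $\ketbra{e_i}{e_j} + \ketbra{e_j}{e_i}$ on the rest of the identity. Then a direct computation gives $\tr(H U\rho U^\dagger) - \tr(H\rho) = (\epsilon_i - \epsilon_j)(p_j - p_i) \le 0$, and it is strictly negative unless $\epsilon_i = \epsilon_j$ — contradicting passivity. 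Hence $\epsilon_i \le \epsilon_j \Rightarrow p_i \ge p_j$; the only slightly delicate point is handling ties in the energies, but there the ordering condition is automatically satisfiable by relabelling, so no contradiction arises.

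For the "if" direction, I would assume $[\rho,H]=0$ with the stated ordering and show no unitary can lower the energy. Write $\rho_U \coloneqq U\rho U^\dagger$; since $\rho$ and $\rho_U$ are unitarily equivalent they have the same spectrum $\{p_i\}$. The quantity to bound is $\tr(H\rho_U) = \sum_{i} \epsilon_i \matrixel{e_i}{\rho_U}{e_i}$. Letting $D_{ij} \coloneqq |\matrixel{e_i}{U}{e_j}|^2$, one has $\matrixel{e_i}{\rho_U}{e_i} = \sum_j D_{ij} p_j$, and $D = (D_{ij})$ is doubly stochastic (its row and column sums are $1$ by unitarity). So $\tr(H\rho_U) = \sum_{ij} \epsilon_i D_{ij} p_j$, and by Birkhoff's theorem $D$ is a convex combination of permutation matrices, reducing the problem to showing $\sum_i \epsilon_i p_{\pi(i)} \ge \sum_i \epsilon_i p_i$ for every permutation $\pi$. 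This is a standard rearrangement inequality: since the energies are sorted increasingly and (by the ordering hypothesis) the populations decreasingly, pairing them oppositely minimises the sum, which is exactly $\tr(H\rho)$.

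The main obstacle is the "only if" direction's first step — ruling out $[\rho,H]\neq 0$ — because it requires a genuine perturbative/variational argument rather than a single clever unitary, and one must be careful that the generator $G$ can be chosen to make $f'(0) \neq 0$ whenever $[\rho,H]$ has a nonzero component. A cleaner alternative I might use instead is to invoke the Schur–Horn theorem directly: passivity plus minimisation over all $U$ says $\tr(H\rho)$ equals the minimum of $\sum_i \epsilon_i q_i$ over all probability vectors $q$ majorised by the spectrum of $\rho$, and tracking when this minimum is attained forces both the commutation and the ordering simultaneously. Either route works; the perturbative one is more self-contained, so I would present that, relegating the Schur–Horn viewpoint to a remark.
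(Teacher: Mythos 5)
Your proof is correct, and while your ``if'' direction coincides with the paper's (both pass to the doubly stochastic matrix $D_{ij}=|\bra{e_i}U\ket{e_j}|^2$ and finish with the rearrangement inequality, which the paper packages into a single majorization lemma following Mirsky), your ``only if'' direction takes a genuinely different route. The paper stays entirely inside majorization theory: it invokes the Schur--Horn theorem to write $\mathrm{diag}(\rho)$ as a bistochastic matrix acting on $\mathrm{eigs}(\rho)$, notes that passivity forces equality throughout its chain of inner-product inequalities, and then argues via Perron--Frobenius that the bistochastic matrix must be the identity, obtaining diagonality and anti-ordering in one stroke. You instead split the conclusion in two: a first-order variational argument along $U_\theta=e^{-i\theta G}$ gives $\mathrm{tr}\left(G[\rho,H]\right)=0$ for all Hermitian $G$, whence $[\rho,H]=0$ (choosing $G=i[\rho,H]$ globally makes this immediate and spares you the subspace-by-subspace bookkeeping), and explicit two-level swaps then force the anti-ordering, with the energy decrease $(\epsilon_i-\epsilon_j)(p_j-p_i)$ fully transparent. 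Your route is more elementary and self-contained --- it needs neither Schur--Horn nor Perron--Frobenius, and it avoids the most delicate step of the paper's argument, namely concluding $B=\iden$ from $\bm{\epsilon}^\uparrow\cdot(\iden-B)\bm{p}^\downarrow=0$; what the paper's approach buys is a single lemma serving both directions and an immediate read-off of the ergotropy formula $\bm{\epsilon}^\uparrow\cdot(\bm{\lambda}^\uparrow-\bm{\lambda}^\downarrow)$ recorded afterwards. One caution on your closing remark: when $\epsilon_i=\epsilon_j$ the literal condition ``$\epsilon_i\le\epsilon_j\Rightarrow p_i\ge p_j$ for all $i,j$'' would force $p_i=p_j$, and no relabelling fixes that; like the paper, you should explicitly waive the ordering requirement inside degenerate energy eigenspaces (coarse-graining over degenerate levels) rather than assert that ties are automatically satisfiable.
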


See~\appref{appendix:passivity proof} for a majorization-based proof. Thus, passive states are precisely those which are block-diagonal in the energy eigenbasis with eigenvalues which are `anti-ordered' with respect to the energies.

The definition of passivity does not invoke the full machinery of thermodynamics. It is a physical statement about the deterministic extraction of energy from a system. However, the notion is embedded within thermodynamics in the Kelvin-Planck formulation of the Second Law~\cite{planck1945}, which forbids the extraction of work from thermalised systems that are adiabatically isolated.

\subsection{Extensivity}

Passivity is defined relative to a Hamiltonian, but one could replace this with another observable~\cite{Balian1987,Lostaglio2017, YungerHalpern2016, Guryanova2016, Boes2018}. This would relate to other types of resources, not just those from which energy can be reversibly extracted. However, even restricting ourselves to energetic considerations, we could define passivity with respect to the second moment $\langle H^2\rangle$. It is readily seen that this gives the same set of passive states as the original definition\footnote{Indeed, the same passive states arise when passivity is defined with respect to $\langle f(H) \rangle$, for any monotonically increasing function $f$, because the conditions of~\thmref{theorem: passivity conditions} depend only on the relative ordering of the energies and populations.}.

The expectation value of energy $\langle H \rangle$ arises in other derivations of the Gibbs state. The MaxEnt approach singles out the state that maximises the entropy given a fixed average energy \cite{maxent1, maxent2}. Langrange multipliers show that this state takes the Gibbs form,
\begin{equation}
\gamma_\beta = \frac{1}{Z} e^{-\beta H}.
\end{equation}
However, if one maximizes the entropy with respect to fixed $\langle H^2 \rangle$ this would instead lead to
\begin{equation}
\gamma'_\beta = \frac{1}{Z} e^{-\beta H^2},
\end{equation}
no longer giving the expected Gibbs form. Therefore, the reason why passivity is based on  the expectation value of $H$ is motivated by an additional physical property. This second core ingredient is \emph{extensivity}. $H$ is an extensive observable, unlike $H^2$ or any other power, and this occurs because energy is additively conserved microscopically.

However, if two systems are passive under their respective Hamiltonians, there is no guarantee that the combined system will also be passive under the combined Hamiltonian. This motivates the following extension.

\begin{definition}
A quantum state $\rho$ of a system with Hamiltonian $H$ is $k$-passive if $\rho^{\otimes k}$ is passive under $\sum_{i=1}^k H_i $. If a state is $k$-passive for all integers $k \geq 1$, it is called completely passive.
\end{definition}

As we shall see, a celebrated result is that, for finite dimensional quantum systems, the completely passive states are essentially the Gibbs states.


\section{A geometric formulation of passivity}
We now present a novel geometric reformulation of passivity and show how it leads to a transparent derivation of the thermodynamic Gibbs state.

\subsection{The $\epsilon$-$s$ Ensemble}

Consider a quantum system with Hamiltonian $H$ on the finite-dimensional Hilbert space $\H$, and a passive state $\rho$ of the system. Given the necessary and sufficient conditions of~\thmref{theorem: passivity conditions}, we need only consider $(p_i) = \rm{eigs}(\rho)$ and $(\epsilon_i) = \rm{eigs}(H)$. For simplicity, we first assume that the spectrum of $H$ is non-degenerate, and discuss the more general case in~\appref{appendix:technical details}. For this situation, there is an orthonormal basis $|e_i\>$ in which we have $\rho = \rm{diag}(p_i)$ and $H= \rm{diag}(\epsilon_i)$.

In order to analyse passive states, an initial choice of representation might be the pairs $\{(\epsilon_i,p_i) : 1 \leq i \leq d \}$ and the conditions on these that lead to passivity, $k$-passivity and complete passivity. However, this representation has the disadvantage of treating the two quantities in the pairing differently when systems are combined: energy is additive, whereas probabilities multiply. Additivity suggests that, instead of probabilities $p_i$, a better choice is to use $s_i := - \log{p_i}$, which can be viewed as a `single instance' entropy. 

\begin{definition}
Given a finite dimensional quantum system with Hamiltonian $H = \sum_{i=1}^d \epsilon_i \ketbra{e_i}{e_i}$, for any full-rank state\footnote{The restriction to full-rank quantum states is purely for technical reasons, and does not constrain the physics involved. A lower rank state $\sigma$ is physically indistinguishable from the full-rank state $(1-\delta) \ \sigma + \delta \ \iden / d$ for sufficiently small $\delta > 0$. The reason for this restriction is that if an eigenvalue $p_k$ of $\rho$ approaches zero then the corresponding $s_k$ will diverge to positive infinity, and so ranging over all full-rank states simply corresponds to ensuring finite, but arbitrarily large, values of $s_k$.} of the form $\rho = \sum_{i=1}^d p_i \ketbra{e_i}{e_i}$  we define the associated $\epsilon$-$s$ ensemble as
\begin{equation}
\V(\rho) \coloneqq \{ \v_i : 1 \leq i \leq d \}, \quad \v_i\coloneqq (\epsilon_i,s_i).
\end{equation}
\end{definition}

The set of points $\V(\rho)$ provide an `$\epsilon$-$s$ ensemble' by virtue that if we view $(\epsilon_i, s_i)$ as defining a two-component random variable with probability distribution $(p_i)$, then the expectation values of the two components have physical interpretations -- the average energy and von Neumann entropy of $\rho$. We explore this further in~\secref{section:state variables}.

\subsection{Passivity as a total order on vectors}
\begin{figure}[t]
\centering
\includegraphics[scale=0.3]{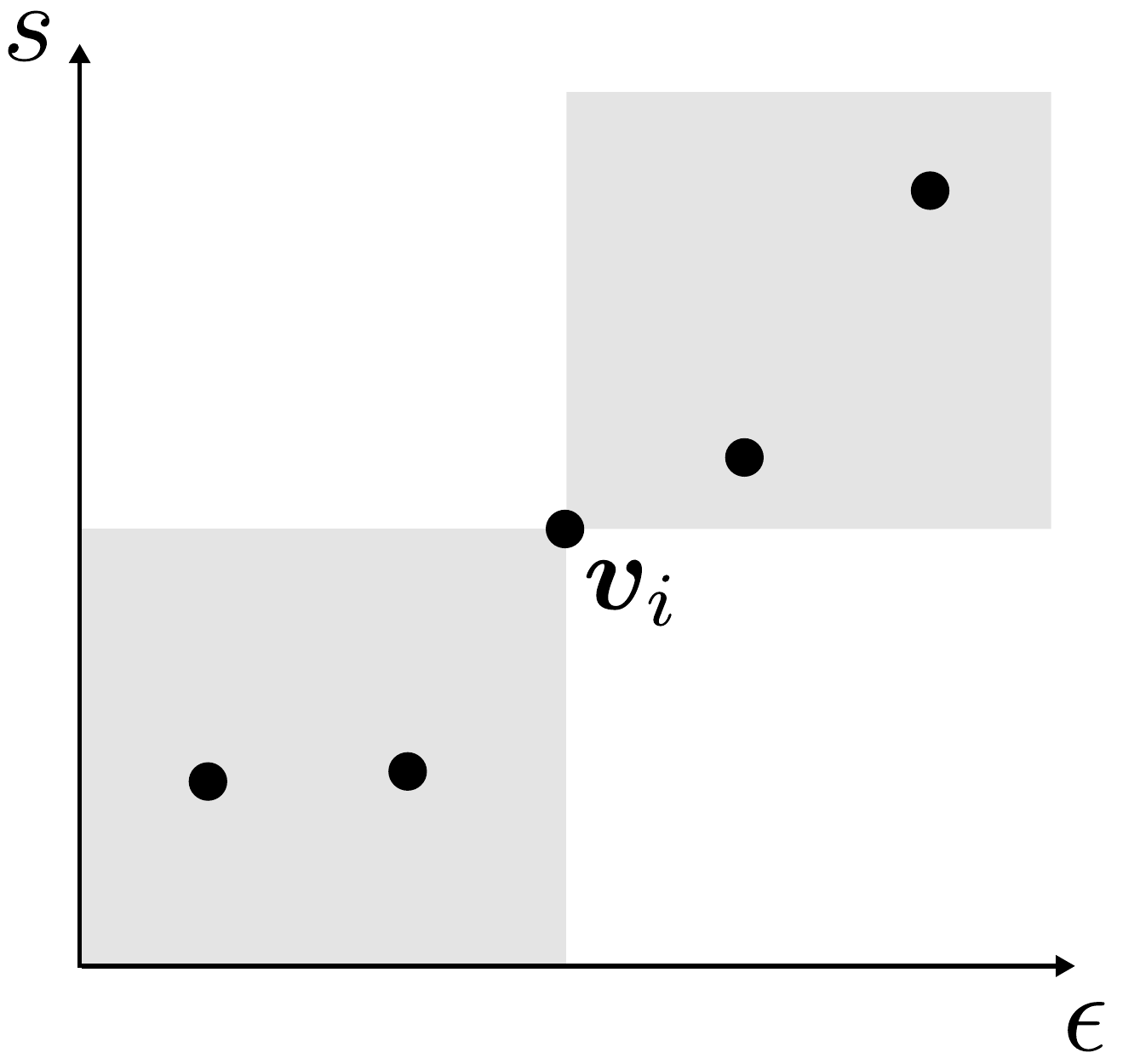}
\caption{\textbf{Totally ordered points and passivity.} The points of the $\epsilon$-$s$ ensemble $\V(\rho)$ for a passive state $\rho$ must be totally ordered, therefore for any fixed element $\v_i$ of $\V(\rho)$, the remaining points must lie in the shaded regions.}
\label{fig:total ordering}
\end{figure}

Passivity implies a certain structure on the state of the system, fully specified by the necessary and sufficient conditions given in~\thmref{theorem: passivity conditions}. To summarise, a passive state $\rho$ is block-diagonal in the energy eigenbasis, with energies $(\epsilon_i)$ and eigenvalues $(p_i)$ that are `anti-ordered'. Our introduction of the $\epsilon$-$s$ ensemble allows us to recast these conditions for passivity as a simple geometric property in $\mathbb{R}^2$. The key point here will be the idea of a total ordering on a vector space.

\begin{definition}
Let $\r_i\coloneqq (x_i,y_i)$ and $\r_j \coloneqq (x_j, y_j)$ be two vectors in $\mathbb{R}^2$. Then we define the relation $\leq$ such that for any two vectors $\r_i, \r_j\in \mathbb{R}^2$, we have $\r_i \leq \r_j$ if and only if $x_i \leq x_j$ and $y_i \leq y_j$. Moreover, given a set $R =\{ \r_1, \r_2, \dots \}$ of vectors, we say that $R$ is \textit{totally ordered} if, given any two $\r_i, \r_j$ in $R$, then either $\r_i \le \r_j$ or $\r_j \le \r_i$.
\end{definition}

With this definition in hand, we are now in a position to provide a simple characterisation of passive states in terms of the $\epsilon$-$s$ ensemble.

\begin{lemma}
A quantum state $\rho$ of a $d$-dimensional quantum system satisfying $[\rho, H]=0$ is passive if and only if its associated $\epsilon$-$s$ ensemble $\mathcal{V}(\rho)$ is a totally ordered set.
\label{theorem: passivity conditions total order}
\end{lemma}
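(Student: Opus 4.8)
The plan is to reduce the claim directly to the necessary and sufficient conditions for passivity already supplied by \thmref{theorem: passivity conditions}, rewriting the ``anti-ordering'' condition on populations as a total ordering of the vectors $\v_i = (\epsilon_i, s_i)$. Since we are given $[\rho,H]=0$, we may pass to the common eigenbasis $|e_i\rangle$ in which $\rho=\mathrm{diag}(p_i)$ and $H=\mathrm{diag}(\epsilon_i)$; the restriction to full-rank states guarantees each $s_i=-\log p_i$ is finite, so $\V(\rho)$ is a genuine finite set of points in $\mathbb{R}^2$. The central observation is the elementary monotonicity fact that, because $-\log$ is a strictly decreasing function, $p_i \geq p_j \iff s_i \leq s_j$. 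So I would first record this equivalence, then translate the two directions of the iff.

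For the forward direction, assume $\rho$ is passive. By \thmref{theorem: passivity conditions}, for any $i,j$ we have $\epsilon_i \leq \epsilon_j \Rightarrow p_i \geq p_j$, equivalently $\epsilon_i \leq \epsilon_j \Rightarrow s_i \leq s_j$. Now take any two points $\v_i,\v_j$ of $\V(\rho)$. Since $H$ is non-degenerate the energies are distinct, so either $\epsilon_i < \epsilon_j$ or $\epsilon_j < \epsilon_i$; say the former. Then $\epsilon_i \leq \epsilon_j$ gives $s_i \leq s_j$, hence $\v_i \leq \v_j$ in the sense of the total-order definition. As the pair was arbitrary, $\V(\rho)$ is totally ordered. (I would remark parenthetically that in the degenerate case treated in the appendix one orders within a degenerate block by the $s_i$ values, so the statement persists.)

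For the converse, assume $\V(\rho)$ is totally ordered and $[\rho,H]=0$; I want the anti-ordering condition of \thmref{theorem: passivity conditions}. Suppose $\epsilon_i \leq \epsilon_j$. If $i=j$ there is nothing to prove, so assume $\epsilon_i < \epsilon_j$. By total ordering, either $\v_i \leq \v_j$ or $\v_j \leq \v_i$; the latter would force $\epsilon_j \leq \epsilon_i$, contradicting $\epsilon_i < \epsilon_j$, so $\v_i \leq \v_j$, which includes $s_i \leq s_j$, i.e.\ $p_i \geq p_j$. Thus the hypotheses of \thmref{theorem: passivity conditions} hold and $\rho$ is passive, completing the equivalence.

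I do not expect a genuine obstacle here: the content is entirely the observation that a coordinate-wise partial order on the $(\epsilon_i,s_i)$ pairs, combined with strict monotonicity of $-\log$ and the distinctness of the energies, exactly reproduces Lenard's anti-ordering condition. The only point requiring a little care is the role of degeneracy in $H$ (and, symmetrically, of equal populations): with a non-degenerate spectrum total ordering is automatic once the implication holds, but in the degenerate case one must be explicit that the freedom to reorder states within a degenerate energy level is precisely what lets one arrange the vectors into a chain, and conversely that total ordering is compatible with (rather than violated by) such coincidences of a coordinate. I would handle this by stating the proof cleanly in the non-degenerate case and deferring the bookkeeping for degeneracies to \appref{appendix:technical details}, consistent with how the excerpt has set things up.
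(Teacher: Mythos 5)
Your proposal is correct and follows exactly the paper's route: the paper's proof is the single observation that $-\log p_i \leq -\log p_j$ iff $p_i \geq p_j$, so Lenard's anti-ordering condition from \thmref{theorem: passivity conditions} is literally the componentwise total order on the pairs $(\epsilon_i, s_i)$. You have simply unpacked both directions of that equivalence explicitly (and sensibly deferred degeneracy to the appendix, as the paper does), which adds detail but no new idea.
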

\begin{proof}
This result follows directly from~\thmref{theorem: passivity conditions}, since $-\log p_i \leq -\log p_j$ if and only if $p_i \geq p_j$.
\end{proof}

As shown in~\figref{fig:total ordering}, the total order appearing in~\lemref{theorem: passivity conditions total order} corresponds to a set of geometric constraints on the relative positioning of the elements of the $\epsilon$-$s$ ensemble in $\mathbb{R}^2$. More precisely, if we pick out any element $\v_i$ of the $\epsilon$-$s$ ensemble, all other elements $\v_j, \forall j\neq i$ must lie within the upper-right or lower-left quadrant defined by the point $\v_i$.

\subsection{$k$-passivity and the $\epsilon$-$s$ ensemble}\label{sec:kensembles}

Passivity is defined for a single system, while complete passivity shifts the focus onto multiple copies. Our introduction of the $\epsilon$-$s$ ensemble lends itself well to this change of focus, precisely because it respects additivity under the composition of systems. Before outlining the key result, we first clarify this statement with a few examples.

It proves to be useful to `regularise' the $\epsilon$-$s$ ensemble by dividing by the number of copies of the system. For instance, given a qubit system ($d=2$) with $\epsilon$-$s$ ensemble $\V(\rho) = \{ \v_1 , \v_2 \}$, the regularised representation for two copies of the state is
\begin{equation}
\frac{1}{2} \V(\rho^{\otimes 2}) = \left\{ \v_1, \frac{1}{2}(\v_1 + \v_2) , \v_2 \right\},
\end{equation}
and for three copies it is
\begin{equation}
\frac{1}{3} \V(\rho^{\otimes 3}) = \left\{ \v_1, \frac{1}{3}(\v_1 + 2\v_2) , \frac{1}{3}(2\v_1 + \v_2), \v_2 \right\}.
\end{equation}
More generally, for some state $\rho$ of a $d$-dimensional system with $\epsilon$-$s$ ensemble $\V(\rho) =\{ \v_1, \dots , \v_d \}$, taking $k$ copies of the system gives 
\begin{equation}\label{eq:vk}
\V_k(\rho):=\frac{1}{k} \V(\rho^{\otimes k}) = \left\{ \frac{1}{k} \sum_{i=1}^d c_i \, \v_i \ : \ \sum_{i=1}^d c_i = k \right\},
\end{equation}
where $c_i \in \mathbb{N}$. These are illustrated for a qutrit example in~\figref{fig:qubit example}. The condition of $k$-passivity for quantum states can now be geometrically restated.

\begin{figure}[t]
\centering
\includegraphics[width=0.48\textwidth]{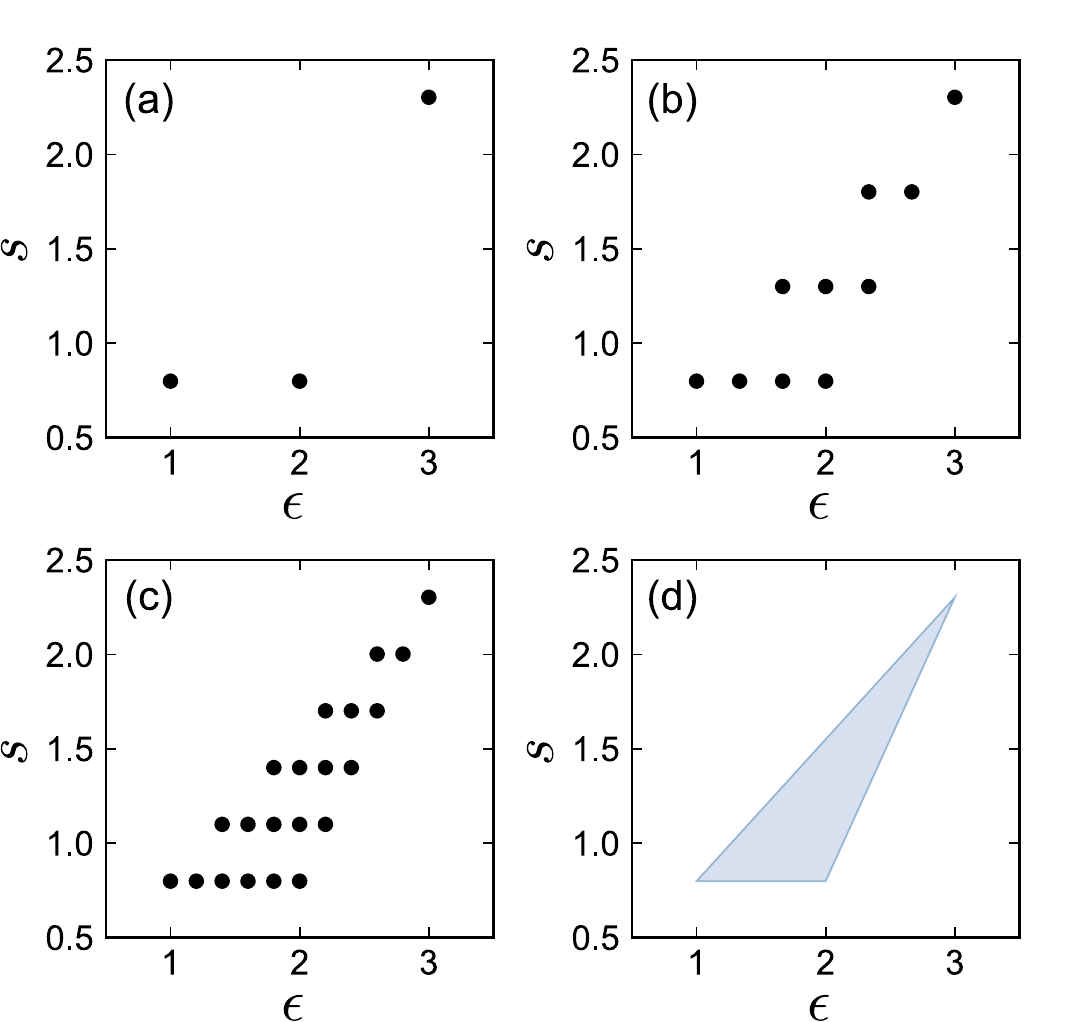}
\caption{ \textbf{Regularised $\epsilon$-$s$ ensembles for multiple copies of the passive qutrit state $\rho$.} Plots illustrate how the points of the regularized $\epsilon$-$s$ ensemble $\V_k(\rho)$ build up with increasing numbers of copies $k$ of the example state $\rho=\diag\{  0.45,0.45,0.1\}$. More precisely, we plot: (a) the single copy ensemble $\V(\rho)$; (b) the regularized 3-copy ensemble $\V_3(\rho)$; (c) the regularized 5-copy ensemble $\V_5(\rho)$; and (d) the regularized asymptotic ensemble $\V_\infty(\rho)$ (introduced in the next section).}
\label{fig:qubit example}
\end{figure}

\begin{lemma}\label{lem:kpassivity}
A state $\rho$ of a $d$-dimensional quantum system with Hamiltonian $H$ is $k$-passive if and only if $[\rho,H]=0$ and $\V_k(\rho)$ is a totally ordered set.
\end{lemma}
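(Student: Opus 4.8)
The plan is to reduce \lemref{lem:kpassivity} to \lemref{theorem: passivity conditions total order} applied to the composite system $\rho^{\otimes k}$ with Hamiltonian $H_{\rm tot} = \sum_{i=1}^k H_i$. By definition, $\rho$ is $k$-passive precisely when $\rho^{\otimes k}$ is passive for $H_{\rm tot}$, so the first step is to observe that $[\rho^{\otimes k}, H_{\rm tot}] = 0$ is equivalent to $[\rho, H] = 0$: one direction is immediate, and the converse follows because $H_{\rm tot}$ acts as $\sum_i \iden \otimes \cdots \otimes H \otimes \cdots \otimes \iden$, so simultaneous diagonalisability of $\rho$ and $H$ lifts tensor-factor-wise to the composite. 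This lets me assume throughout that both states are diagonal in a common product eigenbasis $\{\ket{e_{i_1}} \otimes \cdots \otimes \ket{e_{i_k}}\}$.

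Next I would write down the $\epsilon$-$s$ ensemble of $\rho^{\otimes k}$ explicitly. The eigenvalues of $\rho^{\otimes k}$ are products $p_{i_1} \cdots p_{i_k}$ with corresponding energies $\epsilon_{i_1} + \cdots + \epsilon_{i_k}$. Taking $s := -\log$ of the eigenvalue turns the product into a sum, so the ensemble point associated to the multi-index $(i_1,\dots,i_k)$ is exactly $\v_{i_1} + \cdots + \v_{i_k}$, i.e. $\V(\rho^{\otimes k}) = \{ \sum_{a=1}^k \v_{i_a} : 1 \le i_a \le d \}$. Collecting terms by how many times each index $i$ appears, with multiplicity $c_i \in \mathbb{N}$ and $\sum_i c_i = k$, this is $\{ \sum_{i=1}^d c_i \v_i : \sum_i c_i = k \}$. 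Dividing by $k$ — which is just an overall rescaling of $\mathbb{R}^2$ and hence preserves the relation $\le$ and the property of being totally ordered — gives precisely $\V_k(\rho)$ as defined in \eqref{eq:vk}. Note also that rescaling $H_{\rm tot} \mapsto \frac{1}{k} H_{\rm tot}$ does not change the set of passive states, since passivity is scale-invariant in the Hamiltonian.

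With these identifications, \lemref{theorem: passivity conditions total order} applied to the system $(\rho^{\otimes k}, H_{\rm tot})$ says: given $[\rho^{\otimes k}, H_{\rm tot}]=0$, the state $\rho^{\otimes k}$ is passive if and only if $\V(\rho^{\otimes k})$ is totally ordered, equivalently (after rescaling) if and only if $\V_k(\rho)$ is totally ordered. Combined with the equivalence of the commutation conditions from the first step, this is exactly the claimed statement. The one genuine subtlety — the main obstacle — is that \thmref{theorem: passivity conditions} as stated assumes a non-degenerate Hamiltonian, whereas $H_{\rm tot}$ is generically highly degenerate (e.g. $\epsilon_1 + \epsilon_2 = \epsilon_2 + \epsilon_1$). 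So I need the degenerate version of the passivity characterisation, namely that $\rho^{\otimes k}$ is passive iff it is block-diagonal in the energy eigenbasis with populations anti-ordered against energies, with the understanding that within a degenerate energy subspace the ordering condition imposes nothing beyond block-diagonality. This is handled by the general treatment deferred to \appref{appendix:technical details}; once one notes that two ensemble points with equal energy coordinate are automatically comparable under $\le$ exactly when their $s$-coordinates are comparable (which is automatic for scalars), the "totally ordered" reformulation absorbs the degeneracy cleanly and no extra work is needed.
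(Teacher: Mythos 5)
Your proof is correct and follows essentially the same route as the paper, whose own proof is just the one-line observation that this is \lemref{theorem: passivity conditions total order} applied to $\rho^{\otimes k}$ under $\sum_{i=1}^k H_i$; you simply make explicit the details the paper leaves implicit (the Minkowski-sum form of $\V(\rho^{\otimes k})$, the harmlessness of the $1/k$ rescaling for the partial order, and the degeneracy of the total Hamiltonian). The only small blemish is that your justification of $[\rho^{\otimes k},\sum_i H_i]=0\Rightarrow[\rho,H]=0$ actually argues the easy reverse implication; the direction you need follows in one line from $[\rho^{\otimes k},\sum_i H_i]=\sum_i \rho\otimes\cdots\otimes[\rho,H]\otimes\cdots\otimes\rho$ by taking the partial trace over all but one tensor factor.
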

\begin{proof}
This is simply~\lemref{theorem: passivity conditions total order} stated for the $\epsilon$-$s$ ensemble of $k$ copies of the state.
\end{proof}

\lemref{lem:kpassivity} provides a natural algorithm for determining how many copies of a state are required before ergotropy arises, namely finding the lowest $k$ for which $\V_k(\rho)$ is not totally ordered.
This offers greater simplicity than more algebraic approaches such as~\cite{Salvia2019}.

\begin{figure}[t]
\centering
\includegraphics[scale=0.75]{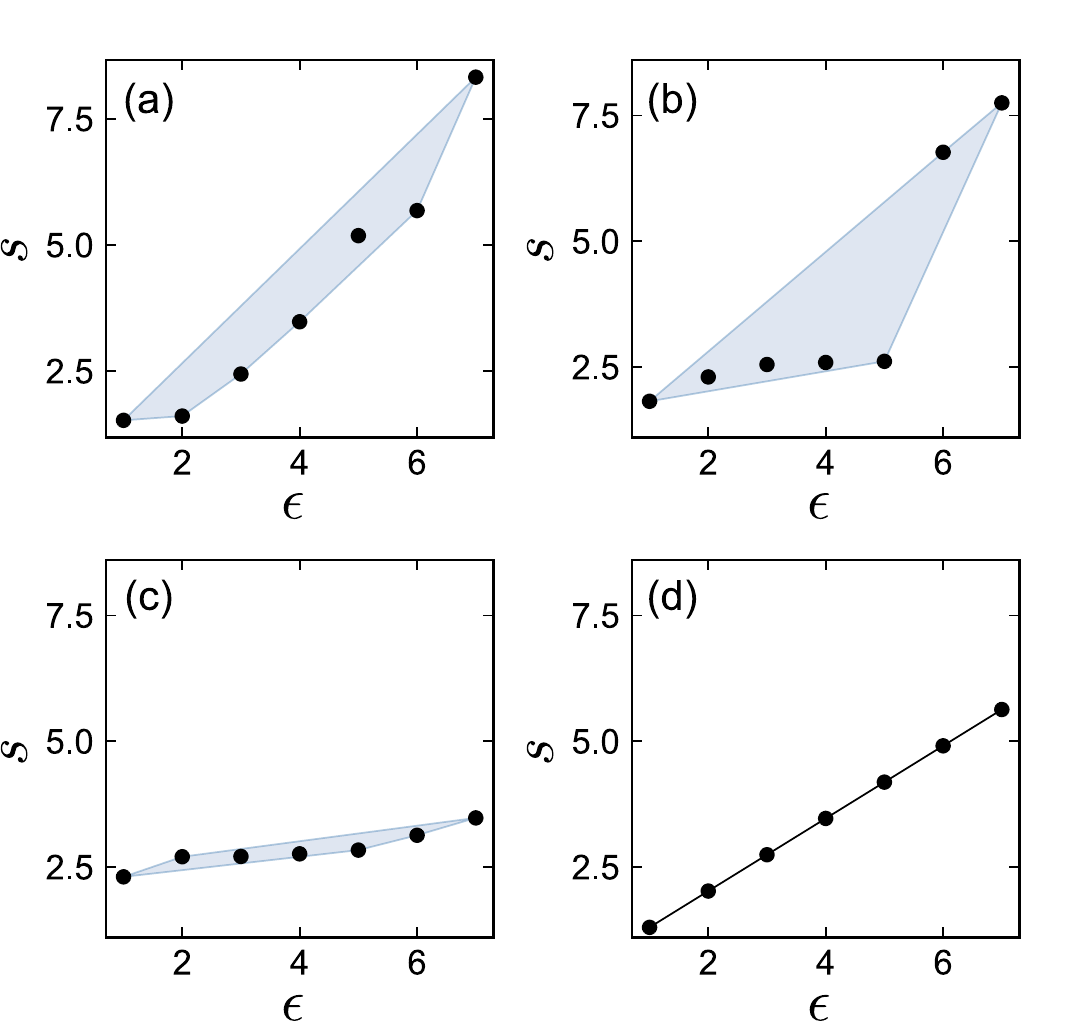}
\caption{\textbf{Passive $\epsilon$-$s$ ensembles and the completely passive Gibbs state.} Four totally ordered $\epsilon$-$s$ ensembles $\V(\rho) = \{ \v_1,\dots, \v_7 \}$ corresponding to four possible passive states $\rho$ of a $7$-$d$ quantum system with Hamiltonian $H=\mathrm{diag}\{1,2,\dots, 7 \}$. (a)-(c) The blue shaded region is the convex hull of the set $\V(\rho)$, therefore $\rho$ is passive, but not completely passive. (d) No extended blue region implies $\rho$ is completely passive.} \label{fig:example ensembles}
\end{figure}

To consider complete passivity, we define an asymptotic version of the $\epsilon$-$s$ ensemble, taking the limit of $\V_k(\rho)$ as $k \rightarrow \infty$, and making the intuition of~\figref{fig:qubit example} rigorous. The technical definition of this asymptotic ensemble is given in~\appref{thm:convex}, but here we need only consider a more intuitive picture, which views the IID asymptotic ensemble as the convex hull of the single-copy $\epsilon$-$s$ ensemble.

\begin{lemma}[Asymptotic ensemble] \label{lem:convV}
For a $d$-dimensional quantum system with Hamiltonian $H$, and state $\rho$ with $[\rho, H]=0$ and $\epsilon$-$s$ ensemble $\V(\rho)$, we have that
\begin{equation}
\V_\infty(\rho) = \mathrm{conv} \big[\V(\rho) \big], 
\label{eq: convex hull identity}
\end{equation}
where $\mathrm{conv}[S]$ denotes the convex hull of a set $S$.
\end{lemma}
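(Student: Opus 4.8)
The plan is to prove the set identity $\V_\infty(\rho) = \mathrm{conv}[\V(\rho)]$ by a two-way inclusion, working from the explicit description of the finite-$k$ ensembles in \eqref{eq:vk}. First I would observe that every point of $\V_k(\rho)$ is of the form $\frac{1}{k}\sum_i c_i \v_i$ with $c_i \in \mathbb{N}$ and $\sum_i c_i = k$, which is exactly a convex combination $\sum_i (c_i/k) \v_i$ of the $\v_i$ with \emph{rational} weights whose common denominator divides $k$. Hence $\V_k(\rho) \subseteq \mathrm{conv}[\V(\rho)]$ for every $k$, and since the convex hull of a finite set is closed, any limit point of $\bigcup_k \V_k(\rho)$ also lies in $\mathrm{conv}[\V(\rho)]$; this gives the inclusion $\V_\infty(\rho) \subseteq \mathrm{conv}[\V(\rho)]$ once one has pinned down the precise meaning of $\V_\infty(\rho)$ from \appref{thm:convex} (presumably the closure of $\bigcup_k \V_k(\rho)$, or equivalently the set of limits of sequences $\v^{(k)} \in \V_k(\rho)$).

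For the reverse inclusion, I would take an arbitrary point $\v = \sum_{i=1}^d \lambda_i \v_i$ with $\lambda_i \ge 0$ and $\sum_i \lambda_i = 1$, and approximate it by points of $\V_k(\rho)$. The natural choice is $c_i^{(k)} := \lfloor k \lambda_i \rfloor$, adjusted on a few indices so that the $c_i^{(k)}$ are non-negative integers summing exactly to $k$ (the deficit $k - \sum_i \lfloor k\lambda_i\rfloor$ is at most $d-1$, so distributing it costs $O(d/k)$ in each weight). Then $\frac{1}{k}\sum_i c_i^{(k)} \v_i \to \v$ as $k \to \infty$ because each weight $c_i^{(k)}/k \to \lambda_i$ and the $\v_i$ are fixed. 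This shows $\v \in \V_\infty(\rho)$, giving $\mathrm{conv}[\V(\rho)] \subseteq \V_\infty(\rho)$ and completing the identity. It is worth noting explicitly that $[\rho,H]=0$ is what makes $\V(\rho)$ well-defined (so that $H$ and $\rho$ are simultaneously diagonal and the pairs $(\epsilon_i,s_i)$ are unambiguous), and that the tensor-power structure of $H^{\otimes k}$-eigenvalues and $\rho^{\otimes k}$-eigenvalues is exactly what produces the multinomial/convex-combination form in \eqref{eq:vk}.

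The main obstacle I anticipate is not the convex-geometry content, which is essentially the standard fact that finitely-supported rational convex combinations are dense in a simplex, but rather aligning the argument with whatever formal definition of $\V_\infty(\rho)$ is adopted in \appref{thm:convex}: one must check that the topology/closure used there makes the two inclusions above literally correct (for instance, whether $\V_\infty$ is defined as a Hausdorff limit of the finite sets, as a liminf/limsup of sequences, or directly as a closure), and handle the bookkeeping of the rounding correction so that all $c_i^{(k)}$ stay in $\mathbb{N}$ with the right sum. A secondary, purely cosmetic point is the degenerate-spectrum case deferred to \appref{appendix:technical details}, which only affects how the $\v_i$ are listed (with multiplicity) and does not change the hull.
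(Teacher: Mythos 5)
Your proposal is correct and follows essentially the same route as the paper's proof: both directions use the observation that $\V_k(\rho)$ consists of rational convex combinations (so limits land in the closed hull of the finite set $\V(\rho)$), and the reverse inclusion uses the same floor-function approximation $c_i^{(k)} \approx \lfloor k\lambda_i\rfloor$ with a correction to restore the sum to $k$ (the paper dumps the whole correction into the last coefficient, while you distribute it; the paper also uses a more elaborate decimal-expansion bound where your direct $O(1/k)$ estimate suffices). The paper's formal definition of $\V_\infty(\rho)$ is indeed the set of limits of sequences $\x_k \in \V_k(\rho)$, so your argument goes through as written.
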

The proof of this is provided in~\appref{thm:convex}. Some examples are pictured in~\figref{fig:example ensembles}. The significance of this is that in the asymptotic limit the structure of passive states is extremely simple -- each passive state in the many-copy limit is described by a convex polygon in the plane that is easily obtained from the single-copy ensemble.

\subsection{The geometric derivation of the Gibbs state} \label{section:gibbs derivation}

We now state the following elementary planar geometry result, which is needed for our Gibbs state derivation.
\begin{lemma}\label{lem:colinearity} 
A convex set $\mathcal{C}$ is totally ordered in $\mathbb{R}^2$ if and only if all its points are colinear with a non-negative slope.
\label{lem: C totally order iff colinear non-neg}
\end{lemma}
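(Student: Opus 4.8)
The plan is to prove both directions of the biconditional by exploiting the fact that a convex set $\mathcal{C}$ which is totally ordered cannot have any ``width'' transverse to a line. I would begin with the easy direction: if all points of $\mathcal{C}$ lie on a common line of non-negative slope, then parametrising the line as $\{(a + t, b + mt) : t \in \mathbb{R}\}$ with $m \geq 0$, any two points correspond to parameters $t_i, t_j$, and $t_i \leq t_j$ immediately gives both $a + t_i \leq a + t_j$ and $b + m t_i \leq b + m t_j$, so the points are comparable under $\leq$. (The degenerate case of a vertical line, slope ``$+\infty$'', or a horizontal line is handled the same way, or can be excluded by noting colinearity with non-negative finite slope covers the single-point and horizontal cases directly.) Hence $\mathcal{C}$ is totally ordered.

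For the converse, suppose $\mathcal{C}$ is convex and totally ordered but, for contradiction, not contained in a single line of non-negative slope. First I would dispose of the sign of the slope: if $\mathcal{C}$ contained two distinct points $\r_i, \r_j$ with $x_i < x_j$ but $y_i > y_j$, these are incomparable, contradicting total ordering; so whenever two points differ in their first coordinate, the second coordinate is (weakly) monotonic in the same direction — this already forces any ``slope between points'' to be non-negative. The remaining task is to show the points are genuinely colinear. Here I would argue by taking three points $\r_1, \r_2, \r_3 \in \mathcal{C}$ that are \emph{not} colinear and deriving a contradiction using convexity: the convex hull of these three points is a non-degenerate triangle $T \subseteq \mathcal{C}$, and I claim any non-degenerate triangle contains two incomparable points. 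Indeed, a non-degenerate triangle has positive area, hence nonempty interior; pick an interior point $\p$ and a sufficiently small square neighbourhood around it lying inside $T$ — the upper-left and lower-right corners of that square are both in $\mathcal{C}$ and are incomparable under $\leq$, contradicting total ordering. Therefore no three points of $\mathcal{C}$ can fail to be colinear, so $\mathcal{C}$ lies on a single line, which by the previous paragraph must have non-negative slope.

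The main obstacle — really the only subtlety — is handling the orientation/slope bookkeeping cleanly and making sure the degenerate configurations (single point, horizontal segment, vertical segment, two coincident vertices) are not silently excluded or double-counted; the convention in~\defref{...} uses non-strict inequalities, so a horizontal or vertical segment is fine and its ``non-negative slope'' should be read inclusively (allowing $0$ and, if one wishes, $+\infty$). I would phrase the argument so that the core mechanism — \emph{a convex set with two-dimensional extent contains an axis-aligned square, whose off-diagonal corners are incomparable} — does all the work, and the slope-sign observation is relegated to a one-line remark. With this structure the proof is short and fully elementary, matching the ``elementary planar geometry result'' billing.
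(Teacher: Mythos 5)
Your proof is correct and takes essentially the same route as the paper: both argue that three non-colinear points would force a two-dimensional region into $\mathcal{C}$ by convexity, which necessarily contains incomparable points, and both dispatch the non-negative-slope condition and the converse direction by direct inspection. The only (cosmetic) difference is that you exhibit the incomparable pair as off-diagonal corners of an axis-aligned square, whereas the paper inscribes a circle in the triangle — your version is, if anything, slightly more explicit.
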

\begin{proof} Consider a totally ordered convex set $\mathcal{C}$ in $\mathbb{R}^2$, depicted in~\figref{fig:colinear proof}. 
Suppose there exist three points $\x_1, \x_2, \x_3$ in $\mathcal{C}$ that are not colinear. 
Convexity implies that the triangular region formed by their convex hull also lies in $\mathcal{C}$, and in particular any circle in this triangle is also in this set. 
However, it is impossible for a circle in the plane to be totally ordered, contradicting the initial assumption. 
The only way to avoid this contradiction is to enforce that all points of $\mathcal{C}$ are colinear. 
Conversely, a convex set of colinear points in $\mathbb{R}^2$ with non-negative slope is, by inspection, clearly totally ordered, which completes the proof.
\end{proof}

\begin{figure}[b]
\centering
\includegraphics[scale=0.3]{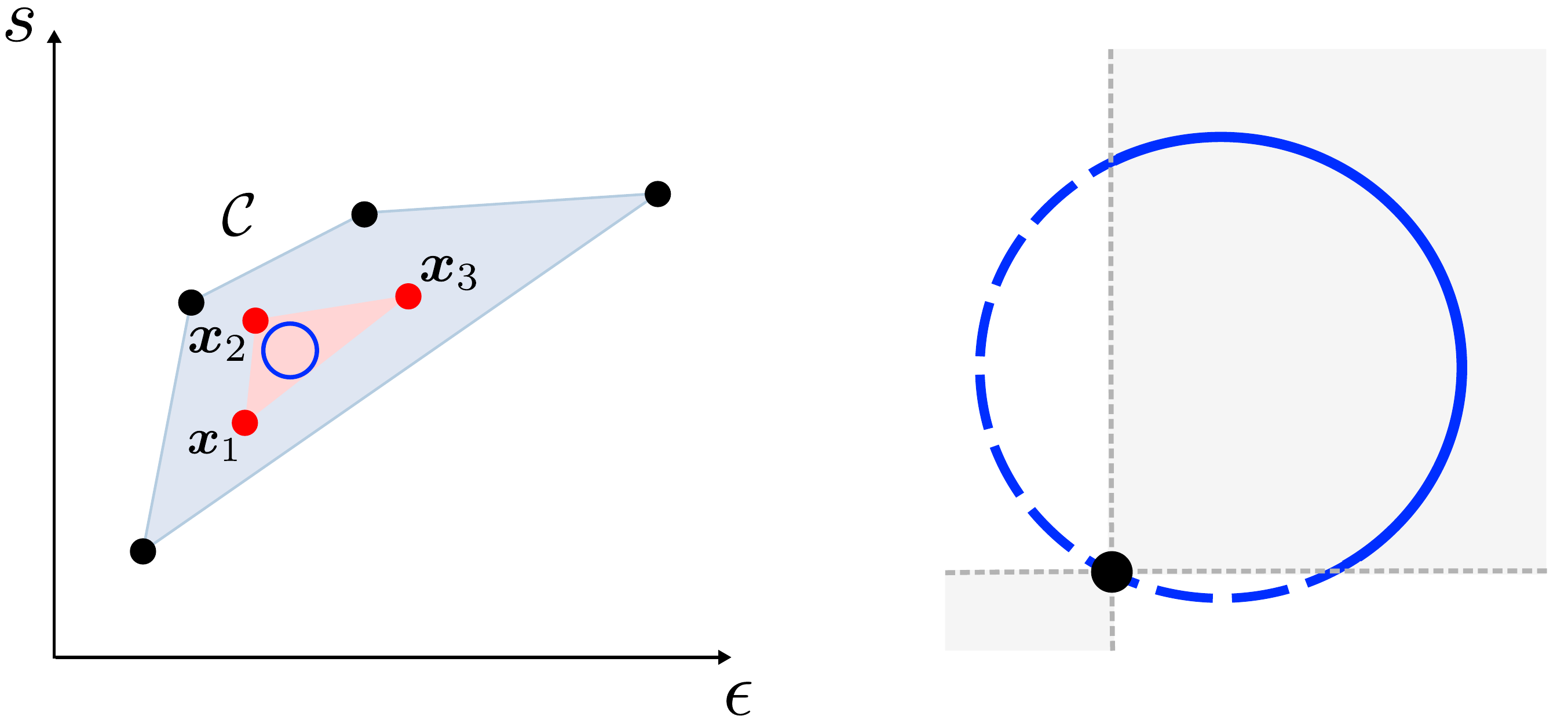}
\caption{\textbf{Proof that the points of a totally ordered convex set are colinear.} Three points in $\mathcal{C}$ are selected and a circle is chosen from inside their convex hull. As the right-hand image highlights, the dashed points on the circle are not totally ordered with the black point, demonstrating that a circle can never be totally ordered in $\mathbb{R}^2$.}
\label{fig:colinear proof}
\end{figure}

We now present the geometric proof of the following well-known theorem~\cite{pusz1978,Lenard1978}.

\begin{theorem}[Complete passivity of Gibbs states]
A state $\rho$ of a $d$-dimensional quantum system with Hamiltonian $H$ is completely passive if and only if it is a Gibbs state $\rho = e^{-\beta H}/Z_\beta$ for some $\beta \geq 0$.
\label{theorem:Gibbs}
\end{theorem}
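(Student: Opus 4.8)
The plan is to leverage the geometric machinery built up in the preceding lemmas so that the theorem reduces to an elementary statement about collinear points in the plane. First I would note that complete passivity means $\rho$ is $k$-passive for every $k \geq 1$, which by \lemref{lem:kpassivity} is equivalent to $[\rho,H]=0$ together with $\V_k(\rho)$ being totally ordered for all $k$. Since total orderedness of $\V_k(\rho)$ for all $k$ is inherited in the limit (and conversely total orderedness of the asymptotic ensemble forces it at every finite $k$, as each $\V_k(\rho)\subseteq\V_\infty(\rho)$), the condition is equivalent to $\V_\infty(\rho)$ being totally ordered. By \lemref{lem:convV}, $\V_\infty(\rho)=\mathrm{conv}[\V(\rho)]$ is a convex set, so \lemref{lem:colinearity} applies: $\V_\infty(\rho)$ is totally ordered if and only if all points of $\V(\rho)$ are collinear with non-negative slope.

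Next I would translate the collinearity condition back into a statement about the spectrum. Collinearity of the points $\v_i = (\epsilon_i, s_i) = (\epsilon_i, -\log p_i)$ with a common slope $\beta \geq 0$ means there is a constant $c$ with $-\log p_i = \beta \epsilon_i + c$ for every $i$, i.e. $p_i = e^{-c} e^{-\beta \epsilon_i}$. Imposing normalization $\sum_i p_i = 1$ fixes $e^{-c} = 1/Z_\beta$ with $Z_\beta = \sum_i e^{-\beta\epsilon_i}$, so $\rho = e^{-\beta H}/Z_\beta$ is precisely the Gibbs state at inverse temperature $\beta$. The sign constraint $\beta \geq 0$ is exactly what guarantees the anti-ordering of populations against energies, consistent with passivity. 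For the converse direction, a Gibbs state manifestly satisfies $[\rho,H]=0$ and has $\epsilon$-$s$ ensemble points lying on the line $s = \beta\epsilon + \log Z_\beta$ of slope $\beta \geq 0$, hence its convex hull is a line segment, trivially totally ordered, so by \lemref{lem:kpassivity} it is $k$-passive for all $k$.

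There is one edge case worth handling explicitly: the degenerate Hamiltonian case and the boundary value $\beta = 0$. When $\beta = 0$ the Gibbs state is the maximally mixed state $\iden/d$, whose $\epsilon$-$s$ ensemble points all share the same $s$-coordinate; these lie on a horizontal line, which has non-negative (zero) slope, so the argument goes through unchanged. If $H$ has degeneracies, passivity (via \thmref{theorem: passivity conditions}) only constrains the ordering of populations relative to distinct energy levels, and within a degenerate block any distribution of populations summing correctly is allowed at the single-copy level; however complete passivity forces collinearity through the asymptotic ensemble, which in turn forces $p_i$ to depend only on $\epsilon_i$, recovering the Gibbs form. I would relegate the careful treatment of degeneracies to the appendix referenced earlier (\appref{appendix:technical details}) and state the main argument for the non-degenerate case in the body.

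I expect the main obstacle to be the interchange of limits in the step "$k$-passive for all $k$ $\iff$ $\V_\infty(\rho)$ totally ordered." One must check that no non-collinear configuration of $\V(\rho)$ can have $\V_k(\rho)$ totally ordered for every finite $k$: this follows because if three points of $\V(\rho)$ form a genuine triangle, then for large enough $k$ the lattice of convex combinations $\frac{1}{k}\sum c_i \v_i$ populates the interior densely enough to produce a pair of incomparable points, but making this precise requires a short quantitative argument (essentially that the triangle contains an open set, and open sets in $\mathbb{R}^2$ are never totally ordered, combined with the density of $\V_k(\rho)$ in $\mathrm{conv}[\V(\rho)]$ as $k\to\infty$). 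This is exactly the content that \lemref{lem:convV} packages for us, so in the final write-up the obstacle is already dispatched by invoking that lemma — the theorem then follows in a few lines.
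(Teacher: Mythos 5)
Your proposal is correct and follows essentially the same route as the paper: reduce complete passivity to total orderedness of $\V_\infty(\rho)$ via \lemref{lem:kpassivity} and \lemref{lem:convV}, invoke \lemref{lem:colinearity} to get collinearity with non-negative slope, and read off the Gibbs form $p_i = e^{-\beta\epsilon_i}/Z_\beta$. Your added care about the limit interchange and the $\beta=0$ and degenerate cases is sound but does not change the argument's structure.
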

The proof is now obvious from the geometry of the situation.
\begin{proof}
Suppose a state $\rho$ of a $d$-dimensional system is completely passive. By definition, it is passive for all $k \ge 1$, thus $\V_k (\rho)$ must be totally ordered for all $k\ge1$. This implies that $\V_\infty (\rho)$ must also be a totally ordered set. However, $\V_\infty(\rho)$ is convex, therefore~\lemref{lem: C totally order iff colinear non-neg} implies it must be a line segment with non-negative slope.

Since $\V(\rho) \subset \V_\infty (\rho)$, each $\v_i = (\epsilon_i, s_i) \in \V(\rho)$ can be written as $ (\epsilon_i, s_i) = (\epsilon_i , \beta \epsilon_i + \log Z)$ for some constants $Z$ and $\beta \ge 0$. This implies that $p_i =e^{-\beta \epsilon_i}/Z$, thus $\rho$ is a Gibbs state at some inverse temperature $\beta \ge 0$.

Conversely, if $\rho$ is a Gibbs state with $\beta \geq 0$ then it is clear that $\V(\rho)$ is a set of co-linear points with non-negative slope, and thus the convex hull $\V_\infty(\rho)$ is a totally ordered line segment with non-negative slope. The Gibbs state is therefore completely passive, which completes the proof.
\end{proof}

Note that this proof makes it immediately clear that all passive qubit states ($d=2$) are also completely passive, because the two points of their $\epsilon$-$s$ ensemble have to be colinear. It is only for $d \geq 3$ that there is a separation between passivity and complete passivity. We can re-state the condition for complete passivity in a purely geometric form.
\begin{corollary} A passive state $\rho$ of a $d$-dimensional quantum system with Hamiltonian $H$ is completely passive / Gibbsian if and only if $\V(\rho)$ is totally ordered and the area of $\V_\infty(\rho)$ is zero.
\end{corollary}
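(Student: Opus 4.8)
The plan is to show the two conditions in the corollary---$\V(\rho)$ totally ordered, and $\mathrm{area}(\V_\infty(\rho)) = 0$---are together equivalent to the characterisation of complete passivity already obtained in \thmref{theorem:Gibbs}, so that essentially no new work is required beyond reading off the geometry. First I would invoke \lemref{theorem: passivity conditions total order}: the hypothesis that $\rho$ is passive is, for a state commuting with $H$, the same as saying $\V(\rho)$ is totally ordered, so the first clause of the corollary is just a restatement of the standing assumption and carries no extra content; the real claim is that, given passivity, $\mathrm{area}(\V_\infty(\rho)) = 0$ iff $\rho$ is Gibbsian.

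Next I would prove the forward direction: if $\rho$ is completely passive, then by \thmref{theorem:Gibbs} it is a Gibbs state, so $\V(\rho)$ is a set of colinear points (all lying on the line $s = \beta \epsilon + \log Z$), and hence by \lemref{lem:convV} its asymptotic ensemble $\V_\infty(\rho) = \mathrm{conv}[\V(\rho)]$ is a line segment, which has zero area. For the converse I would argue: suppose $\rho$ is passive and $\mathrm{area}(\V_\infty(\rho)) = 0$. Since $\V_\infty(\rho)$ is a convex subset of $\mathbb{R}^2$ with empty interior (zero area), it must be contained in an affine line, so all points of $\V(\rho) \subseteq \V_\infty(\rho)$ are colinear. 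Passivity forces the total ordering of $\V(\rho)$, which by \lemref{lem: C totally order iff colinear non-neg} (or directly) pins the slope of that line to be non-negative; writing the line as $s = \beta\epsilon + c$ with $\beta \ge 0$ gives $p_i = e^{-\beta\epsilon_i}/Z$ with $Z = e^{c}$, i.e.\ $\rho$ is a Gibbs state with $\beta \ge 0$, hence completely passive by \thmref{theorem:Gibbs}.

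The only mildly delicate point---and the step I would flag as the main obstacle---is the implication ``zero area $\Rightarrow$ colinear.'' One must be careful that this uses convexity: an arbitrary finite point set can have colinear-free configurations whose ``area'' (in the sense of a degenerate hull) vanishes only trivially, so the argument really runs through $\V_\infty(\rho) = \mathrm{conv}[\V(\rho)]$ (\lemref{lem:convV}) and the elementary fact that a planar convex set of zero Lebesgue measure lies in a line---equivalently, if three points of $\V(\rho)$ were not colinear, their convex hull would be a genuine triangle of positive area sitting inside $\V_\infty(\rho)$, contradicting $\mathrm{area}(\V_\infty(\rho)) = 0$. This is exactly the contrapositive skeleton already used in the proof of \lemref{lem: C totally order iff colinear non-neg}, so it can be cited rather than re-derived. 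Everything else is bookkeeping, and the corollary follows immediately.
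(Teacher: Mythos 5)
Your proposal is correct and follows essentially the route the paper intends: the corollary is read off from \thmref{theorem:Gibbs} together with \lemref{lem:convV} and \lemref{lem: C totally order iff colinear non-neg}, with the zero-area condition standing in for colinearity of the convex hull and the total ordering of $\V(\rho)$ supplying the non-negative slope. Your flagged step (zero area of a planar convex set implies colinearity, via the positive-area triangle contradiction) is exactly the right justification and mirrors the geometric argument already used in the paper's colinearity lemma.
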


This connects with the derivation in~\cite{Skrzypczyk2015_virtualtemps} that uses \textit{virtual temperatures} which are defined as
\begin{equation}
\beta_{i,j} := \frac{s_j - s_i}{\epsilon_j - \epsilon_i} = \frac{1}{\epsilon_j - \epsilon_i} \log \left( \frac{p_i}{p_j} \right) \ ,
\end{equation}
and correspond to the gradient between the two points $\v_i,\v_j \in \V(\rho)$.
The argument in~\cite{Skrzypczyk2015_virtualtemps} demonstrates that a negative virtual temperature is induced when enough copies of a state with multiple different virtual temperatures are considered, hence work can be extracted.
This is equivalent to our statement that the $\epsilon$-$s$ ensemble is no longer totally ordered, however there are some key differences in our new derivation.
For instance, our derivation considers the state as a whole, rather than focusing on individual pairs of virtual temperatures.
Moreover, our approach highlights the crucial role played by the boundaries of the $\epsilon$-$s$ ensemble, while~\cite{Skrzypczyk2015_virtualtemps} does not make this distinction.
This further motivates consideration of the asymptotic ensemble, where the convex hull of the $\epsilon$-$s$ ensemble is taken.
Finally we note that virtual temperatures are undefined for degenerate energy levels, while our approach incorporates these features in the same framework, as discussed in~\appref{app:degeneracy}. 
It is pleasing nonetheless that the structure of virtual temperatures naturally emerges in our framework.

In this geometric perspective, the points of the $\epsilon$-$s$ ensemble for a completely passive state are colinear.
The state has a single virtual temperature given by the gradient of this line, naturally interpreted as inverse temperature $\beta$.
 While a completely passive state is fully specified by a single inverse temperature $\beta$, other passive states are described by a set $\{ \beta_{i,j} \}$. This suggests a physical intuition as to why energy cannot be extracted from a completely passive state: if there are different virtual temperatures, a heat engine can be operated between them and work can be extracted~\cite{SparaciariJennings2017}.

Completely passive states of different systems at the same temperature will remain completely passive when combined. This can be immediately seen in our geometric description -- points on the same line are closed under vector addition. This returns the usual thermodynamic temperature. However, the notion arising from complete passivity has the advantage that it applies to individual systems. Even an individual qubit in a completely passive state can be sensibly ascribed an inverse temperature $\beta$.


\section{Macroscopic non-equilibrium state variables.} \label{section:state variables}


The limit of many identical copies of a quantum state can be viewed as the macroscopic limit where we are interested in the physics of some large system in a state $\rho_{\mbox{\tiny tot}}$ which is well described by a small list of intrinsic variables. 
Under such circumstances we can write $\rho_{\mbox{\tiny tot}} \approx \rho^{\otimes n}$, where $\rho$ is some $d$-dimensional density matrix encoding the intrinsic variables and $n \gg 1$ corresponds to the size of the total system.
In traditional equilibrium thermodynamics the additive variables of internal energy $E$ and thermodynamic entropy $S$ are distinguished and form the basis of the resulting theory \cite{Callen1985}. 
However, Sparaciari \textit{et al.}~\cite{SparaciariFritz2017} have recently shown that any two states $\rho$ and $\sigma$ have the same values of $E$ and $S$ at the single copy level if and only if they are `asymptotically equivalent' in the $n \rightarrow \infty$ limit, irregardless of how far they are from equilbrium. 
Here asymptotic equivalence means that $\rho^{\otimes n}$ can be converted into $\sigma^{\otimes n}$ as $n\rightarrow \infty$ under energy conserving unitaries and involving an auxiliary system whose size is sublinear in $n$.
This result implies that macroscopic thermodynamic states specified by the two values of $(E,S)$ can be placed in a one-to-one correspondence with an equivalence class of microscopic variables $\{\rho, \sigma, \cdots\}$, defined by the above notion of interconvertibility.

It is readily checked that for fixed values of $(E,S)$ there are multiple different compatible states $\rho$, $\sigma, \dots$ that have this average energy and entropy, for any fixed dimension $d >3$. 
Each of these single-copy states will have a corresponding set of \emph{discrete} points $\V(\rho)$ which has no area, being just a set of points.
The macroscopic limit $n \rightarrow \infty$ of many copies will be described by $\V_\infty(\rho)$, which is a convex polyhedron in the plane with some area. 
It is important to emphasize that the convex shape $\V_\infty(\rho)$ does not describe a \emph{single copy} of a state $\rho$, but instead infinitely many copies of it. 
In other words $\V_\infty(\rho)$ is describing `$ \lim_{n \rightarrow \infty} \rho^{\otimes n}$', which at the level of density matrices leads to highly problematic non-separable Hilbert spaces, however for passive states we have shown that an exact specification is provided by a simple geometric region in the 2-dimensional plane.

States in the same equivalence class $\{\rho$, $\sigma, \dots\}$ will have different asymptotic ensembles $\V_\infty (\rho), \V_\infty (\sigma), \dots$ and these will in general have different areas in the plane. However from our earlier analysis, the unique geometric feature of equilibrium Gibbs states (and the emergence of a notion of temperature) is that the asymptotic ensemble $\V_\infty$ becomes a line, with no area. This means that, despite the area of $\V_\infty(\rho)$ varying over an equivalence class of microscopic variables $\rho$, for every macroscopic non-equilibrium state $(E,S)$ there must exist an `irreducible area' variable that provides a quantification of how much the system deviates from equilibrium, under exactly the same levels of discrimination as occurring for macroscopic equilibrium thermodynamics. It is the aim of this section to study this `irreducible area' and to connect it with existing measures of athermality. This prompts the following question:
\begin{center}
\textit{Can we supplement the traditional macroscopic equilibrium variables with a sensible measure of the `irreducible area' of $\V_\infty$ that quantifies the degree to which the system deviates from equilibrium?}
\end{center}
Two important points arise here. Firstly, because we are out of equilibrium such a variable is not considered relative to any one particular equilibrium state, but relative to the complete set of thermal equilibrium states. Secondly, we do not want to exploit the global structure of the manifold of equilibrium states in order to define this variable, but instead wish that it is defined purely from the statistics of the individual state under consideration. In the coming sections we provide an answer to the above question that satisfies these criteria. We first begin by setting up some basic notation.

\subsection{Asymptotic equivalence and the energy-entropy diagram}
Given a single-copy passive state $\rho$ we now interpret $\V(\rho)$ as a random variable that takes values $(\epsilon_k , s_k)$ in $\mathbb{R}^2$ and which has expectation value 
\begin{equation}
    \mathbb{E}\big[ \V(\rho) \big] = \big( E(\rho),S(\rho) \big),
\label{eq:Expectation of V}
\end{equation}
where $E(\rho)\coloneqq \tr[\rho H]$ is the average energy and $S(\rho)\coloneqq-\tr[\rho \log \rho]$ is the von Neumann entropy of the state $\rho$.

Asymptotic thermodynamics describes the limit where the number of non-interacting copies of the system tends to infinity. 
In this limit, we can define a notion of asymptotic equivalence between states as introduced by~\cite{SparaciariFritz2017}.
\begin{definition}\label{def:asympeq}
	Two $d$-dimensional, $k$-qudit states $\rho$ and $\sigma$ with Hamiltonian $H$ are called asymptotically equivalent if there exists an ancilla system $A$ of $O\left(\sqrt{k\log{k}}\right)$ many qudits whose Hamiltonian $H_A$ satisfies $\| H_A \| \leq O\left(k^{2/3}\right)$ with state $\eta$ and unitary $U$ such that
	\begin{equation}
		\left\| {\rm{tr}_A}\left[U(\rho^{\otimes k}\otimes\eta)U^\dagger\right] - \sigma^{\otimes k} \right\|_1\xrightarrow{k\rightarrow\infty} 0,
	\end{equation}
	where $[U, H+H_A]=0$ and $\|X\|_1 \coloneqq {\rm{tr}}\left[\sqrt{X^\dagger X}\right]$ denotes the trace norm.
\end{definition}
Informally, we call two quantum states $\rho$ and $\sigma$ \textit{asymptotically equivalent}, and write $\rho \asymp \sigma$, if we can transform between the two states via energy preserving unitaries in combination with some ancilliary system which is sublinear in size.
The sublinearity of the ancilla size guarantees that, in the asymptotic limit, the amount of energy and entropy that can be transferred between the ancilla and each copy of the system tends to zero, meaning its per-copy contribution can be safely neglected. Asymptotic equivalence has been shown to pick out the two macroscopic variables identified in \eqref{eq:Expectation of V} as the two relevant quantities to describe the entire state space when the asymptotic limit is taken. This is expressed neatly in the following theorem, which was proven in~\cite{SparaciariFritz2017}.
\begin{theorem}
Consider two arbitrary single-copy states $\rho$ and $\sigma$ of a $d$-dimensional quantum system with fixed Hamiltonian $H$. Then the following equivalence holds
    \begin{equation}
      \mathbb{E} \big[ \V(\rho) \big] = \mathbb{E} \big[ \V(\sigma) \big] \iff \rho \asymp \sigma ,
    \end{equation}
    where $\rho \asymp \sigma$ denotes asymptotic equivalence in the many-copy limit.
\label{theorem: carlo asymptotic equiv.}
\end{theorem}

Put simply, in the many-copy limit, any two states can be interconverted via unitaries that conserve energy globally if and only if they have the same average energy $E$ and entropy $S$ at the single-copy level. This allows us to define an equivalence class of states $\D_{(E,S)}$ at the single-copy level for each $(E,S)$ pairing.
\begin{definition}[Single-copy equivalence classes]\label{def:esdiagram_point}
Given a quantum system with Hamiltonian $H$, we define $\D_{(E,S)}$ as the states of the system with average energy and entropy pair $(E,S)$:
\begin{equation}
    \D_{(E,S)} \coloneqq \left\{ \rho \ : \big( \mathrm{tr} [H\rho], -\mathrm{tr} [\rho\log \rho] \big)= (E, S) \right\}.
\end{equation}
\end{definition}
The union of all sets $\D_{(E,S)}$ spans the entire space of states $\D$. Furthermore, the set of points $(E,S)$ corresponding to physical states $\rho \in \D$ form a closed convex subset in $\mathbb{R}^2$, known as the energy-entropy diagram~\cite{SparaciariFritz2017}. It follows that under asymptotic equivalence, thermodynamics can be restricted to a $2$-dimensional setting. Returning to our ensemble description, \eqref{eq:Expectation of V} shows that the statistical average or `centre of mass' of all elements of a given ensemble $\mathcal{V}(\rho)$ corresponds to a single point $(E,S)$ on the energy-entropy ($E$-$S$) diagram. 

The key features of the energy-entropy diagram are illustrated in~\figref{fig:es_diagram} for a simple finite-dimensional system. The boundaries of the $E$-$S$ diagram are fully determined by the particular Hamiltonian of the system $H$. The space of states is bounded from above by the curve of Gibbs states with respect to $H$, i.e., those which maximise the entropy for a given energy. Quantum states that are not completely passive lie below the equilibrium curve. 
\begin{figure}[t]
\centering
    \includegraphics[width=0.5\textwidth]{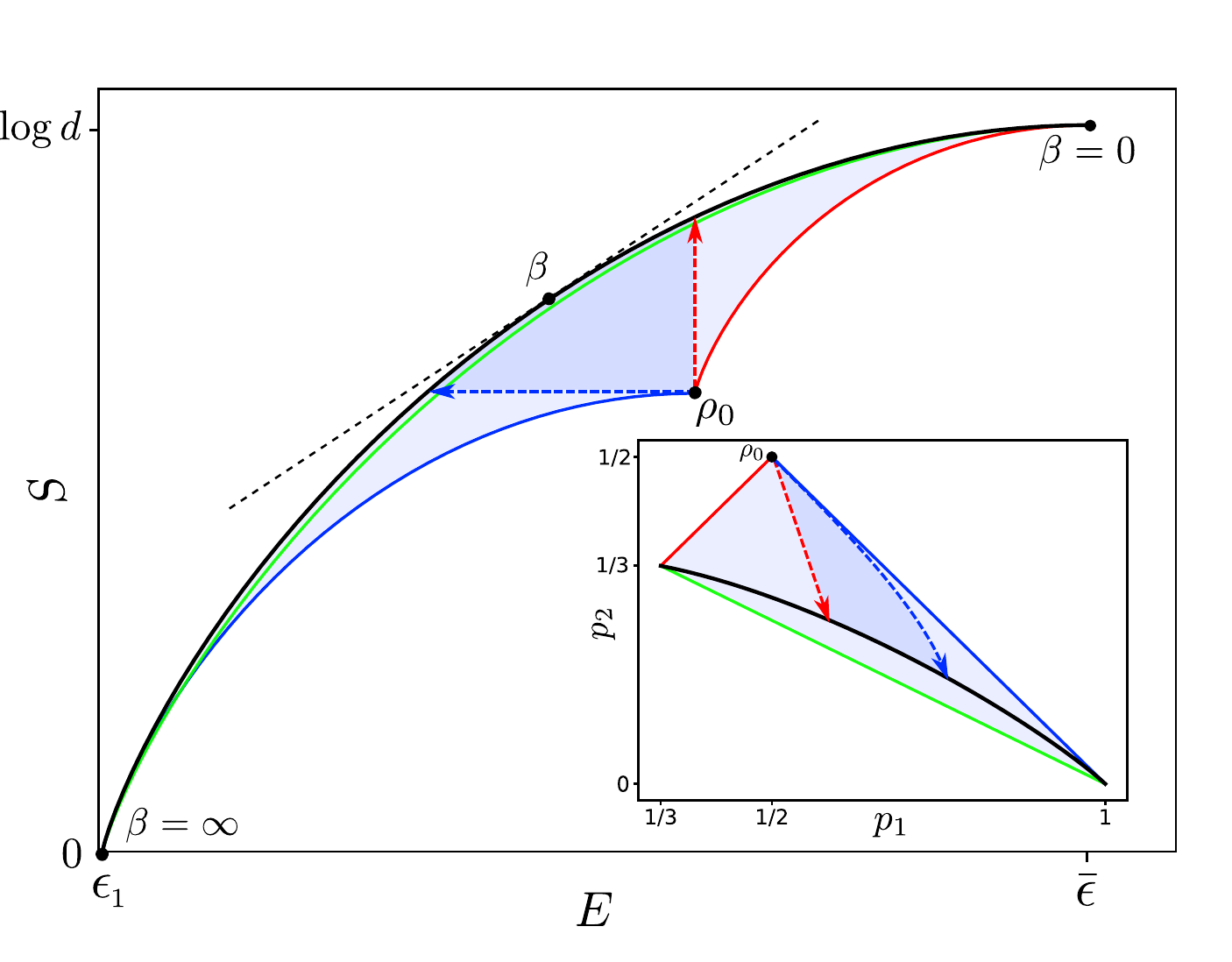}
    \caption{\textbf{Sketch of the $E$-$S$ diagram}. It shows the state space of a qutrit system, with Hamiltonian $H=\sum_{i=1}^3 \epsilon_i \ketbra{e_i}{e_i}$. All quantum states of this system are represented by a point $(E,S)$ lying on or below the curve of equilibrium states (black).
    The equilibrium curve runs from the point $(\epsilon_1,0)$ to $( \bar \epsilon, \log 3 )$, where $\bar \epsilon$ is the average energy of the maximally mixed state $\mathbbm{1}_3/3$, while its slope at any point is given by a unique inverse temperature $\beta$.
    The shaded region indicates the subset of passive qutrit states. For the passive state $\rho_0$ (black point), the isentropic (blue horizontal) and isoenergetic (red vertical) trajectories are depicted, as well as all states attainable via activation trajectories (darker shaded region).
     The $E$-$S$ diagram can be mapped continuously onto the qutrit parameter space presented in the inset graph (bottom right), with all features of the diagram retained. Any point in the shaded region corresponds to a unique quantum state of the form $\rho = \sum_{i=1}^3 p_i \ketbra{e_i}{e_i}$.
    }\label{fig:es_diagram}
\end{figure}

\subsection{Area as a non-equilibrium variable} \label{section:athermality as a measure}

Asymptotic equivalence does not on its own describe a notion of thermal equilibrium. 
Therefore, as discussed, we want to supplement asymptotic equivalence with a notion of thermal equilibrium in a natural way. 
In the asymptotic many-copy limit, the ensemble $\V_{\infty}(\rho)$ is given by the convex hull of the single-copy ensemble $\mathcal{V}(\rho)$, and has an associated area $A(\rho)$ with $\rho$ being completely passive if and only if $\V_\infty(\rho)$ has zero area and non-negative slope.
This area will in general vary over the equivalence class of single-copy states $\D_{(E,S)}$ and therefore we must be careful in defining a measure that is well-defined on each equivalence class.

First, we need an explicit expression for the area variable. The area of $\V_\infty(\rho)$ is fully determined by the vertices of the convex hull, which by virtue of the relation between $\V_\infty(\rho)$ and $\V(\rho)$ are simply a subset of the $d$ points in $\V(\rho)$.
\begin{definition} 
For a $d$-dimensional quantum system in passive state $\rho$, the set of $n\le d$ vertices of the convex set $\V_\infty (\rho)$ is defined as
\begin{equation}
    \V_{\mathrm{vert}}(\rho) \coloneqq \mathrm{ext} \big[ \V_\infty (\rho) \big] = \{\v_1, \dots, \v_n\},
\end{equation}
where $\mathrm{ext}[\mathcal{C}]$ denotes the set of extremal points of the convex set $\mathcal{C}$ and the vertices $\v_1, \dots , \v_n$ are labelled clockwise modulo $n$ around the convex hull $\V_\infty(\rho)$.
\end{definition}
By the trapezium rule~\cite{trapezium_rule}, the area of $\V_\infty(\rho)$ is then
\begin{equation}\label{eq:area}
	A(\rho) = \frac{1}{2} \sum\limits_{k\bmod{n}}  s_k \ \Delta_{k-1,k+1} ,
\end{equation}
where the sum runs over the $n$ elements of $\V_{\mathrm{vert}}(\rho)$, and  $\Delta_{i,j}\coloneqq \epsilon_i -\epsilon_j$ is the energy level spacing between the corresponding elements.

For a general $d$-dimensional system, where $d>3$, the area $A(\rho)$ can vary as we range over all states $\rho$ in the equivalence class $\D_{(E,S)}$. However as discussed this area cannot decrease to zero since this would correspond to the state being an equilibrium state. We now define a canonical area $\Ac$ for a given equivalence class $\D_{(E,S)}$, which we shall call the \textit{geometric athermality}. 

\begin{definition}\label{def:area_canonical}
Consider a $d$-dimensional quantum system with fixed Hamiltonian $H$. The geometric athermality $\Ac$ for each $(E,S)$ pair is
\begin{equation}
\Ac(E,S) \coloneqq \inf_{\rho \in \D_{(E,S)}} \, A(\rho),
\end{equation}
where $A(\rho)$ is the area of $\mathcal{V}_\infty(\rho)$.
\end{definition}

If the only information we have about a system is its average energy and entropy $(E,S)$, the geometric athermality $\Ac$ is a function of $E$ and $S$ but provides new information about the underlying state of the system. More precisely, it tells us whether or not the system is in thermal equilibrium with respect to \emph{any} temperature. In other words, $\Ac$ is a witness of athermality. This motivates the introduction of geometric athermality $\Ac$ as our additional thermodynamic variable to supplement the asymptotic description $(E,S)$, forming the triple of numbers $(\Ac,E,S)$. 

In the remainder of this section, we show that this additional thermodynamic variable admits further operational interpretation by upgrading $\Ac$ from a witness to a measure of athermality. 
More precisely, we show that $\Ac$ is monotonically non-increasing under a set of physically motivated trajectories in the $E$-$S$ plane that bring the state of the system closer to the manifold of Gibbs states. 

We first clarify the set of physical trajectories in $E$-$S$ space that concern us here. 
We restrict our attention to transformations of passive states that lead to work extraction and require that no ordered energy is injected into the system. 
Such transformations must therefore never increase the average energy or decrease the entropy of the system, since trajectories without these restrictions could involve the implicit smuggling-in of work resources. 
Hence, any infinitesimal evolution of the state is restricted in the direction of the shaded region in~\figref{fig:es_diagram}. 
Inspired by the \textit{activation maps} introduced in~\cite{SparaciariJennings2017}, we call any state trajectory constructed from such infinitesimal transformations an \emph{activation trajectory}, because it extracts ergotropy from the asymptotic collection of passive states $\rho^{\otimes n}$, with $n \gg 1$.

\begin{definition}\label{def:activtraj}
	Any trajectory on the $E$-$S$ diagram is called an activation trajectory if and only if its tangent unit vector $\bm{u} \coloneqq (u_E, u_S)$ satisfies $u_E \leq 0$ and $u_S \geq 0$ at all points along the trajectory directed towards the manifold of equilibrium states.
\end{definition}

Activation trajectories can be constructed from isentropic and iso-energetic (at constant energy) thermalisations, represented by maps $\mathcal{E}(\rho) = (1-p) \rho + p \gamma$ with appropriate thermal state $\gamma$. Physically, such maps can be realised by a partial thermalisation e.g. by bringing the system into thermal contact with a large heat bath at the appropriate temperature for a sufficient duration. For an iso-energetic trajectory, $\gamma$ here is the thermal state with the same energy as the initial state $\rho$. For an isentropic trajectory, $\gamma$ is instead the thermal state with the same entropy as $\rho$. This latter isentropic process is similar to an adiabatic process, and can be implemented via the protocol in~\cite{SparaciariJennings2017} using sequential unitary interactions together with a large `weight' system, such that any entropy changes that go to zero in the asymptotic limit.

Any small transformation in the $E$-$S$ plane that is a convex combination of these two boundary cases will generate an activation trajectory. 
A sequence of such small transformations implements an arbitrary activation trajectory.
We refer to~\cite{SparaciariJennings2017} for more detail on the explicit construction of activation maps.

Equipped with the~\defref{def:activtraj}, we now present the following theorem that justifies the use of $\Ac$ as a sensible athermality measure.

\begin{theorem}[Monotonicity of $\Ac$]\label{thm:area_monotone}
For any $d$-dimensional quantum system with non-degenerate Hamiltonian $H$, the geometric athermality $\Ac$ of a passive state $\rho$ is monotonically non-increasing along all activation trajectories.
\end{theorem}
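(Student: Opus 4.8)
My plan is to push all of the real content into one planar-geometry lemma, with the thermodynamics reduced to bookkeeping.

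\emph{Reduction.} Since $\Ac$ is a function of the point $(E,S)$ alone and every activation trajectory stays inside $\{(E',S'):E'\le E,\ S'\ge S\}$ within the $E$-$S$ diagram, it suffices to show $\Ac(E_2,S_2)\le\Ac(E_1,S_1)$ whenever $E_2\le E_1$, $S_2\ge S_1$ lie in the diagram. Fix $\varepsilon>0$ and pick a passive $\rho$ with $\mathbb{E}[\V(\rho)]=(E_1,S_1)$ and $A(\rho)<\Ac(E_1,S_1)+\varepsilon$. Composing an iso-energetic partial thermalisation $\mathcal{E}(\rho)=(1-p)\rho+p\,\gamma$ towards the Gibbs state $\gamma$ of energy $E(\rho)$ (which raises $S$ at fixed $E$ up to the equilibrium value $S_\gamma(E_1)\ge S_2$) with an isentropic thermalisation as in~\cite{SparaciariJennings2017} (which lowers $E$ at fixed $S$ down to the equilibrium value $\le E_2$), one obtains a passive state $\rho'$ with $\mathbb{E}[\V(\rho')]=(E_2,S_2)$ while both moves stay inside the diagram. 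If each such elementary step does not increase the area then $A(\rho')\le A(\rho)$, so $\Ac(E_2,S_2)\le A(\rho')\le A(\rho)<\Ac(E_1,S_1)+\varepsilon$, and $\varepsilon\to 0$ finishes. These steps preserve passivity (a convex combination of two spectra each anti-ordered with $H$ is again anti-ordered, and the maps commute with $H$) and move $(E,S)$ in the stated direction (strict concavity of the von Neumann entropy), so only the area bound remains.

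\emph{The key lemma and its geometry.} For the iso-energetic step — the isentropic step being analogous once its protocol is unpacked — write $\gamma=e^{-\beta H}/Z$ with $E(\gamma)=E(\rho)$, so its $\epsilon$-$s$ points lie on a single line $\mathcal{L}$ of non-negative slope, and set $\sigma_i:=-\log((1-p)p_i+p\,q_i)$. Convexity of $t\mapsto-\log t$ gives $\sigma_i\le(1-p)s_i+p(\beta\epsilon_i+\log Z)$, i.e.\ each new point $(\epsilon_i,\sigma_i)$ lies on the vertical segment from $\v_i\in\V(\rho)$ to the corresponding point of $\mathcal{L}$, on the $\mathcal{L}$-side of the convex interpolant; thus $\v_i\mapsto(\epsilon_i,\sigma_i)$ is a vertical \emph{contraction of $\V(\rho)$ towards the line $\mathcal{L}$}. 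Since $(p_i)$ and $(q_i)$ are both non-increasing along the energy labels, one checks $\sigma_1\le\dots\le\sigma_d$ for all $p$, so the contraction is \emph{order-preserving} and $\V(\mathcal{E}(\rho))$ stays totally ordered (re-deriving passivity via~\lemref{theorem: passivity conditions total order}). The theorem then reduces to the planar claim: \emph{if a totally ordered finite set in $\mathbb{R}^2$ is deformed by moving each point vertically towards a fixed line of non-negative slope in a way that preserves the vertical order of the points, then the area of its convex hull does not increase}. One attacks this by cutting $\mathrm{conv}[\V(\rho)]$ along the chord between its extreme-left and extreme-right points (which remain extreme, by order-preservation) and showing the concave upper hull can only descend and the convex lower hull can only ascend, using that a point below $\mathcal{L}$ moving up towards $\mathcal{L}$ can never overtake one that began above it.

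\emph{Main obstacle.} The hard part is exactly this planar lemma: one must exclude the scenario where a point approaching $\mathcal{L}$ promotes a previously interior point to a hull vertex and momentarily enlarges the area — without order-preservation this genuinely happens (the hull can collapse to a segment and re-expand), so the compatibility of the orderings of $\rho$ and $\gamma$, both anti-ordered with $H$, is the essential ingredient, and a careful analysis of the hull combinatorics along the contraction is what must be supplied. Should this be awkward, a direct computation is available: differentiating the shoelace area~\eqref{eq:area} in $p$ using $\frac{d}{dp}\sigma_i=(p_i-q_i)/((1-p)p_i+p\,q_i)$ together with the fact that $\partial A/\partial\sigma_k$ is positive on upper-hull vertices and negative on lower-hull ones (with magnitude the horizontal span there), the anti-ordering of $(p_i)$ and $(q_i)$ should force $\frac{d}{dp}A((1-p)\rho+p\,\gamma)\le0$, which integrates to the lemma.
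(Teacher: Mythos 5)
Your high-level reduction is sound and parallels the paper's: it suffices to show $\Ac(E_2,S_2)\le\Ac(E_1,S_1)$ whenever $E_2\le E_1$ and $S_2\ge S_1$, by starting from a (near-)optimal $\rho\in\D_{(E_1,S_1)}$ and exhibiting a move to $(E_2,S_2)$ that does not increase the hull area. The gap is in the move. Your key planar lemma is false as stated. Take the line $y=0$ and the totally ordered set $\{(0,-10),(1,-10),(2,-0.1),(3,-0.1)\}$, whose convex hull is a parallelogram of area $9.9$. Move each point vertically towards the line, preserving the vertical order: $(0,-10)\mapsto(0,-10)$, $(1,-10)\mapsto(1,-0.05)$, $(2,-0.1)\mapsto(2,-0.05)$, $(3,-0.1)\mapsto(3,-0.05)$. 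The image is still totally ordered, but its hull is the triangle with vertices $(0,-10)$, $(1,-0.05)$, $(3,-0.05)$, of area $9.95>9.9$. So order preservation does not exclude an area increase; the obstruction is not the hull combinatorics you flag but the relative sizes of the displacements, and the "hard part" you defer is not merely awkward — in the generality you state it, it is not true.

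Whether the specific displacements $\sigma_i=-\log\bigl((1-p)p_i+p\,q_i\bigr)$ rescue the claim is precisely what would have to be proved. Writing $h_i=s_i-(\beta\epsilon_i+\log Z)$ for the height above the Gibbs line, your map sends $h\mapsto-\log\bigl(p+(1-p)e^{-h}\bigr)$, a single increasing concave contraction applied uniformly to all heights (points far below the line barely move, points far above are pulled almost onto it); this extra structure is plausibly sufficient, but it is a genuinely nontrivial planar statement and you have not supplied it. Your fallback via $dA/dp$ also does not close term by term: it needs every upper-hull vertex to lie above the equal-energy Gibbs line and every lower-hull vertex below it, which fails in general (the leftmost and rightmost points lie on both branches of the hull, and interior vertices can sit on either side of the line). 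The isentropic leg is a further loose end, since the mixture $(1-p)\rho+p\gamma$ is not isentropic at intermediate $p$ by strict concavity of the entropy. The paper avoids all of this by perturbing only three carefully chosen occupation probabilities (a ``qutrit deformation''), computing $\dt A$ exactly by inverting a $3\times3$ system, and giving a combinatorial case analysis showing that a triple with $\partial_E A\ge0$ and $\partial_S A\le0$ always exists; if you wish to keep your global-map route, the corrected planar lemma for this specific contraction is the piece you must actually prove.
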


\begin{figure}[t]
\centering
\includegraphics[width=0.5\textwidth]{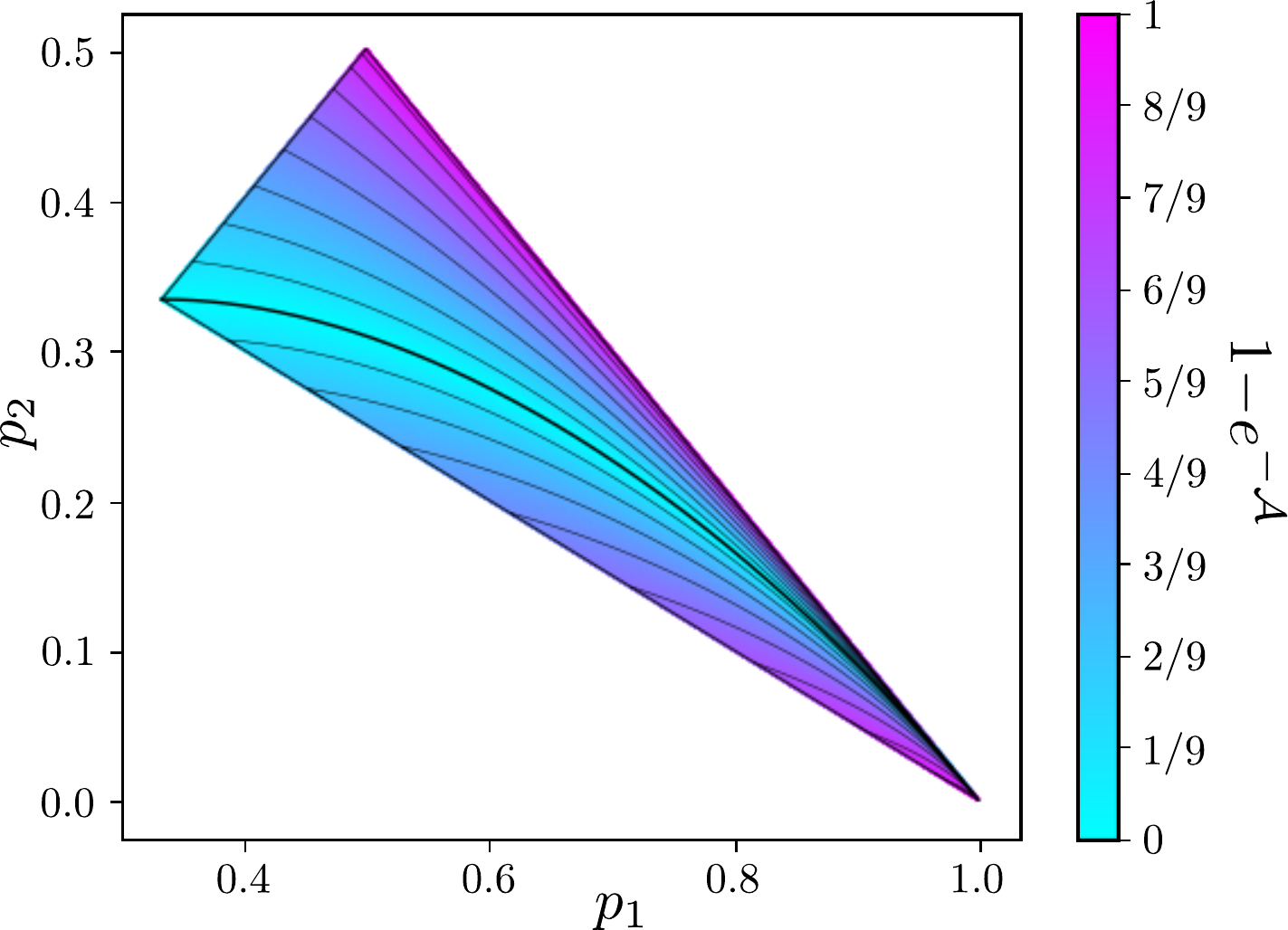}
\caption{\textbf{Contour plot for $\bm{\Ac}$.}
Shown is the contour plot for the geometric athermality within the passive region for a qutrit with energy spectrum $(0,1,2)$. 
The thick black contour in the light blue region corresponds to the set of thermal equilibrium states of the system. 
We rescale the value of $\Ac$ to $f(\Ac)$ in the colour scaling, where $f(x)\coloneqq 1-e^{-x}$ is a monotonically increasing function of $x$, so as to highlight the structure of the contours. 
It can be seen on the plot that geometric athermality is monotonically non-increasing towards the equilibrium curve.
}\label{fig:area_diagram}
\end{figure}

A full proof is given in~\appref{appendix:area_monotone}, but for illustrative purposes, we sketch here the proof for a qutrit system. The extension to higher dimensions involves a protocol that generalises the simple qutrit case.

Consider a passive, but not completely passive, state $\rho$ of a qutrit system with non-degenerate Hamiltonian $H$. For a qutrit system, there is a unique state\footnote{This follows from the fact that, for fixed $H$ we have three constraints on the spectrum of $\rho$ from conservation of average energy, entropy and normalisation and a 3-dimensional quantum system. Moreover, previous work has shown the $E$-$S$ diagram is a closed convex subset of $\mathbb{R}^2$~\cite{SparaciariFritz2017}, and therefore $|\D_{(E,S)}|>0$. This implies that the mapping $\D_{(E,S)} \rightarrow \D_{(p_1,p_2)}$ is bijective for qutrits (\figref{fig:es_diagram}).} corresponding to each pair $(E,S)$, i.e., $|\D_{(E,S)}|=1$. Therefore, the geometric athermality $\Ac$ can be simply computed using~\eqref{eq:area} without performing the optimization in~\defref{def:area_canonical}. Due to the non-linearity of entropy, we do not have an analytic expression for $\Ac$ in terms of $E$ and $S$. However we can find a general, explicit expression for the differential of the geometric athermality $\Ac$ for any passive qutrit state $\rho$:

\begin{equation}\label{eq:da_qutrit}
\begin{split}
    \dt  \Ac= \frac{1}{4 \Ac} \sum\limits_{k \bmod{3}} \frac{\Delta_{k+2, k+1}^2}{p_k} \left( \beta_{k+1,k+2} \dt E - \dt S \right),
\end{split}
\end{equation}
which is derived in~\appref{appendix:area_monotone}. The partial derivatives of this expression satisfy $\partial_E \Ac > 0$ and $\partial_S \Ac < 0$, therefore the geometric athermality is monotonically decreasing along any activation trajectory, because the directional derivative at any point is
\begin{equation}\label{Eq:directional_deriv}
	\nabla_{\bm{u}} \Ac = u_E\partial_E \Ac   + u_S\partial_S \Ac < 0.
\end{equation}
Since the geometric athermality is monotonically decreasing along activation trajectories for qutrits, we have shown it constitutes a non-trivial measure of athermality for such systems.
The monotonicity for qutrit systems is also visually depicted in~\figref{fig:area_diagram} over the entire passive region.

\subsection{Relation of geometric athermality to $\beta$--athermality, ergotropy and maximal heat extraction} \label{section:other measures}

The manifold of thermal equilibrium states is the boundary of quantum states as depicted in~\figref{fig:es_diagram} and correspond to zero geometric athermality.
However, $\Ac$ is not the only measure of non-equilibrium behaviour that attains a zero value on the equilibrium curve. 
For example, completely passive states are the only states from which no work can be extracted in the IID limit, as well as the only states that can attain zero $\beta$--athermality~\cite{SparaciariFritz2017}.
In this section, we consider how these quantities relate to the geometric athermality and, in particular, how the available work of a system in the asymptotic limit varies with $\Ac$.

The $\beta$--athermality of a state $\rho$ measures the `distance' on the $E$-$S$ plane between the state and a specific Gibbs state $\gamma_\beta$ at the point $\big( E(\gamma_\beta), S(\gamma_\beta) \big)$ on the equilibrium curve. 
It is given by the relative entropy between the two states,
\begin{equation}\label{eq:ab}
	\Ab(\rho) \coloneqq  S_\mathrm{rel}(\rho || \gamma_{\beta}) = \beta (F(\rho) - F(\gamma_\beta)),
\end{equation}
where $F(\sigma) \coloneqq E(\sigma) - \beta^{-1}S(\sigma)$ is the free energy of state $\sigma$.
This measure explicitly depends on the particular choice of Gibbs state $\gamma_\beta$. 
In particular, any Gibbs state $\gamma_{\beta'}$ with $\beta' \neq \beta$ has non-zero $\beta$--athermality.
This measure is relevant for the usual resource-theoretic approach to thermodynamics, where systems can freely equilibrate with a thermal reservoir at inverse temperature $\beta$~\cite{Brandao2013}.
However, in more general contexts, where all states on the equilibrium curve are considered equilibrium states, it is desirable to have a measure of athermality, like the geometric athermality, that does not have such a temperature dependence. 
In fact,~\thmref{lem:infab} shows that the quantity $\inf_{\beta \in [0, \infty)} \Ab(\rho)$ is simply the entropy difference of state $\rho$ along the activation trajectory that thermalises it with zero work output. 
It can therefore be considered as a modified measure of athermality that no longer depends on temperature and attains a zero value for all Gibbs states.

We can therefore turn our attention to investigating the relation of the geometric athermality with the available work per system in the asymptotic limit.
This is done by investigating certain simple activation trajectories on the $E$-$S$ plane.

An isentropic trajectory that brings the state closer to the equilibrium curve is equivalent to work extraction with no entropic losses, so that the maximal extractable work $W_{\mathrm{max}}$ of a state $\rho$ is 
\begin{equation}\label{eq:maxwork}
	W_{\mathrm{max}} \coloneqq E(\rho) - E(\gamma_{\betamax}),
\end{equation}
where $\betamax$ is chosen so that the entropy remains constant along the trajectory, $S(\gamma_{\betamax}) = S(\rho)$.
This upper bound on the extractable work, which one could call the \textit{asymptotic ergotropy}~\cite{Allahverdyan2004}, has been shown to be attainable in~\cite{Alicki2013} for the asymptotic regime of many identical copies $\rho^{\otimes n}$ of a state $\rho$.

Correspondingly, an isoenergetic trajectory that brings the state on the equilibrium curve is equivalent to no work extraction. 
We can rephrase this thermodynamically as the activation trajectory that outputs the maximal entropic gain. 
It is given by an expression that is dual to~\eqref{eq:maxwork} in the thermodynamic variables $(E,S)$,
\begin{equation}\label{eq:maxheat}
	\Delta S_{\mathrm{max}} \coloneqq S(\gamma_{\betamin}) - S(\rho),
\end{equation}
where $\betamin$ is now chosen so that the average energy remains constant, $E(\gamma_{\betamin}) = E(\rho)$. 
This bound is attainable simply by thermalising the system at inverse-temperature $\betamin$.
This bound can in fact be thought of as the \textit{minimal $\beta$--athermality} of state $\rho$.

\begin{theorem}\label{lem:infab}
	The maximal entropic gain for a system in a state $\rho$ along all possible activation trajectories is given by its minimal $\beta$--athermality,
	\begin{equation}
		\Delta S_{\mathrm{max}} = \Abmin(\rho) = \inf_{\beta \in [0, \infty)} \Ab(\rho).
	\end{equation}
\end{theorem}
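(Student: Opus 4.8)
The plan is to treat $\Ab(\rho)$ from \eqref{eq:ab} as a function of the single real parameter $\beta$, minimise it by elementary one‑variable calculus, and then match the optimiser to the maximal entropy gain allowed by the geometry of the $E$-$S$ diagram. First I would substitute $\gamma_\beta = e^{-\beta H}/Z_\beta$, equivalently $\log\gamma_\beta = -\beta H - \log Z_\beta$, into $\Ab(\rho) = S_\mathrm{rel}(\rho\|\gamma_\beta) = \tr[\rho\log\rho] - \tr[\rho\log\gamma_\beta]$ to obtain the explicit form
\begin{equation}
  \Ab(\rho) = \beta\, E(\rho) - S(\rho) + \log Z_\beta .
\end{equation}
Using the standard identities $\partial_\beta \log Z_\beta = -E(\gamma_\beta)$ and $\partial_\beta^2 \log Z_\beta = \mathrm{Var}_{\gamma_\beta}(H) \ge 0$, the first derivative in $\beta$ is $E(\rho) - E(\gamma_\beta)$ and the second is a nonnegative variance; hence $\beta \mapsto \Ab(\rho)$ is convex, and its infimum over $[0,\infty)$ is attained exactly where $E(\gamma_\beta) = E(\rho)$, which is the defining relation of $\betamin$ in \eqref{eq:maxheat}.

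Second, I would confirm this stationary point is admissible and evaluate $\Ab$ there. Since $\beta\mapsto E(\gamma_\beta)$ is continuous and strictly decreasing from $E(\gamma_0) = \bar\epsilon$ down to $\epsilon_{\min}$ as $\beta\to\infty$, and passivity of $\rho$ forces $\epsilon_{\min}\le E(\rho)\le\bar\epsilon$ (the maximally mixed state being the unique most energetic passive state), there is a unique $\betamin\in[0,\infty)$ with $E(\gamma_{\betamin})=E(\rho)$ whenever $\rho$ is not the ground state — in which case both sides of the claim are trivially zero. Evaluating at $\beta=\betamin$ and using $S(\gamma_{\betamin}) = \betamin E(\gamma_{\betamin}) + \log Z_{\betamin} = \betamin E(\rho) + \log Z_{\betamin}$ collapses the expression to
\begin{equation}
  \Abmin(\rho) = \inf_{\beta\in[0,\infty)}\Ab(\rho) = S(\gamma_{\betamin}) - S(\rho).
\end{equation}

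Finally, to justify that $\Delta S_{\mathrm{max}}$ as defined in \eqref{eq:maxheat} is genuinely the supremum of the entropy gain over all activation trajectories, I would invoke the geometry of the $E$-$S$ diagram: along any activation trajectory the energy is nonincreasing and the entropy nondecreasing, and the trajectory stays inside the state space, whose upper boundary is the equilibrium curve $S_{\mathrm{eq}}(E)$, which on the window $[\epsilon_{\min},\bar\epsilon]$ has slope $\beta\ge 0$ and so is nondecreasing in $E$. Because the energy only decreases, every point of the trajectory satisfies $S\le S_{\mathrm{eq}}(E)\le S_{\mathrm{eq}}(E(\rho)) = S(\gamma_{\betamin})$, so no activation trajectory raises the entropy past $S(\gamma_{\betamin})$; and the bound is saturated by the iso-energetic thermalisation to $\gamma_{\betamin}$, which is itself an activation trajectory. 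Combining with the computation above gives $\Delta S_{\mathrm{max}} = S(\gamma_{\betamin}) - S(\rho) = \Abmin(\rho) = \inf_{\beta}\Ab(\rho)$, as claimed. I expect the only genuinely delicate points to be (i) checking that $\betamin$ lands in $[0,\infty)$, which is where passivity of $\rho$ is used, and (ii) the monotonicity of $S_{\mathrm{eq}}$ on the relevant energy window, which is what singles out the iso-energetic (rather than an energy-lowering) activation trajectory as the optimal one.
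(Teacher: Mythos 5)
Your proof is correct and follows essentially the same route as the paper: both reduce the problem to one-variable calculus in $\beta$, finding that $\partial_\beta \Ab(\rho) = E(\rho) - E(\gamma_\beta)$ vanishes precisely at $\betamin$, and then evaluate there to obtain $S(\gamma_{\betamin}) - S(\rho)$. Your version is in fact slightly tighter on the supporting details — you certify that the stationary point is a global minimum via convexity ($\partial_\beta^2 \log Z_\beta = \mathrm{Var}_{\gamma_\beta}(H) \ge 0$) rather than the paper's appeal to $\Ab(\rho) \to \infty$ as $\beta \to \infty$, you check explicitly that passivity places $\betamin$ in $[0,\infty)$, and you supply a geometric argument that the iso-energetic thermalisation is optimal among all activation trajectories, a point the paper treats as essentially definitional.
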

\begin{proof}
	By the defining relation~\eqref{eq:maxheat} for $\Delta S_{\mathrm{max}}$ and $\betamin$, we always have
\begin{equation}
	\begin{split}
		\Delta S_{\mathrm{max}} &= S(\gamma_{\betamin}) - S(\rho) \\
		&= S_\mathrm{rel}(\rho || \gamma_{\betamin}) = \Abmin(\rho).
	\end{split}
\end{equation}

	To prove the second equality, we consider the the following two possible cases for any given state $\rho$.
	
	If $\rho = \ket{e_1}\bra{e_1}$, it coincides with $\gamma_{\betamin} \equiv \gamma_\infty$ and
	\begin{equation}
	\inf_{\beta \in [0, \infty)} \Ab(\rho) = a_\infty(\rho) = \Abmin(\rho).
	\end{equation}
	
	If $\rho \neq \ket{e_1}\bra{e_1}$, we expand~\eqref{eq:ab} to get 
	\begin{equation}
		\Ab(\rho) = \beta (E(\rho) - E(\gamma_\beta)) + (S(\gamma_\beta) - S(\rho))
	\end{equation}
	so that $\Ab(\rho)$ attains a unique extremum when $E(\gamma) = E(\rho)$.
Since $\Ab(\rho) \xrightarrow{\beta \rightarrow +\infty} +\infty$, this extremum is a minimum with value 
\begin{equation}
	\min_{\beta \in [0, \infty)}{\Ab(\rho)} = \Abmin(\rho).
\end{equation}
\end{proof}
Geometrically,~\thmref{lem:infab} says that the Gibbs state closest to $\rho$ with respect to the family of athermality functions $\Ab$ is the one lying directly above $\rho$ on the $E$-$S$ diagram.

To illustrate the relation between the geometric athermality and asymptotic ergotropy (or minimal $\beta$--athermality), we consider in~\figref{fig:avw} a maximally energetic passive qutrit state $\rho_0$~\cite{Skrzypczyk2015} under a Hamiltonian with equal energy spacing.
The plot shows that the area of a 3-dimensional system and its asymptotic ergotropy are positively correlated along any activation trajectory. 
Similar results follow for the minimal $\beta$--athermality.

\begin{figure}
\centering
\includegraphics[scale=0.5]{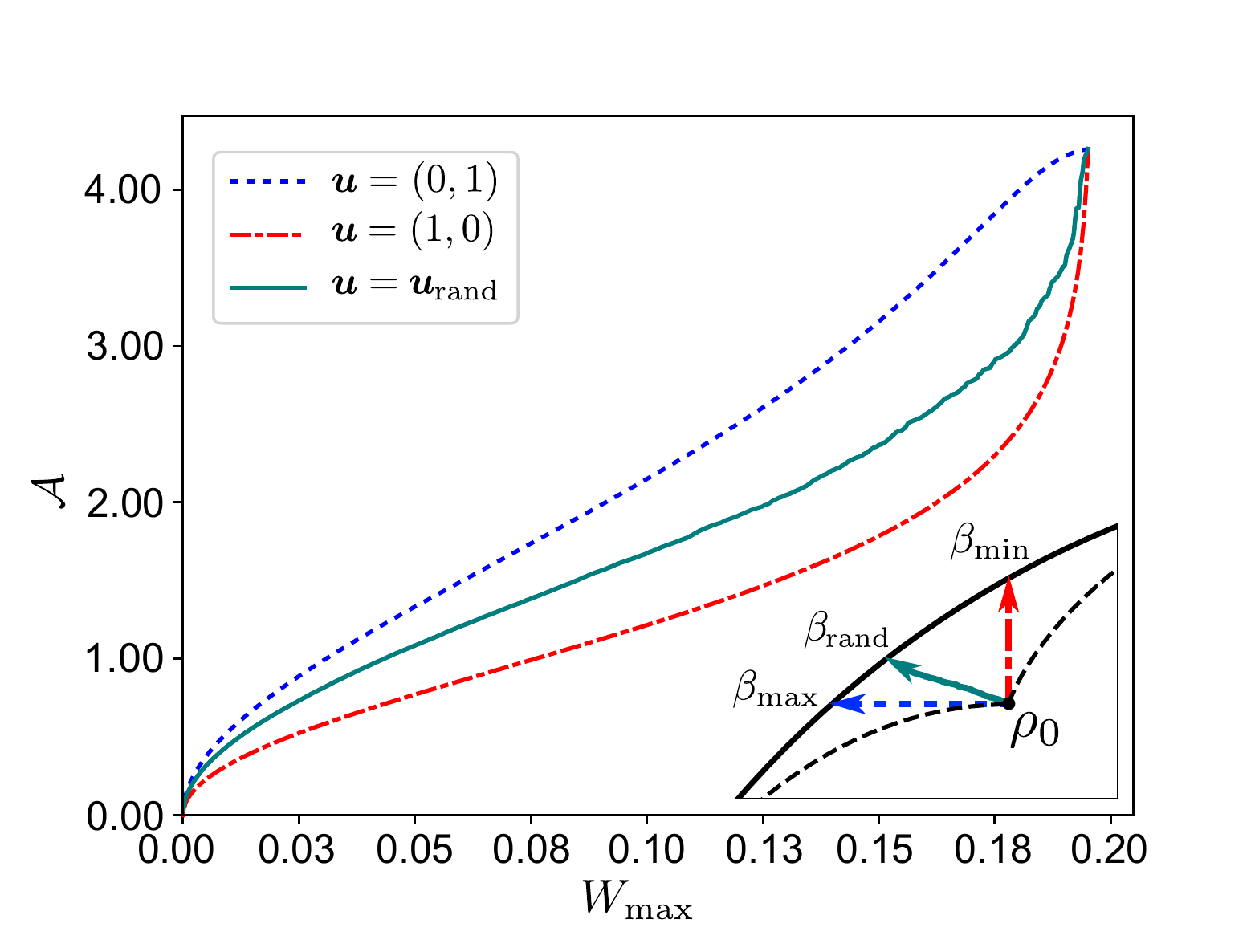}
\caption{\textbf{$\Ac$ vs. $W_{\mathrm{max}}$.} Geometric athermality of a passive state against asymptotic ergotropy along three activation trajectories. 
The initial state is the maximally energetic passive state $\rho_0 = \frac{1}{2}(\ket{e_0}\bra{e_0} + \ket{e_1}\bra{e_1}) + \frac{\delta}{3} \mathbbm{1}$ and the energy spectrum is $(0,1,2)$. 
We mix the state with an offset parameter $\delta \sim 10^{-4}$ in order to keep the initial geometric athermality finite.
The trajectories are labelled by their tangent vector $\bm{u} \coloneqq(u_E,u_S)$ with $\bm{u}_{\mathrm{rand}}$ corresponding to a uniformly random choice of $\bm{u}$.
The labelled temperatures are $(\betamin, \beta_{\mathrm{rand}}, \betamax) = (0.83, 1.16, 1.32)$.
}
\label{fig:avw}
\end{figure}

In fact, we can show directly from their definitions that the two quantities $W_{\mathrm{max}}, \Delta S_{\mathrm{max}}$ are also measures of athermality.
The differentials of asymptotic ergotropy and maximal entropic gain are
\begin{align}
	\dt W_{\mathrm{max}} &= \dt E - \beta_{\rm{max}}^{-1} \dt S, \\
	\dt \Delta S_{\mathrm{max}} &= \beta_{\rm{min}}\dt E - \dt S,
\end{align}
where the temperatures are calculated at the point where the derivative is considered. 
\begin{theorem}
The asymptotic ergotropy $W_{\mathrm{max}}$ and minimal $\beta$--athermality $\Delta S_{\mathrm{max}}$ are both monotonically non-decreasing functions of the geometric athermality $\Ac$ along any differentiable activation trajectory.
\end{theorem}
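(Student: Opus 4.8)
The plan is to derive this as a direct consequence of $\thmref{thm:area_monotone}$ together with the two differential formulas for $\dt W_{\mathrm{max}}$ and $\dt \Delta S_{\mathrm{max}}$ stated just above it: along any differentiable activation trajectory all three of $\Ac$, $W_{\mathrm{max}}$ and $\Delta S_{\mathrm{max}}$ are monotonically non-increasing, and moreover $\Ac$ is \emph{strictly} decreasing, so it can be used to reparametrise the trajectory and the chain rule then yields the non-negative derivatives. First I would fix a differentiable activation trajectory, parametrise it by a parameter $t$, and note that (away from the trivial case of a stationary trajectory) its tangent $\bm{u}=(u_E,u_S)$ directed towards equilibrium satisfies $u_E\le 0$, $u_S\ge 0$ with $(u_E,u_S)\neq(0,0)$. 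It suffices to work in the region of passive-but-not-completely-passive states, i.e. $\Ac>0$: on the equilibrium curve $\{\Ac=0\}$ all three quantities vanish simultaneously, and since that curve has strictly positive slope in the $E$-$S$ plane, no non-constant activation trajectory can run along it.

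Next I would assemble the sign data. From $\thmref{thm:area_monotone}$ and the explicit differential of $\appref{appendix:area_monotone}$ (illustrated for qutrits by~\eqref{eq:da_qutrit}), one has $\partial_E\Ac>0$ and $\partial_S\Ac<0$ throughout $\{\Ac>0\}$, so the directional derivative $\nabla_{\bm u}\Ac = u_E\,\partial_E\Ac + u_S\,\partial_S\Ac$ is \emph{strictly} negative: each term is $\le 0$ and the two cannot vanish together for a unit vector with $u_E\le 0\le u_S$. For the other two quantities I would invoke the stated differentials; since along the trajectory $\dt E\le 0$ and $\dt S\ge 0$, and since on $\{\Ac>0\}$ one has $\betamax\in(0,\infty)$ (because $S(\rho)=0$ or $S(\rho)=\log d$ forces complete passivity) and $\betamin\in[0,\infty)$,
\begin{equation}
\dt W_{\mathrm{max}} = \dt E - \betamax^{-1}\,\dt S \le 0, \qquad \dt\Delta S_{\mathrm{max}} = \betamin\,\dt E - \dt S \le 0 .
\end{equation}

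Finally, since $\Ac$ is a strictly decreasing differentiable function of $t$ along the trajectory, it is an admissible reparametrisation, and the chain rule gives
\begin{equation}
\frac{\dt W_{\mathrm{max}}}{\dt\Ac} = \frac{\dt W_{\mathrm{max}}/\dt t}{\dt\Ac/\dt t} \ge 0, \qquad \frac{\dt\Delta S_{\mathrm{max}}}{\dt\Ac} = \frac{\dt\Delta S_{\mathrm{max}}/\dt t}{\dt\Ac/\dt t}\ge 0,
\end{equation}
each being a ratio of a non-positive quantity and a strictly negative one. Hence $W_{\mathrm{max}}$ and $\Delta S_{\mathrm{max}}$ are non-decreasing functions of $\Ac$ along the trajectory, as claimed. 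The one genuinely delicate point — the one I would treat with most care — is the \emph{strict} negativity of $\dt\Ac/\dt t$: without it $\Ac$ could be locally constant while $W_{\mathrm{max}}$ varied, and ``function of $\Ac$'' would be ill-posed. This strictness is precisely where the partial-derivative signs $\partial_E\Ac>0>\partial_S\Ac$ established in $\appref{appendix:area_monotone}$ are indispensable; once they are in hand, the remainder is routine sign-chasing, with the boundary values $\betamax,\betamin\in\{0,\infty\}$ ruled out on $\{\Ac>0\}$ as above.
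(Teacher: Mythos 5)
Your argument is in substance the paper's own: substitute the activation-trajectory constraint into the three differentials, chain-rule, and chase signs. The only real difference is which quantity you put in the denominator, and it matters. You reparametrise by $\Ac$ and divide by $\dt\Ac/\dt t$, which forces you to establish that $\Ac$ is \emph{strictly} decreasing; you justify this by citing $\partial_E\Ac>0>\partial_S\Ac$ from \appref{appendix:area_monotone}. But the general-$d$ proof there only delivers the non-strict inequalities $\partial_E\Ac\ge 0$ and $\partial_S\Ac\le 0$ — the strict versions are obtained only for qutrits, via the explicit formula~\eqref{eq:da_qutrit} where every term in the sum is manifestly signed. So for $d>3$ your division step is not actually backed by the reference you give, and if $\dt\Ac/\dt t$ vanished on a segment, "$W_{\mathrm{max}}$ as a function of $\Ac$" would indeed be ill-posed there (a point you correctly flag as the delicate one, but do not fully close). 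The paper avoids this by parametrising by $E$ and forming the reciprocal ratios $\dt\Ac/\dt W_{\mathrm{max}}$ and $\dt\Ac/\dt\Delta S_{\mathrm{max}}$: the denominators satisfy $\dt W_{\mathrm{max}}/\dt E = 1-\betamax^{-1}(\dt S/\dt E)\ge 1$ and similarly for $\Delta S_{\mathrm{max}}$, so they are strictly signed without any appeal to strict monotonicity of $\Ac$, and the degenerate case is absorbed into the caveat "with equality whenever $\dt\Ac=0$". Your sign computations for $\dt W_{\mathrm{max}}\le 0$, $\dt\Delta S_{\mathrm{max}}\le 0$ and $\dt\Ac\le 0$ are all correct, and your handling of the boundary cases ($\betamax,\betamin$ finite, non-stationary tangent) is careful; to make the proof airtight in general dimension, either prove the strict partial-derivative signs for $d>3$ or invert the ratio as the paper does. (Minor point in your favour: the paper's parametrisation by $S(E)$ silently excludes isoenergetic trajectories, where $\dt E=0$; your parametrisation by $t$ handles those uniformly.)
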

\begin{proof}
Let the activation trajectory be labelled $S(E)$, with non-positive slope $\dt S/\dt E \leq 0$ along the entire trajectory due to~\defref{def:activtraj}.
Then, substituting $\dt S = (\dt S/\dt E) \dt E$ in the differential forms of asymptotic ergotropy $W_{\mathrm{max}}$, minimal $\beta$--athermality $\Delta S_{\mathrm{max}}$ and geometric athermality $\Ac$, we find that, along the entire trajectory, the derivatives $\dt W_{\mathrm{max}}/\dt E, \dt \Delta S_{\mathrm{max}}/\dt E$ are positive. 
Therefore, 
\begin{equation}
	\frac{\dt \Ac}{\dt W_{\mathrm{max}}} \geq 0 \text{ and } \frac{\dt \Ac}{\dt \Delta S_{\mathrm{max}}} \geq 0,
\end{equation}
with equality whenever $\dt \Ac = 0$.
\end{proof}

The asymptotic ergotropy $W_{\mathrm{max}}$ and minimimal $\beta$--athermality $\Delta S_{\mathrm{max}}$ are associated with the two extremal activation trajectories that a passive state $\rho$ can take towards the equilibrium curve.
Therefore, one can calculate them by constructing the $E$-$S$ diagram that corresponds to the $\epsilon$-$s$ ensemble $\V(\rho)$ and then compute temperatures $\betamax^{-1}$ and $\betamin^{-1}$ respectively. 
The two measures are in a sense dual.
On the contrary, the geometric athermality measure $\Ac$ is not associated with any specific activation trajectory and solely depends on the $\epsilon$-$s$ ensemble $\V(\rho)$. 
This means that the geometric athermality does not require the whole structure of the $E$-$S$ diagram and treats all points on the equilibrium curve equally, in the sense that there arises no special temperature like $\betamin$ or $\betamax$.

A second point of comparison worth noting is the domain in which the three measures remain monotonic.
The measures $W_{\mathrm{max}}$ and $\Delta S_{\mathrm{max}}$ are clearly monotonic at any point on the $E$-$S$ plane that corresponds to a valid quantum state, while the monotonicity of the geometric athermality $\Ac$ is guaranteed over the passive region of the $E$-$S$ diagram.
Although this reasoning seems to suggest that the geometric athermality is rather restricted, we can in fact easily extend its validity to all quantum states.
Any state can be mapped to a unique passive state, associated to it via sorting its spectrum, while, conversely, one can reach any quantum state by unitarily transforming a passive state.
Therefore, the entire $E$-$S$ diagram can be divided into equivalence classes, each represented by a passive state $\rho_\mathrm{P}$, such that the value of the geometric athermality of a passive state $\rho_\mathrm{P}$ can be associated to all quantum states in its equivalence class.

	
\section{Outlook}
In this paper, we have shed light on the structure of passive states and provided a clear and simple derivation of the Gibbs state. The key tool for this was a geometric reformulation of passivity as a total ordering. Equipped with the area of the $\epsilon$-$s$ ensemble in this geometric framework, we then introduced the geometric athermality as a measure of athermality for quantum states that does not depend on a particular temperature. We also found that geometric athermality supplements the concept of asymptotic equivalence in a way that in a sense provides a non-equilibrium `equation of state' $\Ac(E,S)$. It might be fruitful to explore this line further to determine its scope and application.

We finish with a range of open questions that arise from the framework introduced in this paper:
\begin{enumerate}
\item The main question that we have not explored is the following. For a $d$--dimensional quantum system with Hamiltonian $H$, and fixed values of $(E,S)$ what is the shape of the asymptotic ensemble that has minimal area? In other words, what is the general form of the quantum state $\rho$ whose asymptotic ensemble has area equal to the geometric athermality? 
\item It turns out that the accumulation of `face points' (points in $\V(\rho)$ that lie on the boundary faces of $\V_\infty(\rho)$) appear to be significant (see the proof of the monotonicity of $\Ac$). These lie on boundary line segments and so we can interpret each line segment as defining a Gibbs state at some inverse temperature $\beta_f$. Can we describe the approach to equilibrium purely in terms of these boundary `equilibrium' parameters?
\item It would be of interest to provide a more unified geometric account of this structure and extend to other scenarios such as multiple non-commuting conserved charges (where $\V_\infty$ will now consist of a \emph{volume} in $\mathbb{R}^n$).
\item Can we extend the geometric description to include non-passive states, and states with coherences between energy eigenspaces?
\item Can our geometric framework be extended to infinite-dimensional systems to enable the reformulation of the minimal area as a calculus of variations problem? What shape does the minimal surface $\V_\infty$ assume in this continuum setting?
\item We chose the Euclidean metric to define area. Is there a metric structure that provides a more natural form for the geometric athermality? Perhaps using ideas from information geometry?
\end{enumerate}

\section{Acknowledgements}

We would like to thank Carlo Sparaciari for many useful discussions. NK, RA and TH are supported by the EPSRC Centre for Doctoral Training in Controlled Quantum Dynamics. TH is recently supported by EPSRC grant EP/T001011/1. DJ is supported by the Royal Society and also a University Academic Fellowship.
	

\bibliography{bibliography}

\newpage

\appendix
\onecolumngrid

\section{Proof of Necessary and Sufficient Conditions for Passivity} \label{appendix:passivity proof}

Given a real vector $\bm{x}$, its components can be sorted into non-decreasing order, denoted by $\bm{x}^\uparrow$, or non-increasing order, denoted by $\bm{x}^\downarrow$.
We say that another vector $\bm{z}$ is majorized by $\bm{x}$, $\bm{x} \succ \bm{z}$, if and only if $\bm{z} = A \bm{x}$ for a bistochastic matrix $A$.
The theory of majorization leads to the following useful lemma.

\begin{lemma}\label{lem:sort}
For any two real finite-dimensional vectors $\bm{x}$ and $\bm{y}$, and any bistochastic matrix $A$, we have that
\begin{equation}\label{eq:order_ineq}
	 \bm{x}^\uparrow \cdot \bm{y}^\uparrow \geq \bm{x}^\uparrow \cdot (A\bm{y})^\uparrow \geq  \bm{x} \cdot (A\bm{y}) \geq \bm{x}^\uparrow \cdot (A\bm{y})^\downarrow \geq \bm{x}^\uparrow \cdot \bm{y}^\downarrow.
\end{equation}
\end{lemma}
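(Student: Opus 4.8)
The plan is to establish the chain of inequalities in \eqref{eq:order_ineq} by reading it as two symmetric halves joined at the middle term $\bm{x}\cdot(A\bm{y})$, and within each half proving the two links separately. The two key facts I would use are: (i) the \emph{rearrangement inequality}, which says that for fixed multisets of values, the dot product $\bm{u}\cdot\bm{v}$ is maximised when both vectors are similarly ordered (both $\uparrow$ or both $\downarrow$) and minimised when they are oppositely ordered; and (ii) the fact that a bistochastic matrix $A$ acts as an averaging map, so $A\bm{y}$ is majorized by $\bm{y}$, hence its sorted version is ``more mixed'' than $\bm{y}^\uparrow$.

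First I would prove the two outer inequalities. Since $A$ is bistochastic, $\bm{y}\succ A\bm{y}$, so $(A\bm{y})^\uparrow$ lies in the majorization polytope of $\bm{y}$. For any fixed vector $\bm{x}^\uparrow$ sorted in non-decreasing order, the function $\bm{z}\mapsto \bm{x}^\uparrow\cdot\bm{z}^\uparrow$ is Schur-convex while $\bm{z}\mapsto\bm{x}^\uparrow\cdot\bm{z}^\downarrow$ is Schur-concave (both follow from Abel summation: writing the dot product via partial sums of the sorted components makes the dependence on majorization explicit). Schur-convexity of the first and $\bm{y}\succ A\bm{y}$ give $\bm{x}^\uparrow\cdot\bm{y}^\uparrow \geq \bm{x}^\uparrow\cdot(A\bm{y})^\uparrow$, which is the leftmost inequality; Schur-concavity of the second gives $\bm{x}^\uparrow\cdot(A\bm{y})^\downarrow \geq \bm{x}^\uparrow\cdot\bm{y}^\downarrow$, the rightmost one.

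Next I would prove the two inner inequalities, $\bm{x}^\uparrow\cdot(A\bm{y})^\uparrow \geq \bm{x}\cdot(A\bm{y}) \geq \bm{x}^\uparrow\cdot(A\bm{y})^\downarrow$. These compare the \emph{same} pair of value-multisets $\{x_i\}$ and $\{(A\bm{y})_i\}$ under different relative orderings. The left link follows because sorting $\bm{x}$ into $\bm{x}^\uparrow$ and $A\bm{y}$ into $(A\bm{y})^\uparrow$ gives the similarly-ordered arrangement, which by the rearrangement inequality maximises the dot product over all reorderings — in particular it is at least the value $\bm{x}\cdot(A\bm{y})$ obtained with the original (arbitrary) pairing. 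The right link follows symmetrically: $\bm{x}^\uparrow\cdot(A\bm{y})^\downarrow$ is the oppositely-ordered arrangement, which \emph{minimises} the dot product, hence is at most $\bm{x}\cdot(A\bm{y})$. Concatenating the four links yields \eqref{eq:order_ineq}.

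I expect the only genuine obstacle to be stating the rearrangement/Schur-convexity facts at the right level of generality and making sure the Abel-summation bookkeeping handles ties in the sorted components cleanly; these are standard but need care. One subtlety worth flagging: the inner inequalities need no property of $A$ at all (they hold for any vector in place of $A\bm{y}$), whereas the outer ones use bistochasticity essentially through $\bm{y}\succ A\bm{y}$ — separating these two ingredients keeps the argument transparent.
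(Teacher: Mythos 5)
Your proof is correct and follows essentially the same decomposition as the paper: the two inner inequalities are exactly the rearrangement inequality (needing no property of $A$), and the two outer ones come from $\bm{y}\succ A\bm{y}$ together with monotonicity of $\bm{z}\mapsto\bm{x}^\uparrow\cdot\bm{z}^\uparrow$ (resp.\ $\bm{z}\mapsto\bm{x}^\uparrow\cdot\bm{z}^\downarrow$) under majorization. The only difference is that where the paper cites Mirsky's lemma for that last monotonicity fact, you supply its standard proof via Schur-convexity/concavity (equivalently, these maps are symmetric maxima/minima of linear functions), which is a valid filling-in rather than a different route.
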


The first and last inequalities are a consequence of the lemma in~\cite{Mirsky1959}, where we note that $\bm{y} \succ A\bm{y}$ and that this majorization relation is invariant under permutation of vector components on either side, in particular permutations denoted by the arrows. 
The two middle inequalities together simply form a statement of the rearrangement inequality~\cite{rearrangement}. 
Informally this lemma says that given any two real-valued vectors, the maximum inner product over the `stochastic orbit' of $\bm{y}$ is obtained when the vector components of $\bm{x}$ and $\bm{y}$ are `aligned' while the minimum inner product is obtained when they are `anti-aligned'. Moreover, bistochastic mixing of one of the vectors only moves the value of the inner product away from these two extremes. This structure is at the heart of passivity at the single-shot level, as we now explain.

\begin{theorem}
Consider a $d$-dimensional quantum system with Hamiltonian $H$, with eigenvalue decomposition $H = \sum_{i=1}^d \epsilon_i \ \ketbra{e_i}{e_i}$. A state of that system is passive if and only if $[\rho, H]=0$ and, for eigenvalue decomposition $\rho = \sum_{i=1}^d p_i \ \ketbra{e_i}{e_i}$, we have that $\epsilon_i \leq \epsilon_j$ implies $p_i \geq p_j$ for all $i$ and $j$ in $\{1,\dots,d\}$.
\end{theorem}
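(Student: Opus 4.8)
The plan is to prove both directions using the majorization structure encapsulated in \lemref{lem:sort}. Write $\bm{p} = \mathrm{eigs}(\rho)$ and $\bm{\epsilon} = \mathrm{eigs}(H)$, and note that for any unitary $U$ the state $U\rho U^\dagger$ has a diagonal in the energy eigenbasis whose entries are given by $A\bm{p}$, where $A_{ij} = |\langle e_i | U | e_j \rangle|^2$ is a bistochastic (doubly stochastic) matrix; conversely every bistochastic $A$ arises this way (Birkhoff--von Neumann, realising permutations by permutation unitaries and using unitary mixing, or simply the fact that $A$ bistochastic implies $A$ is a convex combination of permutations). Hence $\mathrm{tr}(H U \rho U^\dagger) = \bm{\epsilon} \cdot (A\bm{p})$ and the minimum of this over all unitaries equals $\min_A \bm{\epsilon}\cdot(A\bm{p})$ over bistochastic $A$.

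For the ``if'' direction (sufficiency), suppose $[\rho,H]=0$ so $\rho$ is diagonal in the energy eigenbasis, and suppose the spectra are anti-ordered, i.e. $\epsilon_i \le \epsilon_j \Rightarrow p_i \ge p_j$. Anti-ordering means precisely that, up to the common permutation diagonalising both, $\bm{p}$ is arranged in non-increasing order exactly when $\bm{\epsilon}$ is non-decreasing, so $\bm{\epsilon} \cdot \bm{p} = \bm{\epsilon}^\uparrow \cdot \bm{p}^\downarrow$. Applying the rightmost inequality of \eqref{eq:order_ineq} with $\bm{x} = \bm{\epsilon}$, $\bm{y} = \bm{p}$ gives $\bm{\epsilon}^\uparrow \cdot (A\bm{p}) \ge \bm{\epsilon}^\uparrow \cdot \bm{p}^\downarrow = \bm{\epsilon}\cdot\bm{p}$ for every bistochastic $A$; since $\bm{\epsilon}^\uparrow \cdot (A\bm{p}) \le$ the value for a suitable relabelling, and more directly since $\mathrm{tr}(HU\rho U^\dagger) = \bm{\epsilon}\cdot(A\bm{p}) \ge \bm{\epsilon}^\uparrow\cdot(A\bm{p})^\downarrow \ge \bm{\epsilon}^\uparrow\cdot\bm{p}^\downarrow = \mathrm{tr}(H\rho)$, we conclude $\rho$ is passive. (The chain $\bm{\epsilon}\cdot(A\bm{p}) \ge \bm{\epsilon}^\uparrow\cdot(A\bm{p})^\downarrow$ is one of the middle/last inequalities applied correctly, which I would state carefully.)

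For the ``only if'' direction (necessity), assume $\rho$ is passive. First show $[\rho,H]=0$: if not, then $\rho$ has off-diagonal elements in the energy eigenbasis, and one can rotate slightly within a $2$-dimensional subspace spanned by two energy eigenvectors to strictly decrease $\mathrm{tr}(H\rho')$ — concretely, diagonalising the $2\times 2$ block or using a first-order perturbation argument shows the energy is not stationary. More cleanly: among all unitarily equivalent states, the average energy $\bm{\epsilon}\cdot(A\bm{p})$ depends only on the diagonal $A\bm{p}$, and by the leftmost/rightmost inequalities of \eqref{eq:order_ineq} the energy is minimised exactly when the diagonal is the spectrum of $\rho$ anti-ordered against $\bm\epsilon$; any state not commuting with $H$ can be brought by a unitary to one commuting with $H$ with the same spectrum, hence strictly smaller or equal energy, and strict unless it already commuted — so passivity forces $[\rho,H]=0$. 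Then, with $\rho$ diagonal, if the spectra were \emph{not} anti-ordered there would exist $i,j$ with $\epsilon_i < \epsilon_j$ but $p_i < p_j$; applying the permutation unitary swapping $|e_i\rangle$ and $|e_j\rangle$ changes the energy by $(\epsilon_i - \epsilon_j)(p_j - p_i) < 0$, strictly lowering it, contradicting passivity. Hence the spectra must be anti-ordered.

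The main obstacle is the necessity of $[\rho,H]=0$: the anti-ordering part is an easy swap argument, but ruling out energy-lowering coherent rotations requires either a careful perturbative/$2\times 2$-block argument or the observation that the set of achievable diagonals under unitaries is exactly $\{A\bm p : A \text{ bistochastic}\}$ together with the fact that restricting to energy-diagonal states (same spectrum) can only lower the energy — I would present the latter, invoking \lemref{lem:sort} to identify the true minimum and noting it is attained only by the anti-ordered diagonal state, which in particular commutes with $H$.
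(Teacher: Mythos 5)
Your overall strategy is the same as the paper's (reduce everything to bistochastic images of the spectrum and invoke \lemref{lem:sort}), and your sufficiency direction is essentially identical to the paper's and correct. Two remarks on the setup, though. First, your claim that \emph{every} bistochastic $A$ arises as $A_{ij}=|\langle e_i|U|e_j\rangle|^2$ is false for $d\geq 3$: the unistochastic matrices are a proper subset of the bistochastic ones, and convex mixing of unitaries does not produce a unitary. This happens to be harmless here — the lower bound in \lemref{lem:sort} holds over the larger bistochastic set, and the minimiser is a permutation matrix, which \emph{is} unistochastic — but as written the claim is wrong. Second, for $\rho$ not commuting with $H$ the relevant bistochastic matrix is built from $UV$ where $V$ diagonalises $\rho$, not from $U$ alone; this is what the Schur--Horn theorem packages for you, and the paper invokes it explicitly.

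The genuine gap is in the necessity direction, precisely at the point you yourself identify as the main obstacle. You assert that the minimum $\bm{\epsilon}^\uparrow\cdot\bm{p}^\downarrow$ over the unitary orbit ``is attained only by the anti-ordered diagonal state,'' but this is the statement that needs proof, and it has two separate parts, neither of which you supply. (i) \emph{Uniqueness of the minimising diagonal}: you must show that $\bm{\epsilon}\cdot\bm{q}=\bm{\epsilon}^\uparrow\cdot\bm{p}^\downarrow$ with $\bm{q}=B\bm{p}$ bistochastic forces $\bm{q}=\bm{p}^\downarrow$; this requires the strict form of the rearrangement inequality and non-degeneracy of $\bm{\epsilon}$ (the paper does this via the identity $\bm{\epsilon}^\uparrow\cdot(\iden-B)\bm{p}^\downarrow=0$ and an eigenvalue bound on $B$, and handles degenerate $H$ separately by coarse-graining — your statement-level claim also silently assumes something here). (ii) \emph{Diagonal-equals-spectrum implies diagonal}: even granted $\mathrm{diag}(\rho)=\bm{p}^\downarrow$, you must still rule out off-diagonal elements; this follows from $\sum_i p_i^2=\mathrm{tr}[\rho^2]=\sum_{i,j}|\rho_{ij}|^2$, which kills the off-diagonal terms, but it is not automatic and you never say it. Your alternative $2\times 2$-block perturbation argument would also close the commutation step (diagonalising the block $\begin{pmatrix}\rho_{ii}&\rho_{ij}\\\rho_{ji}&\rho_{jj}\end{pmatrix}$ strictly lowers the energy whenever $\rho_{ij}\neq 0$ and $\epsilon_i\neq\epsilon_j$), but you leave it as a sketch as well. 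The swap argument for anti-ordering once $\rho$ is known to be diagonal is correct and elementary.
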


\begin{proof}
Without loss of generality we label the energy spectrum so that $\bm{\epsilon} = \bm{\epsilon}^\uparrow$ and shift it appropriately so that all energies are non-negative. 
Then, the condition that $\epsilon_i \leq \epsilon_j$ implies $p_i \geq p_j$ for all $i$ and $j$ in $\{1,\dots,d\}$ is equivalent to $\bm{p} = \bm{p}^\downarrow$.

We first demonstrate that a state $\rho$ satisfying $[\rho,H]=0$ with corresponding eigenvalues $\bm{p} = \bm{p}^\downarrow$ is passive. Firstly, $\rho$ is diagonalised in the energy basis, such that
\begin{equation}
\rho = \sum_{i=1}^d p_i \ketbra{e_i}{e_i}
\end{equation}
with $p_1 \geq \dots \geq p_d$ and average energy $\tr [H\rho] = \bm{\epsilon}^\uparrow \cdot \bm{p}^\downarrow$. 
The average energy of the transformed state $U \rho U^\dagger$ is
\begin{align}
	\tr [ H U \rho U^\dagger ] = \sum_{i,j} \epsilon_i \ p_j \ | \bra{e_i} U \ket{e_j} |^2 = \bm{\epsilon} \cdot (B \bm{p}),
\end{align}
where $B$ is a bistochastic matrix with elements $B_{ij} = | \bra{e_i} U \ket{e_j} |^2$. 
\lemref{lem:sort} implies that 
\begin{align}
	\tr [HU\rho U^\dagger] = \bm{\epsilon} \cdot (B \bm{p}) \geq \bm{\epsilon}^\uparrow \cdot \bm{p}^\downarrow = \tr [H \rho]
\end{align}
for any unitary $U$ on the system, hence $\rho$ is passive.

Now we prove the other direction, beginning from the assumption that the state $\rho$ is passive. The density matrix for $\rho$ can generally be expressed in the energy basis as 
\begin{equation}
\rho = \sum_{i,j} \rho_{ij} \ketbra{e_i}{e_j},
\end{equation}
which allows us to express the average energy as
\begin{equation}
\tr [H \rho] = \sum_i \epsilon_i \rho_{ii} = \bm{\epsilon} \cdot \mbox{diag}(\rho).
\end{equation}
By the Schur-Horn theorem~\cite{marshallolkin}, we have that $\mbox{diag}(\rho) \prec \mbox{eigs}(\rho) = \bm{p}$, which is equivalent to $\mbox{diag}(\rho) = B' \bm{p}$ for some bistochastic matrix $B'$. 
\lemref{lem:sort} then gives the minimum of this expression, $\tr[H \rho] = \bm{\epsilon} \cdot (B' \bm{p}) \geq \bm{\epsilon}^\uparrow \cdot \bm{p}^\downarrow$.
However, we assumed that $\rho$ is passive, hence equality holds, in particular
\begin{equation}\label{eq:diagrho}
\bm{\epsilon}^\uparrow \cdot (\iden - B) \bm{p}^\downarrow = 0,
\end{equation}
where bistochastic $B \coloneqq B' \Pi^{-1}$ with the permutation matrix $\Pi$ defined to send $\bm{p}$ to $\bm{p}^\downarrow$. 

Vectors $\bm{\epsilon}$ and $\bm{p}$ have non-negative entries and the energy spectrum is non-degenerate. 
Furthermore, the Perron-Frobenius theorem~\cite{Perron1907} bounds the eigenvalues of $B$ so that $|\mbox{eigs}(B)| \leq 1$, hence the matrix $(\iden - B)$ has non-negative eigenvalues and is positive semi-definite.
Therefore, the left hand side of~\eqref{eq:diagrho} contains no negative terms and can only be zero for $B = \iden$.
The implication is that $\mbox{diag}(\rho) = \bm{p}^\downarrow$, so that $\rho$ is indeed diagonal in the energy eigenbasis with eigenvalues `anti-aligned' with respect to $\bm{\epsilon}$. If degeneracies in energy exist we can coarse-grain the vectors $\bm{\epsilon}$ and $\bm{p}$ on these degenerate energy levels and reach the same conclusion, but now with no restriction on the ordering of the state eigenvalues within these degenerate eigenspaces.
\end{proof}

When considering passivity, it is assumed that we can perform any reversible  (unitary) operations. The state space can therefore be partitioned into unitary orbits $\M(\rho) := \{ U \rho U^\dagger : U \in U(d) \}$, which correspond to sets of mutually accessible states. Passive states are the energetic minima of the unitary orbits. Within each of these orbits, there must be at least one state that is passive.

We also note in passing that, by the same reasoning, the highest energy state in each unitary orbit (the most ``impassive" state) has $\rho$ being diagonal in energy and with $\bm{\epsilon}$ and $\bm{\lambda}$ now being `aligned'. The energy in this case is $\bm{\epsilon}^\uparrow\cdot \bm{\lambda}^\uparrow$ and thus the maximal work (ergotropy) that can be extracted from this state is given by $\bm{\epsilon}^\uparrow \cdot (\bm{\lambda}^\uparrow - \bm{\lambda}^\downarrow)$.
A more geometric discussion of these unitary orbits is provided in~\cite{Markham2008}.

\section{Technical features of $\epsilon$-$s$ ensembles} \label{appendix:technical details}

\subsection{Additivity and the Minkowski Sum}

Additivity enters our consideration when we compute the $\epsilon$-$s$ ensemble of a combined system. Given states $\rho$ and $\sigma$ with $\epsilon$-$s$ ensembles $\V(\rho)$ and $\V(\sigma)$ respectively, the state $\rho \otimes \sigma$ has $\epsilon$-$s$ ensemble
\begin{equation}
\V(\rho \otimes \sigma) = \{ \v_i + \v_j : \v_i \in \V(\rho), \v_j \in \V(\sigma) \},
\end{equation}
where $+$ denotes usual vector addition in $\mathbb{R}^2$. The $\epsilon$-$s$ ensembles combine according to the Minkowski sum \cite{minkowskisum} when we compose systems, which we write as
\begin{equation}
\V(\rho \otimes \sigma) = \V(\rho) \oplus \V(\sigma).
\end{equation}

In the main text, we define the regularised $\epsilon$-$s$ ensemble $\V_k(\rho)$ for $k$ copies of the state $\rho$. The construction of $\V_k(\rho)$ is familiar in additive combinatorics, where it is more often termed the $k$'th sumset~\cite{additivecombinatorics}. We leave for future work the question of whether results in additive combinatorics can be applied to the study passivity.

Composition of systems therefore has a well-studied mathematical structure that we can use in our analysis, including the useful fact that the Minkowski sum respects taking the convex hull:
\begin{equation}
\text{conv} \left( \V(\rho) \oplus \V(\sigma) \right) = \text{conv} \V(\rho) \oplus \text{conv} \V(\sigma).
\end{equation}

\subsection{Properties of the Asymptotic Ensemble}

For a $d$--dimensional system with Hamiltonian $H$, we denote $\epsilon_{\mbox{\tiny min}} := \min \mbox{eigs}(H)$ and $\epsilon_{\mbox{\tiny max}} := \max \mbox{eigs}(H)$. In terms of the generic features of asymptotic ensembles $\V_\infty(\rho)$, we can make the following observations that are easily verified.
\begin{enumerate}
\item We have that $\V_1(\rho) \subseteq \V_k(\rho) \subset \V_\infty (\rho)$ for all $k\ge 1$, and the closure of $\cup_k \V_k(\rho)$ equals $\V_\infty (\rho)$.
\item For any passive state $\rho$ of a $d$--dimensional quantum system, the asymptotic ensemble $\V_\infty (\rho)$ has a boundary given by a piece-wise linear concave function $f_{\mbox{\tiny upper}}(\epsilon)$ on $[\epsilon_{\mbox{\tiny min}},\epsilon_{\mbox{\tiny max}}]$ for the upper boundary, and a piece-wise linear convex function $f_{\mbox{\tiny lower}}(\epsilon)$ on $[\epsilon_{\mbox{\tiny min}},\epsilon_{\mbox{\tiny max}}]$ for the lower boundary.
\item Any piece-wise linear concave or convex monotonically non-decreasing function $g(\epsilon)$ on $[\epsilon_{\mbox{\tiny min}},\epsilon_{\mbox{\tiny max}}]$ corresponds to a passive state $\rho$ of some quantum system $X$ with energy spectrum in $[\epsilon_{\mbox{\tiny min}},\epsilon_{\mbox{\tiny max}}]$. The (non-unique) eigenvalues of $\rho$ and $H$ are defined from the data $(\epsilon_k, g(\epsilon_k))$ for each $\epsilon_k$ being an end-point of a linear region for the function $g$. 
\item Given monotonically non-decreasing piece-wise linear concave / convex functions, $f_{\mbox{\tiny upper}}(\epsilon)$ and  $f_{\mbox{\tiny lower}}(\epsilon)$ respectively, the region contained by their intersection does not in general correspond to an asymptotic ensemble of some passive state. A simple counter-example is a square.
\end{enumerate}

\subsection{Degeneracy and Ground States \label{app:degeneracy} }

Degeneracy of the energy spectrum opens up the possibility of a group orbit having multiple passive states -- simply permute the eigenvalues in the degenerate subspace. However, we consider unitary orbits as equivalence classes, therefore the energetic considerations depend only upon $\mathrm{eigs}(\rho)$, and we choose a particular representative passive state (for example, take the eigenvalues for the degenerate subspace in non-increasing order).

Degeneracy introduces some subtleties to our geometric description of passivity, however the core derivation of the Gibbs state remains unchanged. A passive state is described by its $\epsilon$-$s$ ensemble, but it is possible that degenerate levels correspond to the same pair $(\epsilon_i, s_i)$. To handle this, we redefine the $\epsilon$-$s$ ensemble as a multiset -- a set in which elements can be repeated~\cite{multiset}. Geometrically, the repeated elements will occupy the same point in $\mathbb{R}^2$.

There are two main reasons for tracking such repeats in the $\epsilon$-$s$ ensemble. Firstly, a passive state $\rho$ is representative of a whole unitary orbit $\M(\rho)$, characterised by $\mbox{eigs}(\rho)$. The multiset representation records all eigenvalues regardless of repetition, and hence maintains complete information about the orbit $\M(\rho)$. Minkowski addition can be defined for multisets practically unchanged, hence the composition of systems behaves exactly as before.

Secondly, this representation is able to adapt to perturbations that break the degeneracy of the energy spectrum. This relates to the notion of \emph{structural stability}~\cite{Haag1974}, the requirement that there is another state $\rho'$ in the neighourhood of $\rho$ that is passive under a perturbed Hamiltonian $H'$ in the neighbourhood of $H$.

Structural stability follows from complete passivity unless the state is a ground state (where the only non-zero eigenvalues of $\rho$ reside in the lowest energy subspace). Complete passivity imposes that the eigenvalues of $\rho$ within a degenerate energy subspace must be equal (assuming $\rho$ is not a ground state). For suppose they were different, then $V_\infty(\rho)$ would not be totally ordered, and hence no longer completely passive. This is handled straightforwardly in our geometric framework, as seen in~\figref{fig:degeneracy}.

\begin{figure}[h]
\centering
\includegraphics[scale=0.3]{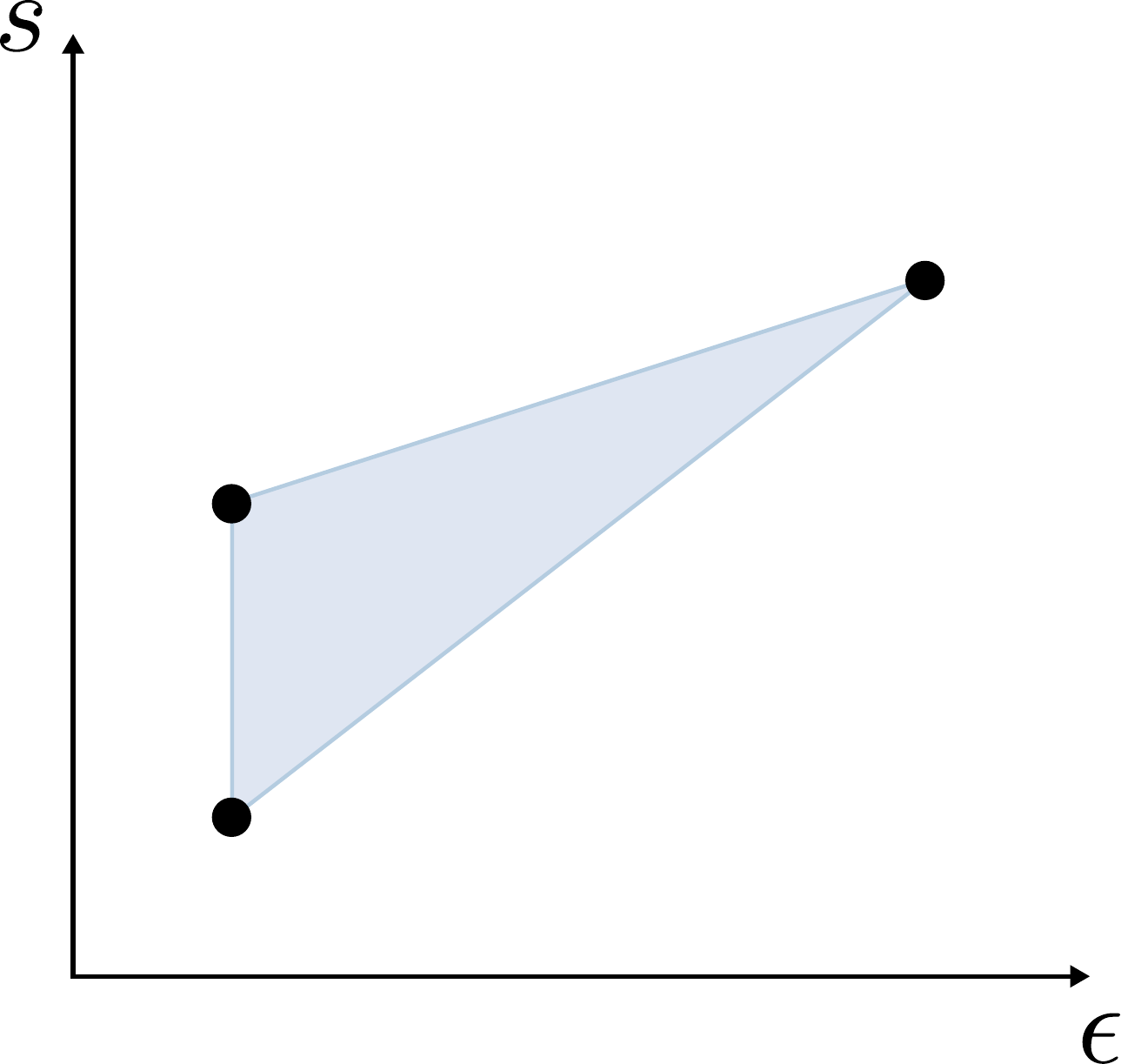}
\caption{\textbf{Completely passive states have uniform eigenvalues within an energy subspace.} The unequal eigenvalues in the degenerate energy subspace of this simple qutrit example break the total ordering of $\V_\infty(\rho)$ (blue shaded region), contradicting the assumption of complete passivity.}
\label{fig:degeneracy}
\end{figure}

The only caveat to this arises for ground states. All ground states are completely passive -- it is clearly impossible to lower their energy. However, \emph{complete passivity does not imply structural stability for ground states}. Complete passivity permits non-uniform eigenvalues in the ground subspace if no other energy subspaces are occupied, because the corresponding $\epsilon$-$s$ ensemble remains totally ordered for arbitrary numbers of copies. However, should a small perturbation break the degeneracy of $H$, there is no guarantee that a small perturbation to the ground state will make it once again completely passive.

Our physical framework should be robust to small perturbations, therefore structural stability is a desirable feature. This comes for free with the requirement of complete passivity, however it must be imposed additionally when considering ground states.

\section{Asymptotic passive state structure}\label{thm:convex}

Regularisation of the $\epsilon$-$s$ ensemble converts points added by new copies of the system into convex combinations of the original elements of $\V(\rho)$. We can re-express $\V_k(\rho)$ as
\begin{equation}
\V_k(\rho) = \left\{  \sum_{i=1}^d q_i \, \v_i \ : \  q_ik \in \mathbb{Z}_k \ , \ \sum_{i=1}^d q_i = 1 \right\},
\end{equation}
where $\v_i \in \V(\rho)$. This set ranges over all probability distributions $\{q_i\}$ with $d$ rational components obtained from $\mathbb{Z}_k/k$. As we increase $k$, these probabilities tend towards non-negative real numbers. 

\begin{definition}
The asymptotic $\epsilon$-$s$ ensemble $\V_\infty(\rho)$ of a system in the state $\rho$, with Hamiltonian $H$, is the set of limit-points $\x$ of sequences $(\x_1, \x_2, \dots)$, where $\x_k \in \V_k(\rho)$. Specifically, given $\x \in \V_\infty(\rho)$, there is a sequence $(\x_1, \x_2, \dots)$, with $\x_k \in \V_k(\rho)$ that converges to $\x$ under the standard vector norm on $\mathbb{R}^2$.
\end{definition}

\begin{lemma}[Asymptotic ensemble]
For a $d$-dimensional quantum system with Hamiltonian $H$, and state $\rho$ with $[\rho, H]=0$ and $\epsilon$-$s$ ensemble $\V(\rho)$, we have that
\begin{equation}
\V_\infty(\rho) = \mathrm{conv} [\V(\rho)], 
\label{eq: convex hull identity}
\end{equation}
where $\mathrm{conv}[S]$ denotes the convex hull of a set $S$.
\end{lemma}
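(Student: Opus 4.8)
The plan is to prove the two set inclusions $\V_\infty(\rho) \subseteq \mathrm{conv}[\V(\rho)]$ and $\mathrm{conv}[\V(\rho)] \subseteq \V_\infty(\rho)$ separately, using the explicit description of $\V_k(\rho)$ as the set of rational convex combinations $\sum_i q_i \v_i$ with $q_i \in \mathbb{Z}_k/k$. For the first inclusion, observe that every $\x_k \in \V_k(\rho)$ is by construction a convex combination of the finitely many points $\v_1,\dots,\v_d$, hence lies in $\mathrm{conv}[\V(\rho)]$; since the convex hull of a finite set in $\mathbb{R}^2$ is compact (in particular closed), any limit point of a sequence $(\x_k)$ with $\x_k \in \V_k(\rho)$ also lies in $\mathrm{conv}[\V(\rho)]$. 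So $\V_\infty(\rho) \subseteq \mathrm{conv}[\V(\rho)]$ is immediate.

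For the reverse inclusion, take an arbitrary $\x \in \mathrm{conv}[\V(\rho)]$. By definition of the convex hull, $\x = \sum_{i=1}^d q_i \v_i$ for some probability vector $(q_1,\dots,q_d)$ with non-negative real entries summing to $1$. The key step is to approximate the real weights $q_i$ by rationals with denominator $k$: for each $k$, set $c_i^{(k)} := \lfloor k q_i \rfloor$ and then adjust one of the counts (say the last nonzero one) so that $\sum_i c_i^{(k)} = k$ exactly, which is possible because $\sum_i \lfloor k q_i \rfloor$ differs from $k$ by at most $d-1$. Define $\x_k := \frac{1}{k}\sum_i c_i^{(k)} \v_i \in \V_k(\rho)$. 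Each weight $c_i^{(k)}/k$ differs from $q_i$ by $O(1/k)$, so $\|\x_k - \x\| \to 0$ as $k \to \infty$, exhibiting $\x$ as a limit point of a sequence with $\x_k \in \V_k(\rho)$; hence $\x \in \V_\infty(\rho)$. Combining the two inclusions gives \eqref{eq: convex hull identity}.

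The main obstacle — though a minor one — is the bookkeeping in the rational approximation step: one must check that the adjusted counts $c_i^{(k)}$ remain non-negative integers summing to $k$ and that the correction term, being bounded by $(d-1)/k$, does not spoil convergence. One clean way to handle this uniformly is to first assume $q_i > 0$ for all $i$ (the general case follows by restricting to the support of $(q_i)$, which only shrinks $d$), so that $\lfloor k q_i \rfloor \ge 1$ for all $k$ large enough, leaving plenty of room to absorb the deficit $k - \sum_i \lfloor k q_i \rfloor \le d-1$ into a single count without making it negative. Everything else is routine: the topology is that of $\mathbb{R}^2$ with the Euclidean norm, and compactness of $\mathrm{conv}[\V(\rho)]$ together with the density of rationals is all that is needed.
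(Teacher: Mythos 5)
Your proposal is correct and follows essentially the same route as the paper's proof: one inclusion via compactness/closedness of the convex hull of the finite set $\V(\rho)$, and the reverse inclusion by approximating the real weights $q_i$ with $\lfloor kq_i\rfloor/k$ and dumping the deficit into a single coordinate. Your worry about the adjusted count going negative is actually moot, since $\sum_i\lfloor kq_i\rfloor\le k$ means the correction only ever increases a count, so no positivity assumption on the $q_i$ is needed.
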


\begin{proof}
As specified in the definition of $\V_\infty(\rho)$, given an $\x \in \V_\infty (\rho)$, then for any $\delta >0$ there exists a sequence of vectors $\x_k \in \V_k(\rho)$ for each $k$, and an $N \in \mathbb{N}$ such that for all $k \ge N$ we have
\begin{equation}
||\x - \x_k || \le \delta.
\end{equation}
Clearly $\x_k \in \mbox{conv}[\V(\rho)]$ for all $k$. Moreover, since $\V(\rho)$ is finite dimensional, the set $\mbox{conv}[\V(\rho)]$ is a closed compact set in $\mathbb{R}^2$. The above inequality for all $\delta$ implies that $\x \in \mbox{conv}[\V(\rho)]$ and therefore $\V_\infty \subseteq \mbox{conv}[\V(\rho)]$.

Conversely, let $\x \in \mbox{conv}[\V(\rho)]$, such that $\x = \sum_{i=1}^d q_i \v_i$, where $q_i \in [0,1]$ for all $i$ and $\sum_{i=1}^d q_i = 1$. 
Construct the sequence of vectors $\x_k = \sum_{i=1}^{d} c_i \v_i$ with coefficients
\begin{equation}
	c_i = 
	\begin{cases}
		\frac{\lfloor q_i k \rfloor}{k}, &i=1,\dots,d-1, \\
		1 - \sum_{j=1}^{d-1}\frac{\lfloor q_j k \rfloor}{k}, &i=d,
	\end{cases}
\end{equation}
where $\lfloor \cdot \rfloor$ is the floor function defined by mapping a real number to the greatest integer that is not greater than this number.
Then, $\sum_{i=1}^d c_i = 1$ and $c_i k \in \mathbb{Z}_k \mbox{ for all }i=1,\dots,d$ and therefore $\x_k \in \V_k$ for all $k$.

We now prove that the sequence converges to $\x$ which implies that $\x \in \V_\infty$.
Let $q_i = \sum_{j=1}^{\infty} d_{i,j} 10^{-j}$ be the decimal digit expansion of the $i$-th coefficient of $\x$.
Let $\delta > 0$ and choose integer $N = 10^n$, where $n \in \mathbb{N}$ and $n > \max{\{ -\log{\delta}, 0\}}$. 
Then, for any $k \geq N$ we have,
\begin{align}
	\abs{\frac{\lfloor q_i k \rfloor}{k} - q_i} = q_i - \frac{\lfloor q_i k \rfloor}{k} \leq q_i - \sum_{j=1}^{n} d_{i,j} 10^{-j} = \sum_{j=n+1}^{\infty} d_{i,j} 10^{-j} < 10^{-n} < \delta.
\end{align}
The first equality follows from $\lfloor q_i k \rfloor \leq q_i k$ by definition of the floor function.
The first inequality follows from the bound $k \geq N$ along with the decimal digit expansion of $q_i$.
The last inequality follows from $-n < \min{\{ \log{\delta}, 0\}} \leq \log{\delta}$.
Therefore, every coefficient $c_i$ converges to $q_i$, including $c_d \rightarrow 1 - \sum_{j=1}^{d-1}q_j = q_d$, so the sequence $(\x_1, \x_2, \dots)$ converges to $\x$ under the standard vector norm.

This argument holds for any choice of $\x \in \mbox{conv}[\V(\rho)]$, so $\mbox{conv}[V(\rho)] \subseteq \V_\infty(\rho)$ completing the proof.
\end{proof}

\section{Proof of geometric athermality monotonicity}\label{appendix:area_monotone}

The general structure of the proof is as follows. For systems of dimension $d=1,2$, the geometric athermality is zero always and hence trivially non-increasing under activation trajectories. To prove the result for $d$-dimensional quantum systems with $d\ge3$, we proceed by considering a particular protocol for modifying the area by varying only 3 occupation probabilities in such a way that the ensemble area is monotonically non-increasing under infinitesimal isentropic and isoenergetic activation trajectories. We then utilise the defining property of the geometric athermality as the unique minimal area at each point on the $E$-$S$ plane to make a general claim about how the geometric athermality varies under any activation trajectory. To this end, we begin by introducing the following primitive which we call a \textit{qutrit deformation}.

\subsection{Qutrit deformations}

\begin{definition} (Qutrit deformation).  Consider a $d$-dimensional quantum system with $d\ge 3$ and a passive, non-Gibbsian state $\rho$ of the system with non-degenerate Hamiltonian $H$. We isolate a ``virtual qutrit''  by selecting  three non-colinear elements $\V_{\mathrm{qut}}(\rho)\coloneqq\{\bm{v}_{k_i} \}_{i \in \{1,2,3\}} \subseteq \V(\rho)$. Then we can define a qutrit deformation to be some infinitesimal change to the area of $\V_\infty(\rho)$, $A(\rho) \rightarrow A(\rho) + \dt A =A(\rho')$, generated by varying three occupation probabilities
\begin{align}
p_{k_i} &\rightarrow p_{k_i} + \dt p_{k_i} \ \ \mathrm{and} \ \ \sum_{k_i} \dt p_{k_i}=0, \quad \mathrm{for} \ \   i\in \{1,2,3\} 
\end{align}
of the virtual qutrit $\V_{\mathrm{qut}}(\rho)$ in isolation, in a region where all labels $k$ in~\eqref{eq:area} have a fixed correspondence to points.
\label{def: qutrit deformation}
\end{definition}
\begin{lemma} 
Given $\rho$ and $H$ as in~\defref{def: qutrit deformation}, then we can always perform qutrit deformation to $\rho \in \D_{(E,S)}$ for all $E,S$ in the region of passive states such that we move non-trivially in the $E$-$S$ plane under isentropic or isoenergetic processes.
\label{appx lemma: non-triv dE dS}
\end{lemma}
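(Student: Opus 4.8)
The plan is to reduce the statement to a single $2\times 2$ Jacobian computation. First I would note that, since $\rho$ is passive and non-Gibbsian, its $\epsilon$-$s$ ensemble $\V(\rho)$ cannot consist of colinear points: if all $d$ points $\bm v_i=(\epsilon_i,s_i)$ were colinear then, by the converse part of \thmref{theorem:Gibbs}, $\rho$ would be Gibbsian, contradicting the hypothesis. (Here the non-degeneracy of $H$ is used, so that the $d\ge 3$ points have pairwise distinct energies and ``not all colinear'' is a genuine condition.) Hence there exist three non-colinear elements $\bm v_{k_1},\bm v_{k_2},\bm v_{k_3}\in\V(\rho)$, which we take as the virtual qutrit $\V_{\mathrm{qut}}(\rho)$ of \defref{def: qutrit deformation}. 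I would then parametrise an arbitrary qutrit deformation supported on them by $(a,b)\coloneqq(\dt p_{k_1},\dt p_{k_2})$, with $\dt p_{k_3}=-(a+b)$ enforcing $\sum_{k_i}\dt p_{k_i}=0$ and all other $\dt p_j=0$.

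Next I would compute the induced infinitesimal motion in the $E$-$S$ plane. Since $\rho$ stays diagonal in the energy eigenbasis along the deformation, $\dt E=\tr[H\,\dt\rho]=\sum_{i=1}^3\epsilon_{k_i}\,\dt p_{k_i}=(\epsilon_{k_1}-\epsilon_{k_3})a+(\epsilon_{k_2}-\epsilon_{k_3})b$. For the entropy, $\dt S=-\sum_i(1+\log p_i)\,\dt p_i=\sum_i s_i\,\dt p_i$, which is valid because $\sum_i\dt p_i=0$ and $s_i=-\log p_i$, so $\dt S=\sum_{i=1}^3 s_{k_i}\,\dt p_{k_i}=(s_{k_1}-s_{k_3})a+(s_{k_2}-s_{k_3})b$. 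Thus $(\dt E,\dt S)$ is the image of $(a,b)$ under the linear map with matrix
\begin{equation}
J=\begin{pmatrix}\epsilon_{k_1}-\epsilon_{k_3} & \epsilon_{k_2}-\epsilon_{k_3}\\ s_{k_1}-s_{k_3} & s_{k_2}-s_{k_3}\end{pmatrix}.
\end{equation}

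The key observation is that $\det J=(\epsilon_{k_1}-\epsilon_{k_3})(s_{k_2}-s_{k_3})-(\epsilon_{k_2}-\epsilon_{k_3})(s_{k_1}-s_{k_3})$ is twice the signed area of the triangle with vertices $\bm v_{k_1},\bm v_{k_2},\bm v_{k_3}$, hence nonzero exactly because these three points are non-colinear. Therefore $(a,b)\mapsto(\dt E,\dt S)$ is a linear isomorphism of $\mathbb{R}^2$, so there is a choice of $(a,b)$ with $\dt S=0$ and $\dt E\neq 0$ (an isentropic step that moves the state non-trivially) and a choice with $\dt E=0$ and $\dt S\neq 0$ (an isoenergetic step that moves it non-trivially). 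Because $\rho$ is full rank, every $p_{k_i}>0$, so for $|a|,|b|$ small enough the perturbed probabilities stay strictly positive and $\rho'$ remains a passive state for which \eqref{eq:area} holds with the same label-to-vertex correspondence; these are therefore admissible qutrit deformations. Since $\rho$ ranges over all non-Gibbsian passive states, and hence $(E,S)$ over the passive region minus the equilibrium curve, the claim follows.

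I do not anticipate a genuine obstacle: the content is essentially the identity ``Jacobian $=$ triangle area'', with the non-colinearity of the virtual qutrit entering in an essential way. The only point requiring care is admissibility, namely that the perturbed $\rho'$ remains full rank, passive, and inside the region where the trapezium-rule area \eqref{eq:area} is a smooth function of the occupation probabilities. This is routine for infinitesimal perturbations: strict positivity of the eigenvalues and strictness of the anti-ordering inequalities of \thmref{theorem: passivity conditions} are open conditions, and when some $p_i$ coincide one may instead perturb along the corresponding boundary face (which keeps $\rho'$ passive) or, for $d>3$, choose a different non-colinear triple, neither of which affects the rank-$2$ conclusion for $J$.
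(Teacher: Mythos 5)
Your proof is correct, and it rests on the same underlying fact as the paper's: non-colinearity of the chosen virtual qutrit makes the linear relation between the probability increments and $(\dt E,\dt S)$ non-degenerate. The difference is one of packaging. The paper eliminates variables explicitly, solving the constrained system separately for the isentropic and isoenergetic cases and expressing the surviving coefficient through virtual temperatures $\beta_{i,j}$ (Eqs.~(\ref{eq: appC dE}) and (\ref{eq: appC dS})), which forces a case split on $p_2=p_3$ versus $p_2\neq p_3$ and a check that the bracketed factor does not vanish. You instead observe that $(a,b)\mapsto(\dt E,\dt S)$ is a single $2\times 2$ linear map whose determinant is twice the signed area of the virtual-qutrit triangle, hence nonzero; invertibility then delivers both the isentropic and isoenergetic non-trivial steps in one stroke, with no case analysis. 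This is cleaner, and it dovetails with the paper's own subsequent computation $\det M = 2A_{\mathrm{qut}}$ in the following lemma (your $J$ is just $M$ with the normalisation row eliminated). Your attention to admissibility -- strict positivity of the $p_{k_i}$ and the fixed label-to-point correspondence in \eqref{eq:area} for sufficiently small increments -- is a point the paper's proof of this particular lemma glosses over, so no gap there either.
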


\begin{proof}
Consider a passive, but not completely passive, state $\rho$ of a $d$-dimensional quantum system for $d\ge3$ with non-degenerate Hamiltonian $H$. Let us label the three vertices of our virtual qutrit as $\bm{v}_i \in \V_\mathrm{qut}(\rho)$ for $i \in \{1,2,3 \}$, and assume without loss of generality that $\bm{v}_1 \le \bm{v}_2 \le \bm{v}_3$. Let us further assume that the points in $\V_\mathrm{qut}(\rho)$ are not colinear (i.e. our virtual qutrit is not a `virtual Gibbs state'). So long as $\rho$ is not completely passive there are always at least three such non-colinear points to vary. Then, for an isentropic qutrit deformation, the two constraints given by conservation of probabilities $\sum_{i = 1}^3 \dt p_i =0$ and conservation of entropy $\dt S =0$ reduce the differential of energy in terms of the variation of the single variable
\begin{equation}
	\dt E = 
	\begin{cases}
		\Delta_{3,1}\left[  \frac{\beta_{1,3}}{\beta_{2,3}} -1  \right] \dt p_1, &p_2 \neq p_3, \\
		-\Delta_{3,2} \dt p_2, &p_2=p_3,
	\end{cases}
\label{eq: appC dE}
\end{equation}
Following a similar procedure for isoenergetic processes we obtain the following expression
\begin{equation}
\dt S = \log\frac{p_3}{p_1} \left[ 1- \frac{\beta_{2,3}}{\beta_{1,3}} \right] \dt p_1, 
\label{eq: appC dS}
\end{equation}
where $p_1 \neq p_3$ by our assumption that $\V_\mathrm{qut}(\rho)$ are not colinear and $\bm{v}_1 \le \bm{v}_2 \le \bm{v}_3$. Inspection of~Eqs.~(\ref{eq: appC dE})~and~(\ref{eq: appC dS}) reveals that finite changes to the probabilities generate finite changes in energy and entropy, respectively, unless the three points $\bm{v}_i$ for $i \in \{ 1,2,3 \}$ are colinear, which they are not by assumption. This completes the proof.
\end{proof}

\begin{lemma}
Given $\rho$ and $H$ as in~\defref{def: qutrit deformation}, then the general expression for the differential of area $A(\rho)$ under a qutrit deformation is given by
\begin{align}
\dt A &= \partial_E A \ \dt E +   \partial_S A \ \dt S , \notag \\
\mathrm{where} \quad \partial_x A &= \frac{1}{4A_{\rm{qut}}} \sum_{i=1}^3 \frac{ \Delta_{k_i + 1, k_i -1}}{p_{k_i}} \left[   (s_{k_{(i+1)}}-s_{k_{(i-1)}}) \delta_{x,E}
    - \Delta_{k_{(i+1)},k_{(i-1)}} \delta_{x,S}\right] \chi[\bm{v}_{k_i}], \quad x \in\{E,S\},
\label{appx eq: dA qutrit}
\end{align}
where the $i$ labels run modulo $3$ over points in $\V_\mathrm{qut}(\rho)$, the $k_i$ labels run modulo $n$ clockwise around the set of $n$ vertices $\V_{\mathrm{vert}}(\rho)$, and $A_{\mathrm{qut}}$ is the area associated with the virtual qutrit formed by the points in $\V_{\mathrm{qut}}(\rho)$. Furthermore, we have defined the characteristic function $\chi[\bm{v}_k]$ such that
\begin{equation}
\chi[\bm{v}_k] \coloneqq 
\begin{cases}
    1,& \text{if } \bm{v}_k \in \V_{\mathrm{vert}}(\rho), \\
    0,              & \text{otherwise.}
\end{cases}
\end{equation}
We note that $A_{\mathrm{qut}}$ is non-zero since $\V_{\mathrm{qut}}(\rho)$ is by definition assumed to be a set on non-colinear points.
\end{lemma}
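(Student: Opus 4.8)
The plan is to differentiate the trapezium formula~\eqref{eq:area} directly, using that a qutrit deformation keeps the combinatorial type of the convex hull fixed, and then to re-express the resulting linear functional of $(\dt p_{k_1},\dt p_{k_2},\dt p_{k_3})$ in the basis $(\dt E,\dt S)$ via a small linear solve whose determinant is the area of the virtual qutrit.

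First I would collect the elementary differentials. Under the deformation only $p_{k_1},p_{k_2},p_{k_3}$ vary, so $\dt s_{k_i}=-\dt p_{k_i}/p_{k_i}$, while every energy, and hence every spacing $\Delta_{i,j}$ appearing in~\eqref{eq:area}, is held fixed; and since by~\defref{def: qutrit deformation} the deformation is carried out in a region where $\V_{\mathrm{vert}}(\rho)$ together with its clockwise cyclic order does not change, differentiating~\eqref{eq:area} gives $\dt A=\tfrac12\sum_{k\bmod n}\Delta_{k-1,k+1}\,\dt s_k$. This sum collapses onto the three deformed points, and each of them contributes only if it is actually a hull vertex --- this is exactly where the factor $\chi[\bm v_{k_i}]$ enters, and it is legitimate because a non-vertex point can be moved infinitesimally without becoming a vertex, so $A$ is locally independent of its $s$-coordinate. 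Using $\Delta_{k_i-1,k_i+1}=-\Delta_{k_i+1,k_i-1}$ one obtains $\dt A=\sum_{i=1}^3 a_i\,\dt p_{k_i}$ with $a_i\coloneqq\tfrac12\,\chi[\bm v_{k_i}]\,\Delta_{k_i+1,k_i-1}/p_{k_i}$. In the same way $\dt E=\sum_i\epsilon_{k_i}\,\dt p_{k_i}$, and $\dt S=-\sum_i(\log p_{k_i}+1)\,\dt p_{k_i}=\sum_i s_{k_i}\,\dt p_{k_i}$, the additive constant dropping out by $\sum_i\dt p_{k_i}=0$.

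Next I would restrict to the two-dimensional constraint surface $\{\sum_i\dt p_{k_i}=0\}$. On it, the functionals $\dt E$ and $\dt S$ are linearly independent precisely when the three points $(\epsilon_{k_i},s_{k_i})$ are affinely independent, i.e.\ when the determinant of the matrix with rows $(1,\epsilon_{k_i},s_{k_i})$ --- which equals twice the signed area of the virtual-qutrit triangle --- is non-zero; this is guaranteed by the standing non-colinearity hypothesis (equivalently, by~\lemref{appx lemma: non-triv dE dS}, which also ensures the deformation moves non-trivially in the $E$-$S$ plane). Hence $\dt A$ has a unique expansion $\dt A=\partial_E A\,\dt E+\partial_S A\,\dt S$ on this surface; equivalently, there is a constant $\lambda$ with $a_i=\partial_E A\,\epsilon_{k_i}+\partial_S A\,s_{k_i}+\lambda$ for $i=1,2,3$. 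Solving this $3\times3$ system for $(\partial_E A,\partial_S A)$ by Cramer's rule, the common denominator is the determinant above, equal to $2A_{\mathrm{qut}}$ with the clockwise labelling fixing the sign, and expanding each numerator along the column of the $a_i$ produces exactly the cyclic combinations $s_{k_{i+1}}-s_{k_{i-1}}$ (for $\partial_E A$) and $\epsilon_{k_{i-1}}-\epsilon_{k_{i+1}}=-\Delta_{k_{i+1},k_{i-1}}$ (for $\partial_S A$). Substituting the value of $a_i$ and combining the two cases with the Kronecker symbols then yields~\eqref{appx eq: dA qutrit}.

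The routine parts are the two one-line differentiations and the Cramer's-rule expansion; the point that needs genuine care --- and the main obstacle --- is the index and orientation bookkeeping: matching the clockwise-modulo-$n$ labelling of the hull vertices against the modulo-$3$ labelling of the virtual qutrit, and checking that the clockwise convention makes the signed triangle area come out as $+A_{\mathrm{qut}}$ rather than $-A_{\mathrm{qut}}$, so that the prefactor is $1/(4A_{\mathrm{qut}})$ with the sign as stated. One should also record the claim underlying the use of $\chi$, namely that the subset of hull vertices among $\{\bm v_{k_1},\bm v_{k_2},\bm v_{k_3}\}$ is constant throughout the deformation; this is exactly the meaning of restricting to the region in which all labels $k$ in~\eqref{eq:area} have a fixed correspondence to points, as imposed in~\defref{def: qutrit deformation}.
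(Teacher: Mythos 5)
Your proposal is correct and follows essentially the same route as the paper: differentiate the trapezium formula to get $\dt A=\tfrac12\sum_i \chi[\bm v_{k_i}]\,\Delta_{k_i+1,k_i-1}\,\dt p_{k_i}/p_{k_i}$, then use the $3\times3$ matrix built from $(\epsilon_{k_i},s_{k_i},1)$ with determinant $2A_{\mathrm{qut}}$ to convert between the $\dt p$ and $(\dt E,\dt S)$ descriptions. The paper inverts $M$ to write $\bm{\dt p}=M^{-1}\bm{\dt V}$ and substitutes, whereas you solve the transposed system for $(\partial_E A,\partial_S A,\lambda)$ by Cramer's rule; these are the same computation, and your explicit justification of the $\chi$ factor and of the fixed hull combinatorics matches the paper's standing assumption in the definition of a qutrit deformation.
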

\begin{proof}
From~\eqref{eq:area} and the fact that $\dt p_{k_i}=0$ for $i\notin \{1,2,3\}$ we obtain the following expression for the differential of the area under a qutrit deformation
\begin{equation}
\dt A = \frac{1}{2}\sum_{i=1}^3 \frac{\Delta_{k_i +1, k_i-1}}{p_{k_i}} \chi[\bm{v}_{k_i}] \dt p_{k_i}.
\label{appx: dA(dp)}
\end{equation}
 Defining $\bm{\dt p} \coloneqq (\dt p_{k_1}, \dt p_{k_2},\dt p_{k_3})^T$ and $\bm{\dt V}\coloneqq (\dt E, \dt S,0)^T$, the system of three equations from the differentials of energy, entropy and normalisation can be written compactly as 
\begin{equation}
    \bm{\dt V} = M \bm{\dt p}, \quad M\coloneqq \begin{pmatrix} 
  \epsilon_{k_1}    & \epsilon_{k_2}  & \epsilon_{k_3} \\ 
s_{k_1} & s_{k_2} & s_{k_3} \\
  1 & 1 & 1 \\
\end{pmatrix}.
\label{appx: dV=Mdp}
\end{equation}
The transformation matrix $M$ is invertible because the virtual qutrit is not completely passive by assumption, and so the columns of $M$ are linearly independent. Moreover, the determinant is related to the area of the qutrit ensemble via the simple relation
\begin{equation}
	\det M = 2 A_{\mathrm{qut}}> 0.
\end{equation}
Therefore, we can invert~\eqref{appx: dV=Mdp}, such that $\bm{\dt p} = M^{-1} \bm{\dt V}$, for which we obtain
\begin{equation}
    \dt p_{k_i} = \bm{e}_{k_i} \cdot \bm{\dt p}= \frac{1}{2 A_{\mathrm{qut}}} \left[ (s_{k_{(i+1)}}-s_{k_{(i-1)}})\dt E 
    - \Delta_{k_{(i+1)},k_{(i-1)}}\dt S \right], \quad i\in \{1,2,3\},
\label{appx: dpk}
\end{equation}
where $\bm{e}_{k_1}\coloneqq(1,0,0)$ etc. Direct substitution of~\eqref{appx: dpk} into~\eqref{appx: dA(dp)} gives the result.
\end{proof}

\subsection{Monotonically non-increasing deformations}

\begin{definition} (Monotonically non-increasing deformation). Consider a $d$-dimensional quantum system with $d\ge 3$ and passive, full rank, non-Gibbsian state $\rho$ of the system with non-degenerate Hamiltonian $H$. Then we define a monotonically non-increasing deformation to be some infinitesimal change to the area of $\V_\infty(\rho)$, $A(\rho) \rightarrow A(\rho) + \dt A=A(\rho')$, generated by varying any number of occupation probabilities $p_{k_i} \rightarrow p_{k_i} + \dt p_{k_i} \ \ \mathrm{and} \ \ \sum_{k_i} \dt p_{k_i}=0, \quad \mathrm{for} \ \   i\in \{1,\dots, m\},$ where $m\le d$, in such a way that we obtain
\begin{equation}
    \dt A = \partial_E A  \ \dt E + \partial_S A \ \dt S, \quad \mathrm{where} \, \, \partial_E A \ge 0, \, \partial_S A \le 0.
\label{appx: dA mon dec deformation}
\end{equation}
Then for $\dt E<0$ or $\dt S>0$ in~\eqref{appx: dA mon dec deformation} we have $\dt A\le0$.
\label{defn: monotonically non-increasing deformation}
\end{definition}

\begin{definition} (Upper branch). We define the upper branch $\V_{\mathrm{upper}}(\rho) \subseteq \V_\mathrm{vert}(\rho)$ such that 
\begin{equation}
\V_{\mathrm{upper}}(\rho) \coloneqq \{ \bm{v}_{k} \in \V_{\mathrm{vert}}(\rho) \ : \   \bm{v}_{k-1}  \le \bm{v}_{k} \le \bm{v}_{k+1} \},
\end{equation} 
where the $k$ labels are ordered clockwise mod $n$ around $\V_\infty(\rho)$.
\end{definition}
We can then define the lower branch as follows.
\begin{definition} (Lower branch). We define the lower branch $\V_{\mathrm{lower}}(\rho) \subset \V_\mathrm{vert}(\rho)$ as the set difference
\begin{equation}
\V_{\mathrm{lower}}(\rho) \coloneqq \V_\mathrm{vert}(\rho) \setminus \V_{\mathrm{upper}}(\rho)   .
\end{equation}
\end{definition}

\begin{definition} (Face points). We define the set of face points $\V_\mathrm{face}^k(\rho) \subset \V(\rho)$ associated with a given vertex point $\bm{v}_k \in \V_\mathrm{vert}(\rho)$ as 
\begin{equation}
\V_\mathrm{face}^k(\rho) \coloneqq \{ \bm{v}_j \in \V(\rho) \ : \  \bm{v}_j \coloneqq q_j \bm{v}_{k} + (1-q_j)\bm{v}_{k+1 } \ \mathrm{or} \ \bm{v}_j \coloneqq q_j \bm{v}_{k} + (1-q_j)\bm{v}_{k-1} \} 
\end{equation}
for any real-valued $\{q_j\}$ satisfying $0<q_j<1, \forall j$.
\end{definition}

\begin{lemma}
Let $\rho$ and $H$ be defined as in~\defref{defn: monotonically non-increasing deformation}. Then for any such initial state $\rho \in \D_{(E,S)}, \forall E,S$ there exists at least one monotonically non-increasing deformation.
\label{appx lemma: mon dec deformation}
\end{lemma}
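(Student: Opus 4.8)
The plan is to produce, for an arbitrary admissible $\rho$, an explicit deformation of its occupation probabilities whose area differential has the sign pattern demanded by~\defref{defn: monotonically non-increasing deformation}, namely $\partial_E A\ge 0$ and $\partial_S A\le 0$. First I would dispose of the trivial cases: if $A(\rho)=0$ the statement is vacuous, so assume $\rho$ is passive, full rank and non-Gibbsian under a non-degenerate $H$. By~\lemref{lem:convV}, $\V_\infty(\rho)=\mathrm{conv}[\V(\rho)]$ is then a genuine convex polygon of positive area whose upper and lower boundaries are piece-wise-linear, monotone non-decreasing, and concave/convex respectively (Appendix B); it has $n\ge 3$ vertices, and since its leftmost and rightmost vertices are distinct, at least one of the two branches (say the upper one, the other case being symmetric) carries a strict interior vertex.

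I would then treat the triangle case as a self-contained base case — this includes every $d=3$ system, where $|\D_{(E,S)}|=1$ forces $\V_\infty(\rho)$ to be a triangle. Here $\V_{\mathrm{qut}}\coloneqq\V(\rho)=\V_{\mathrm{vert}}(\rho)$ is a non-colinear triple, so the qutrit deformation of~\defref{def: qutrit deformation} is well defined and, crucially, every vertex has both of its cyclic neighbours inside the qutrit. Substituting into~\eqref{appx eq: dA qutrit}, each of the three summands of $\partial_S A$ collapses to a term proportional to $-(\Delta\epsilon)^2/p<0$, while each summand of $\partial_E A$ becomes proportional to $(\Delta\epsilon)(\Delta s)/p$, which is positive because passivity forces the ensemble to be anti-ordered (\lemref{theorem: passivity conditions total order}), so $\Delta\epsilon$ and $\Delta s$ share a sign. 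Hence $\partial_E A>0$ and $\partial_S A<0$, exhibiting a monotonically non-increasing deformation.

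For $n>3$ vertices one can no longer arrange for all three points of a single virtual qutrit to have both cyclic neighbours in the qutrit: taking three consecutive boundary vertices $\v_{k_1}\le\v_{k_2}\le\v_{k_3}$ on the upper branch, the middle vertex $\v_{k_2}$ still contributes the favourable ``triangle-type'' term, but the two flanking vertices couple to their outside neighbours and give terms of \emph{a priori} indefinite sign. The plan is to neutralise these: starting from such a triple, superpose further qutrit deformations supported on overlapping consecutive triples sliding along the upper branch, with coefficients chosen so that adjacent flanking contributions cancel in a telescoping fashion (probability conservation survives since it is a linear constraint), leaving a net deformation — now varying up to $d$ probabilities, which~\defref{defn: monotonically non-increasing deformation} explicitly permits — for which $\partial_E A$ is a positive combination and $\partial_S A$ a negative combination of middle-type terms, the relevant sign bounds coming from concavity of the upper branch (its virtual temperatures being monotone decreasing) and, in the mirror case, convexity of the lower branch. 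Since the construction applies to a representative state in every passive class $\D_{(E,S)}$, this proves the claim.

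I expect the main obstacle to be precisely this last step: turning the heuristic ``shrink the polygon towards its leftmost-to-rightmost chord'' into a deformation that is provably $\partial_E A\ge 0$, $\partial_S A\le 0$ for an \emph{arbitrary} passive polygon — i.e. controlling, or cancelling, the flanking-vertex contributions in full generality rather than only in the triangle case. Two available levers are the freedom to choose which interior vertex plays the role of $\v_{k_2}$ (e.g. the topmost vertex of the branch, which makes the flanking virtual temperatures ordered most favourably), and the linearity of the differential, which lets many small qutrit moves be combined into one admissible deformation.
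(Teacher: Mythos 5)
Your base case is sound and is essentially the paper's own starting point: when the three chosen points exhaust $\V_{\mathrm{vert}}(\rho)$, the two cyclic structures in \eqref{appx eq: dA qutrit} (mod $n$ around the polygon and mod $3$ around the qutrit) coincide, each summand of $\partial_S A$ is proportional to $-(\Delta\epsilon)^2/p$, and passivity makes each summand of $\partial_E A$ non-negative. You have also correctly diagnosed the difficulty for $n>3$: the flanking vertices of a consecutive triple carry the product $\Delta_{k_i+1,k_i-1}\,\Delta_{k_{(i+1)},k_{(i-1)}}$ with the two factors computed in different cyclic orderings, and for three consecutive upper-branch vertices this product is genuinely negative. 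The gap is in your resolution. The paper does \emph{not} superpose deformations: it keeps a \emph{single} virtual qutrit (only three probabilities varied), observes that passivity reduces the required sign pattern to the purely energetic constraints \eqref{eq:energetic constraints}, and then satisfies these by a finite case analysis on the upper/lower branch structure (\figref{fig:virtual_qutrit}) — the chosen qutrit generally mixes vertices from both branches rather than sliding along one. Your telescoping scheme is only sketched and does not obviously close: the flanking contribution of a vertex $\bm{v}_k$ in the triple $(k-1,k,k+1)$ carries the prefactor $1/(2A_{\mathrm{qut}})$ of \emph{that} triple and a mod-$3$ difference of the other two members of \emph{that} triple, while its contribution in the neighbouring triple $(k,k+1,k+2)$ carries a different $A_{\mathrm{qut}}$ and a different mod-$3$ difference, so adjacent terms do not cancel without explicitly solving for the weights; moreover, once more than three probabilities vary, \eqref{appx: dV=Mdp} is no longer invertible and the very map from $(\dt E,\dt S)$ to $\dt\bm{p}$ that underlies \eqref{appx eq: dA qutrit} must be re-derived.

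Two further omissions: you never invoke \lemref{appx lemma: non-triv dE dS}, so you have not shown that your deformation actually produces $\dt E<0$ or $\dt S>0$ (a deformation with $\dt A=0$ and no motion in the $E$-$S$ plane would be vacuous for the subsequent monotonicity argument); and you do not address face points, i.e.\ elements of $\V(\rho)$ lying on boundary segments of $\V_\infty(\rho)$, which can enter or leave the vertex set under an infinitesimal change and break the continuity of \eqref{eq:area} — the paper handles this by replacing such vertices so that the labelling in \eqref{eq:area} has a fixed correspondence to points throughout the deformation.
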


\begin{proof}
Consider an initial passive, non-Gibbsian state $\rho\in \D_{(E,S)}$ of a $d$-dimensional quantum system with $d\ge 3$ and non-degenerate Hamiltonian $H$. We show that via a careful selection of points $\V_{\mathrm{qut}}(\rho) \coloneqq \{\bm{v}_{k_i}\}_{i\in \{1,2,3\}}$ we can always find a qutrit deformation of $\rho$ such that we obtain $\partial_E A \ge 0$ and $\partial_S A \le 0$, for infinitesimal transformations. 

Inspection of \eqref{appx eq: dA qutrit} reveals that we obtain a monotonically non-increasing deformation if we choose our virtual qutrit such that $\Delta_{k_i +1, k_i-1} (\bm{v}_{k_{(i+1)}}-\bm{v}_{k_{(i-1)}}) \ge \bm{0}$ for each $i\in \{1,2,3\}$. Since $\rho$ is passive, this reduces to the following set of energetic constraints:
\begin{equation}
\Delta_{\underbrace{\scriptstyle k_i +1, k_i-1}_{\mathrm{mod} \ n}} \Delta_{\underbrace{\scriptstyle k_{(i +1)}, k_{(i-1)}}_{\mathrm{mod} \ 3}} \ge 0, \quad \mathrm{for} \ i \in \{ 1,2,3 \}.
\label{eq:energetic constraints}
\end{equation}
To show that it is always possible to find such a set from any initial passive configuration, we can therefore project our ensemble $\V_{\mathrm{vert}}(\rho)$ onto the $\epsilon$-axis and consider the relative signs of these energy differences. Importantly, since our Hamiltonian is assumed to be non-degenerate the mapping $\bm{v}_k \rightarrow \epsilon_k, \forall \bm{v}_k \in \V_\mathrm{vert}$ is a bijection.

We now state the protocol which is shown graphically in \figref{fig:virtual_qutrit}. We first consider the case where $\V_{\mathrm{upper}}(\rho)$ is not empty. The extremal points are always in the upper branch (white), so purely in terms of combinatorics there are four cases we need to consider, shown in \figref{fig:virtual_qutrit}(a)-(d). By choosing the qutrit shown in each case, it is readily verified that the three equations~\eqref{eq:energetic constraints} are satisfied. On the other hand, if $\V_{\mathrm{upper}}(\rho)$ is empty, by inspection the configuration shown in \figref{fig:virtual_qutrit}(e) again satisfies conditions~\eqref{eq:energetic constraints}. 
\begin{figure}[h]
\centering
    \includegraphics[width=0.9\textwidth]{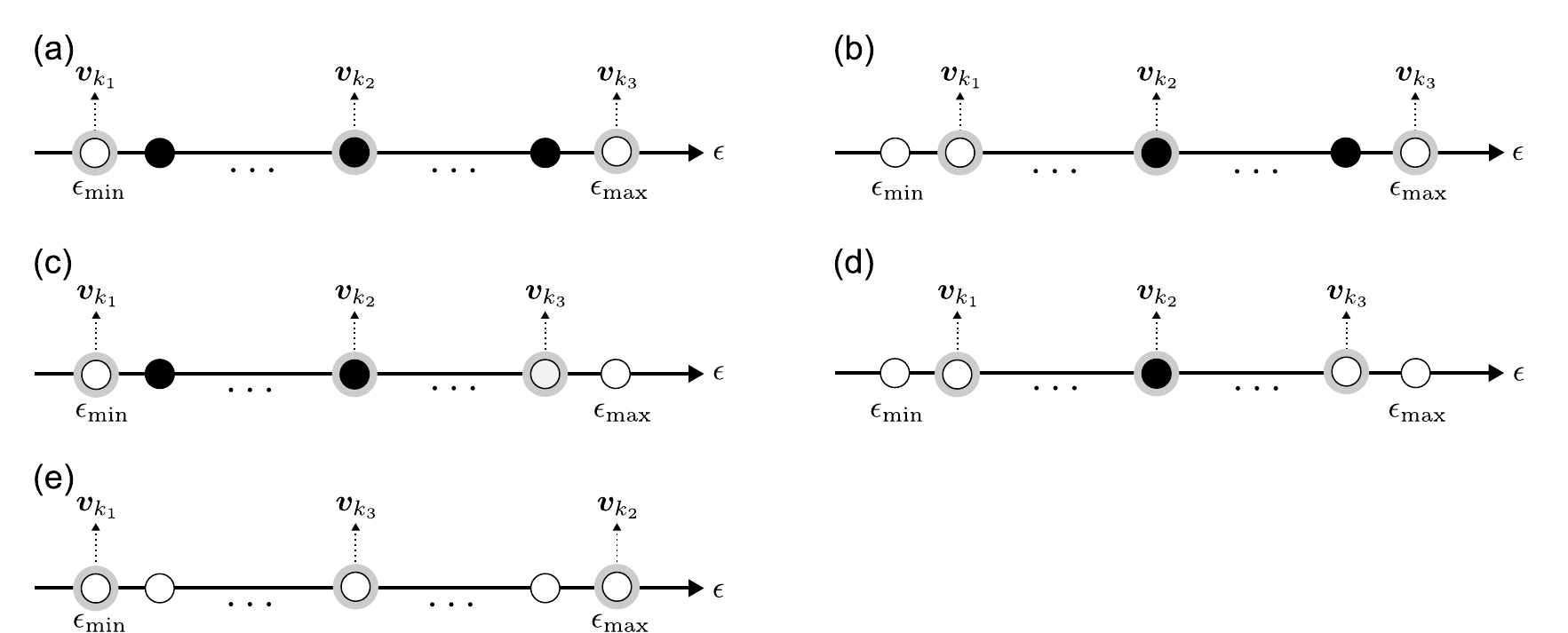}
    \caption{\textbf{Virtual qutrit selection.} Black (white) dots indicate enemble elements $\v_k$ corresponding to some state $\rho$ belonging to the upper (lower) branch, whereas larger grey dots indicate an appropriate virtual qutrit selection for each respective case when: (a)-(d) $\abs{\V_\mathrm{upper}(\rho)} \ge 1$, and (e)  $\abs{\V_\mathrm{upper}(\rho)} = 0$.
    }\label{fig:virtual_qutrit}
\end{figure}

Complications can arise if there are \textit{face points} associated with one or more of the vertices of our chosen virtual qutrit. Then our assumption that our function for computing area~\eqref{eq:area} is continuous over the domain on which the transformation takes place may break down. To combat this issue, in such cases, we can always appropriately replace the vertices of our virtual qutrit with face points or interior points in such a way that all $k$ labels in~\eqref{eq:area} have a fixed correspondence to points before and after the transformation. If all points get replaced with non-vertex points the qutrit deformation will yield $dA=0$ from~\eqref{appx eq: dA qutrit}, which is indeed monotonically non-increasing.

For any initial passive state $\rho$, the protocol described above returns a set of points $\V_{\mathrm{qut}}(\rho)$ that give rise to $\partial_E A \ge0$ and $\partial_S A \le0$ under a qutrit deformation. Moreover, from~\lemref{appx lemma: non-triv dE dS} we have that this deformation allows for finite $\dt E<0$ or $\dt S>0$. Therefore, we can always find a qutrit deformation that is a monotonically non-increasing deformation concluding the proof of~\lemref{appx lemma: mon dec deformation}.

\end{proof}

\subsection{Proving monotonicity}
We are now in a position to provide our proof of~\thmref{thm:area_monotone}, which we restate for the purpose of clarity.
\begin{theorem} (Restatement).
    For any $d$--dimensional quantum system with non-degenerate Hamiltonian $H$, the geometric athermality $\Ac$ of a passive state $\rho$ is monotonically non-increasing along all activation trajectories.
\end{theorem}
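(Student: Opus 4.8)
The plan is to combine the existence result \lemref{appx lemma: mon dec deformation} with the fact that \defref{def:area_canonical} defines $\Ac(E,S)$ as the \emph{minimum} of the area over the entire equivalence class $\D_{(E,S)}$. The cases $d=1,2$ are immediate, since $\V_\infty(\rho)$ is then at most a line segment and $\Ac\equiv 0$. So assume $d\ge 3$ and fix an activation trajectory $t\mapsto\gamma(t)=(E(t),S(t))$ in the passive region of the $E$-$S$ diagram, with tangent $\bm{u}(t)=(u_E(t),u_S(t))$ satisfying $u_E\le0$, $u_S\ge0$ pointing towards the equilibrium manifold as in \defref{def:activtraj}. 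Writing $g(t):=\Ac(\gamma(t))$, the goal is to show $g$ is non-increasing.

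First I would reduce the claim to the following infinitesimal statement: for every $t_0$ there is a passive state $\rho'$ with $\big(E(\rho'),S(\rho')\big)$ lying on $\gamma$ a small positive parameter-step further along, and $A(\rho')\le g(t_0)$. To construct $\rho'$, pick $\rho^*\in\D_{\gamma(t_0)}$ attaining (or, since $\Ac$ is an infimum, approaching to within arbitrary $\varepsilon$) the value $A(\rho^*)=g(t_0)$; it may be taken full-rank by the perturbation remark of the main text. If $\rho^*$ is Gibbsian then $g(t_0)=0$, and since $\Ac\ge 0$ everywhere and an activation trajectory cannot leave the equilibrium manifold once on it, $g$ stays zero and there is nothing to prove. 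Otherwise $\rho^*$ is passive and non-Gibbsian, so \lemref{appx lemma: mon dec deformation} provides a monotonically non-increasing deformation of $\rho^*$, i.e. a variation of (three) occupation probabilities with $\partial_E A\ge0$ and $\partial_S A\le0$. Because the matrix $M$ of \eqref{appx: dV=Mdp} relating the probability variations to $(\dt E,\dt S)$ is invertible (its determinant is $2A_{\mathrm{qut}}>0$), this deformation can be steered so that $(\dt E,\dt S)=\bm{u}(t_0)\,\dt t$ for infinitesimal $\dt t>0$; by \lemref{appx lemma: non-triv dE dS} such a motion is genuinely realised by finite probability changes. Then \eqref{appx: dA mon dec deformation} gives $\dt A=\partial_E A\,\dt E+\partial_S A\,\dt S\le0$ since $\dt E\le0$ and $\dt S\ge0$. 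Integrating this controlled deformation over a short parameter interval produces the required $\rho'$, with $A(\rho')=g(t_0)+\dt A\le g(t_0)$ and $\big(E(\rho'),S(\rho')\big)=\gamma(t_0+\dt t)+o(\dt t)$.

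The minimum in \defref{def:area_canonical} then closes the loop: $\rho'\in\D_{\gamma(t_0+\dt t)}$ (the $o(\dt t)$ drift off the trajectory being absorbed using continuity of $\Ac$ on the passive region, itself a consequence of $A$ being continuous on full-rank passive states together with the compact, continuously-varying nature of $\D_{(E,S)}$), so
\begin{equation}
g(t_0+\dt t)=\Ac\big(\gamma(t_0+\dt t)\big)\le A(\rho')\le g(t_0).
\end{equation}
This yields $\limsup_{h\to0^+}\big(g(t_0+h)-g(t_0)\big)/h\le0$ at every $t_0$, which together with the continuity of $g$ implies that $g$ is monotonically non-increasing along $\gamma$. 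A general piecewise-differentiable activation trajectory is handled by chaining such infinitesimal steps.

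I expect the main obstacle to be technical bookkeeping rather than conceptual difficulty: (i) justifying that the minimising $\rho^*$ exists, or else working with $\varepsilon$-minimisers and letting $\varepsilon\to0$, given that $A(\rho)$ diverges as $\rho$ approaches the boundary of full-rank states; (ii) establishing the continuity of $\Ac$ needed to absorb the $o(\dt t)$ error; and (iii) checking that the steered qutrit deformation keeps the state inside the passive region, which is generic since $H$ is non-degenerate and the deformation can always be nudged inwards, while face-point degeneracies are already dealt with inside the proof of \lemref{appx lemma: mon dec deformation}. None of these obstruct the strategy; all the geometric content is carried by the qutrit-deformation primitive together with the min-over-the-class definition of $\Ac$.
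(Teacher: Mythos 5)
Your proposal is correct and follows essentially the same strategy as the paper: apply the monotonically non-increasing qutrit deformation to a minimiser of $A$ in $\D_{(E,S)}$ and then use the infimum in \defref{def:area_canonical} to obtain $\Ac(\text{new point}) \le A(\rho') \le \Ac(\text{old point})$. The only cosmetic difference is that the paper splits the step into isentropic and isoenergetic components to establish $\partial_E \Ac \ge 0$ and $\partial_S \Ac \le 0$ separately before combining them in the directional derivative, whereas you steer the deformation directly along the trajectory tangent; both routes rest on the same lemmas and yield the same conclusion.
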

\begin{proof}
In the following we restrict our attention to full rank states. 
This is desirable because the geometric athermality always remains finite and it is justified because points on the $E$-$S$ diagram that contain full rank states form a connected dense subset of the passive region.

Consider an initial passive, full rank, non-Gibbsian state $\rho\in \D_{(E,S)}$ of a $d$-dimensional quantum system with $d\in\mathbb{N}$ and non-degenerate Hamiltonian $H$. For $d=1,2$ the area is zero for all $(E,S)$ and therefore $\Ac(E,S)$ is trivially non-increasing under activation trajectories. Now let $d\geq 3$ and suppose that we pick the initial passive state $\rho \in \D_{(E,S)}$ with such an arrangement of populations that the ensemble area $A(\rho)$ assumes the geometric athermality $\Ac(E,S)$. We now consider an isentropic monotonically non-increasing deformation $\Ac(E,S)\rightarrow \Ac(E,S) -\partial_E A \dt E \coloneqq A(\rho') $ that takes us to some new state $\rho' \in \D_{(E-\dt E,S)}$. From~\lemref{appx lemma: mon dec deformation} such a deformation always exists, but is not guaranteed to take us to an area equal to the geometric athermality of the point $(E-\dt E, S)$. However, we can use the fact that $\partial_E A \ge 0$ under monotonically non-increasing deformations to show the following
\begin{equation}
	\Ac(E,S) \ge A(\rho') \ge \Ac(E-\dt E,S), 
\label{appx eq: Ac> Ac(E-dE)}
\end{equation}
where the second inequality follows from the definition of geometric athermality. \eqref{appx eq: Ac> Ac(E-dE)} implies that an infinitesimal change in geometric athermality under \textit{any} isentropic transformations is lower bounded by zero
\begin{equation}
 \partial_E \Ac (E,S) \dt E=\Ac(E,S)-\Ac(E-\dt E,S) \ge 0,
\end{equation}
and thus we have that $\partial_E \Ac(E,S) \ge 0$. Now let us assume that we start again with an initial passive state $\rho \in \mathcal{D}_{(E,S)}$ corresponding to the geometric athermality $\Ac (E,S)$ and instead consider an isoenergetic monotonically non-increasing deformation $\Ac(E,S)\rightarrow \Ac(E,S) +\partial_S \Ac \dt S \coloneqq A(\tilde{\rho}) $ that takes us to some new state $\tilde{\rho} \in \D_{(E,S+\dt S)}$.
We can use the fact that $\partial_S A <0$ under monotonic decreasing deformations to deduce the following
\begin{equation}
	\Ac(E,S) \ge A(\tilde{\rho}) \ge \Ac(E,S+\dt S), 
\label{appx eq: Ac> Ac(E+dS)}
\end{equation}
where the second inequality follows from the definition of geometric athermality. This implies that an infinitesimal change in geometric athermality under a constant internal energy transformation is upper bounded by zero
\begin{equation}
\partial_S \Ac(E,S) \dt S = \Ac(E,S+\dt S) -\Ac(E,S) \le 0, 
\end{equation}
and so $\partial_S \Ac(E,S) \le 0$. Since the above argument applies to any passive non-Gibbsian initial state $\rho \in \D_{(E,S)}$, we thus have that $\partial_E \Ac(E,S) \ge 0$ and $\partial_S \Ac(E,S) \le 0$, for all $(E,S)$ in the region of passive states `off the equilibrium curve'. In accordance with~\eqref{Eq:directional_deriv}, we therefore find for $d\ge 3$ that $\nabla_{\bm{u}} \Ac(E,S) \le 0$, whenever $\bm{u}$ has components which satisfy $u_E\le 0$ and $u_S \ge 0$, in this region. Finally, as a consequence of~\thmref{theorem:Gibbs}, if the initial state $\rho$ is completely passive then the area $A(\rho)$ assumes its global minimum value (zero). Since the thermal curve defines the set of states at the boundary of the state space, they can only be the end point of an activation trajectory. Therefore the geometric athermality is monotonically non-increasing over activation trajectories. 
\end{proof}

\end{document}